\numberwithin{equation}{section}
\newtheorem{claim}{Claim}
\newcommand*{\myproofname}{Proof}
\newcommand{\sH}{\sigma^{\textnormal{H}}}
\newcommand{\Min}{\textnormal{Min}}
\newcommand{\wt}{\textnormal{wt}}
\newcommand{\wL}{\textnormal{wt}^{\textnormal{L}}}
\newcommand{\FF}{{\mathbb F}}
\newcommand{\NN}{{\mathbb N}}
\theoremstyle{definition}
\newtheorem{theorem}{Theorem}[section]
\newtheorem{corollary}[theorem]{Corollary}
\newtheorem{proposition}[theorem]{Proposition}
\newtheorem{definition}[theorem]{Definition}
\newtheorem{example}[theorem]{Example}
\newtheorem{notation}[theorem]{Notation}
\newtheorem{remark}[theorem]{Remark}
\newtheorem{lemma}[theorem]{Lemma}
\newtheorem{thm}[theorem]{Theorem}
\newtheorem{prop}[theorem]{Proposition}
\theoremstyle{definition}
\newtheorem{defn}[theorem]{Definition}
\newtheorem{rmk}[theorem]{Remark}
\newcommand\qbin[3]{\left[\begin{matrix} #1 \\ #2 \end{matrix} \right]_{#3}}
\newcommand\eli[1]{{{\textcolor{red}{#1}}}}
\newcommand{\numberset}{\mathbb}
\newcommand{\N}{\numberset{N}}
\newcommand{\Z}{\numberset{Z}}
\newcommand{\F}{\numberset{F}}
\newcommand{\mC}{\mathcal{C}}
\newcommand{\mF}{\mathcal{F}}
\DeclareMathOperator{\lcm}{lcm}
\newcommand{\sL}{\sigma^{\textnormal{L}}}
\newcommand{\mylabel}[2]{#2\def\@currentlabel{#2}\label{#1}}
\newcommand{\st}{\, : \, }
\title{\textbf{Generalized Weights of Codes over Rings} \\ \textbf{and Invariants of Monomial Ideals}}
\author{Elisa Gorla and Alberto Ravagnani}
\date{}
\begin{document}

\maketitle

\thispagestyle{empty}

\begin{abstract}
We develop an algebraic theory of supports for $R$-linear codes of fixed length, where $R$ is a finite commutative unitary ring. A support naturally induces a notion of generalized weights and allows one to associate a monomial ideal to a code. Our main result states that, under suitable assumptions, the generalized weights of a code can be obtained from the graded Betti numbers of its associated monomial ideal. In the case of $\FF_q$-linear codes endowed with the Hamming metric, the ideal coincides with the Stanley-Reisner ideal of the matroid associated to the code via its parity-check matrix. In this special setting, we recover the known result that the generalized weights of an $\FF_q$-linear code can be obtained from the graded Betti numbers of the ideal of the matroid associated to the code. We also study subcodes and codewords of minimal support in a code, proving that a large class of $R$-linear codes is generated by its codewords of minimal support. 
\end{abstract}

\bigskip

\section*{Introduction}\label{sec:intro}

In the past seventy years, much effort has been devoted to the study of algebraic and combinatorial objects associated to linear error-correcting codes. 
Of particular interest is the matroid associated to a linear code via its parity-check matrix, whose circuits are the minimal \textit{Hamming supports} of the codewords. Many central results in classical coding theory, including the celebrated \textit{MacWilliams identities}, their generalizations, and the duality between puncturing and shortening can be elegantly obtained via this correspondence; see e.g.~\cite{barg,britz2007higher,britz2010code,jurrius2013codes} and the references therein.

The matroid associated to a linear code via its parity check matrix retains a wealth of information about the structure of the code, including its length, dimension, minimum distance, weight distribution, and generalized weights. Moreover, the \textit{weight enumerator}
is determined by the Tutte polynomial of the matroid, see~\cite{greene1976weight}.
In addition, in~\cite{J2013} it is shown that the code's \textit{generalized weights} are determined by the \textit{graded Betti numbers} of the \textit{Stanley-Reisner ideal} of the matroid. The approach of \cite{J2013} heavily relies on matroid theory and on the properties of the Hamming support.  

In this paper, we depart from the classical theory of linear codes over a finite field and consider instead $R$-linear codes $C \subseteq R^n$, where $R$ is a finite commutative unitary ring. 
We start by proposing a general definition of support as a function $\sigma:R^n \to \N^u$ that enjoys a few natural properties. This naturally extends the notion of \textit{Hamming support} traditionally studied in coding theory~\cite[page~177]{macwilliams1977theory}. We give several examples of supports and operations to construct new supports from old. We define the support of a code $C \subseteq R^n$ as the join of the supports of its elements.

We then define the generalized weights of a code via the supports of its subcodes. Moreover, we identify a class of supports under which the algebra of the module $R^n$ is compatible with the combinatorics of the poset $\N^u$ with the product order. We call these supports \textit{modular} and establish some of their structural properties. As one might expect, the Hamming support is an example of a modular support.

Most of the paper is devoted the study of codes $C \subseteq R^n$ endowed with modular supports. We characterize their minimal codewords and prove that, if $R$ is a principal ideal ring, then their generalized weights are attained by subcodes that are minimally generated by codewords of minimal support in $C$. We also provide evidence in various examples that our results do not extend to support functions that are not modular.

The centerpiece of this paper is a result connecting the theory of modular supports with invariants of monomial ideals.
We associate a monomial ideal to a code $C\subseteq R^n$ via the supports of its codewords. Under this correspondence, inclusion of supports translates into divisibility among monomials. In Theorem~\ref{mainth} we prove that, under suitable assumptions, the generalized weights of an $R$-linear code endowed with a modular support are determined by the graded Betti numbers of the associated monomial ideal. This generalizes a result of~\cite{J2013}, with a stand-alone proof that does not rely on matroid theory. 

We conclude the paper with a series of observations on the Hamming support.
We review known results in the light of our contribution, such as the fact that every $\F_q$-linear code is generated by its codewords of minimal support. We also show that the same is true for those of maximal support, provided that $q$ is sufficiently large (and false in general for binary codes).

\medskip

\paragraph*{Outline.} In Section~\ref{sec:1} we define $R$-linear codes and review some algebra results about finite commutative rings and finite local rings.
In Section~\ref{sec:2}
we define (modular) supports and generalized weights, establishing their main properties. Codewords and subcodes of minimal supports are studied in Section~\ref{sec:3}. In Section~\ref{sec:4} we associate a monomial ideal to a code. Moreover, we prove that the generalized weights of the code are determined by the graded Betti numbers of the corresponding ideal.
We study the Hamming support in Section~\ref{sec:5}.

\medskip

\paragraph*{Acknowledgements.} We are grateful to Maria Evelina Rossi for suggesting Example \ref{marilina}.

\section{Codes and minimal systems of generators}\label{sec:1}

In this section we introduce codes over finite commutative rings and describe some properties of their systems of generators. 

\begin{notation} \label{mainnot}
Throughout the paper $n$ and $m$ denote positive integers and  $(R,+,\cdot)$ is a finite commutative ring. All rings in this paper are unitary with $1 \neq 0$. We denote by $\NN=\{0,1,2,\ldots\}$ the set of natural numbers. For $a \in \N$ we let~$[a]:=\{1,\ldots,a\}$.
\end{notation}

A classical theorem in commutative algebra states that every finite commutative ring $R$ is isomorphic to a finite product of
finite local rings. We forget the isomorphism and write
\begin{equation}\label{decR}
R=R_1 \times\ldots\times R_\ell,
\end{equation}
where $R_1,\ldots,R_\ell$ are finite local rings. 
For $i\in [\ell]$, let $\mathfrak{M}_i$ be the maximal ideal of $R_i$ and let $J=J(R)$ be the {\bf Jacobson radical} of $R$. Recall that the Jacobson radical of a commutative ring $R$ is the intersection of all maximal ideals of $R$, equivalently $$J=\{r\in R\st 1+rs \mbox{ is invertible for every } s\in R\}.$$ It is easy to check that in our situation $$J=\mathfrak{M}_1 \times\ldots\times\mathfrak{M}_\ell\subseteq R_1 \times\ldots\times R_\ell=R.$$ 
If $R$ is a finite principal ideal ring, then each $R_i$ is a finite chain ring. For $i\in[\ell]$, let $\mathfrak{M}_i=(\alpha_i)$. Then $J=(\alpha)$, where $\alpha=(\alpha_1,\ldots,\alpha_\ell)$. In particular,
if $R$ is a product of finite fields, then $\mathfrak{M}_i=0$ for $i\in[\ell]$ and $J=0$. If $R=\mathbb{F}_q$, then $\ell=1$ and $J=0$ is the only maximal ideal of $R$. 

We denote by $(R,\mathfrak{M})$ a local ring $R$ with maximal ideal $\mathfrak{M}$. If $(R,\mathfrak{M})$ is a finite local ring, then $R/\mathfrak{M}$ is a finite field.

\bigskip

In this paper we consider codes of fixed length over the alphabet $R$. All of our codes are assumed to be linear over $R$. 

\begin{definition}
An \textbf{$R$-linear code}, or simply a \textbf{code}, is an $R$-submodule $C \subseteq R^n$. The elements of $C$ are called \textbf{codewords}. A~\textbf{subcode} of $C$ is an $R$-submodule $D \subseteq C$. 
\end{definition}

Denote by $e_i$ the element of $R$ which corresponds to $(0,\ldots,0,1,0,\ldots,0)\in R_1\times\ldots\times R_\ell$, where the one appears in the $i$th component. From the decomposition in~\eqref{decR} one has
$$R^n=R_1^n \times\ldots\times R_\ell^n.$$
In the sequel, for $i \in \{1,\ldots,\ell\}$ we denote by $\pi_i:R^n\rightarrow R_i^n$ the standard projection on the $i$th coordinate.

Let $C \subseteq R^n$ be a code. For any $v=(v_1,\ldots,v_\ell)\in C$, with $v_i=\pi_i(v)\in R_i^n$, one has $$(0,\ldots,0,v_i,0,\ldots,0)=e_i v\in C.$$
Hence, up to isomorphism, $C$ can be uniquely written as 
\begin{equation}\label{code_dec}
    C=C_1\times\ldots\times C_\ell \subseteq R^n,
\end{equation}
where $C_i=\pi_i(C)\subseteq R_i^n$ for all $i \in [\ell]$.
We often consider codes $C\subseteq 0:_{R^n} J$.
Recall that
$$0:_{R^n} J=\{v\in R^n\st rv=0 \mbox{ for all } r\in J\}.$$
Then $$0:_{R^n} J=\left(0:_{R_1^n} \mathfrak{M}_1\right)\times\ldots\times\left(0:_{R_{\ell}^n} \mathfrak{M}_\ell\right).$$
Since $0:_{R^n}J$ is an $R$-module annihilated by $J$, it is an $R/J$-module. Hence, if $(R,\mathfrak{M})$ is a local ring, then $0:_{R^n}\mathfrak{M}$ is a vector space over $R/\mathfrak{M}$. 

For $C\subseteq R^n$ a code, we also consider the {\bf socle} 
$$0:_C J=\{v\in C\st rv=0 \mbox{ for all } r\in J\}=C\cap (0:_{R^n} J).$$ The socle of $C$ is a the largest subcode of $C$ which is annihilated by $J$. In particular, if $(R,\mathfrak{M})$ is a local ring, then $0:_C\mathfrak{M}$ is an $R/\mathfrak{M}$-vector space. 

A {\bf minimal system of generators} of a code $C \subseteq R^n$ is a subset of $C$ whose elements generate $C$ and which is minimal with respect to inclusion. Notice that any system of generators of a code~$C$ contains a minimal system of generators of~$C$. 

\begin{definition} \label{defmuM}
We denote by $\mu(C)$ the least cardinality of a system of generators of a code~$C$, with $\mu(0)=0$ by convention.
For a code $C=C_1 \times\ldots\times C_\ell \subseteq R^n$ as in \eqref{code_dec}, let
$$M(C):=\mu(C_1)+\ldots+\mu(C_\ell).$$
\end{definition}

\begin{example}
Let $R=R_1\times \dots \times R_\ell$, with $R_i$ a finite local ring for $i\in[\ell]$. Then $$M(R^n)=\mu(R_1^n)+\ldots+\mu(R_\ell^n)=n\ell.$$
\end{example}

Clearly $\mu(C)\leq M(C)$ for every code $C \subseteq R^n$. 
If $R$ is a finite local ring, all minimal systems of generators of a code $C \subseteq R^n$ have the same cardinality $\mu(C)=M(C)$. This  is a consequence of the next lemma, which summarizes some well-known properties of systems of generators of modules over local rings; see e.g.~\cite[Theorem 2.3]{M}. 

\begin{thm}\label{matsumura}
Let $(R,\mathfrak{M})$ be a local ring and let $C=\langle v_1,\ldots,v_t\rangle$ be an $R$-module. 
The elements $v_1,\ldots,v_t$ are a minimal system of generators of~$C$ if and only if the equivalence classes
$\overline{v_1},\ldots,\overline{v_t}$ are an $R/\mathfrak{M}$-basis of the vector space $C/\mathfrak{M}C$.
In particular, every minimal system of generators of $C$ has cardinality $\mu(C)=\dim_{R/\mathfrak{M}}(C/\mathfrak{M}C)$. 
\end{thm}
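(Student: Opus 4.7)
The plan is to reduce the statement to Nakayama's lemma applied to the finitely generated $R$-module $C$. Note that $C$ is finitely generated by hypothesis, and the quotient $C/\mathfrak{M}C$ is naturally an $R/\mathfrak{M}$-vector space since $\mathfrak{M}$ annihilates it. Because $R$ is local, the Jacobson radical of $R$ equals $\mathfrak{M}$, so Nakayama is available in its strongest form.

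First, I would establish the key auxiliary fact: a finite set $w_1,\ldots,w_s\in C$ generates $C$ as an $R$-module if and only if the classes $\overline{w_1},\ldots,\overline{w_s}$ generate $C/\mathfrak{M}C$ as an $R/\mathfrak{M}$-vector space. The forward direction is immediate from the definition of the quotient. For the reverse direction, set $D=\langle w_1,\ldots,w_s\rangle\subseteq C$; the spanning hypothesis yields $C=D+\mathfrak{M}C$, hence $\mathfrak{M}\cdot(C/D)=C/D$. Since $C/D$ is a finitely generated module over the local ring $R$, Nakayama's lemma forces $C/D=0$, i.e.\ $C=D$.

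Next, I would use this equivalence to prove the main biconditional. Suppose $v_1,\ldots,v_t$ is a minimal system of generators of $C$. By the auxiliary fact, $\overline{v_1},\ldots,\overline{v_t}$ span $C/\mathfrak{M}C$. If they were $R/\mathfrak{M}$-linearly dependent, some $\overline{v_i}$ would lie in the span of the others, so the remaining classes would still span $C/\mathfrak{M}C$; applying the auxiliary fact in the other direction, the corresponding subset of the $v_j$'s would already generate $C$, contradicting minimality. Hence the classes form an $R/\mathfrak{M}$-basis. Conversely, if $\overline{v_1},\ldots,\overline{v_t}$ form a basis of $C/\mathfrak{M}C$, then the $v_i$'s generate $C$ by the auxiliary fact; and removing any $v_i$ from the list would produce classes no longer spanning $C/\mathfrak{M}C$ (by linear independence), so by the auxiliary fact the reduced set cannot generate $C$, proving minimality.

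Finally, the formula $\mu(C)=\dim_{R/\mathfrak{M}}(C/\mathfrak{M}C)$ follows at once: every finite generating set contains a minimal one, so such a set exists, and by the equivalence just shown its cardinality is the common dimension of the finite-dimensional $R/\mathfrak{M}$-vector space $C/\mathfrak{M}C$, independent of the chosen minimal system. I do not anticipate any real obstacle in this argument, which is essentially a direct application of Nakayama; the only minor care needed is bookkeeping the two directions of the auxiliary lemma when translating between minimality of generators in $C$ and linear independence of their classes in $C/\mathfrak{M}C$.
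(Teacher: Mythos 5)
Your proof is correct: the reduction to Nakayama's lemma (via the auxiliary fact that $w_1,\ldots,w_s$ generate $C$ iff their classes span $C/\mathfrak{M}C$) is exactly the standard argument, and your handling of the two directions and of the final dimension count is sound. Note that the paper does not prove this statement at all — it quotes it as a well-known fact with a reference to Matsumura's book (Theorem 2.3 there) — so your write-up simply supplies the classical proof that the cited source gives.
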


Over an arbitrary $R$, however, not all minimal system of generators of a code $C \subseteq R^n$ have the same cardinality.

\begin{example} \label{exZ6}
Let $R=\mathbb{Z}_6$ and $D=\langle (2,3)\rangle\subseteq C=\mathbb{Z}_6^2$. Then $\mu(D)=1$. Moreover, $(2,0)=4(2,3)\in D$ and $(0,3)=3(2,3)\in D$. Hence $D=\langle(2,0),(0,3)\rangle$ and $\{(2,0),(0,3)\}$ is a minimal system of generators of $D$ of cardinality~$2=M(D)$.
\end{example}

\begin{notation}
Let $C\subseteq R^n$ be a code and let
$$\mathcal{S}_{j}(C):=\{D\subseteq C \mbox{ subcode}\st \mbox{$D$ has a minimal system of generators of cardinality $j$}\}.$$ In particular, $\mathcal{S}_{j}(R^n)$ is the set of codes $C\subseteq R^n$ which have a minimal system of generators of cardinality $j$.
\end{notation}

One can show that~$M(C)$ is the largest cardinality of a minimal system of generators of $C\subseteq R^n$. 

\begin{thm}\label{lem:minspecial}
If $C\in\mathcal{S}_i(R^n)$, then there exist $v_1,\ldots,v_i$ minimal generators of $C$ with the property that $$V=\{e_jv_k\st (j,k)\in [\ell]\times [i], \; e_jv_k\neq 0\}$$ is a minimal system of generators of $C$ with $|V|\geq i$. 
Moreover $$M(C)=\max\{i\geq 0\st C\in\mathcal{S}_i(R^n)\}$$ and any minimal system of generators of $C$ of cardinality $M(C)$ has the same form as $V$.
\end{thm}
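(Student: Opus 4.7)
The plan is to exploit the canonical decomposition $R=R_1\times\ldots\times R_\ell$, which gives $C=e_1C\oplus\ldots\oplus e_\ell C$ with each $e_jC$ a module over the finite local ring $R_j$ isomorphic to $C_j$. This reduces the structural statement about generating sets of $C$ to the well-behaved local situation of Theorem~\ref{matsumura}, together with a counting argument against minimality.

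I would start with an arbitrary minimal system of generators $u_1,\ldots,u_i$ of $C$. For each $j\in[\ell]$, the family $\{e_ju_k\st k\in[i]\}$ generates $e_jC$ as an $R_j$-module, so by Theorem~\ref{matsumura} it contains a minimal generating set of $e_jC$ of cardinality $\mu(C_j)$, indexed by some $I_j\subseteq[i]$. The key observation is then that $\bigcup_{j=1}^\ell I_j=[i]$: if some $k_0$ were missing, then for every $j$ there would be coefficients $r_{jk'}\in R$ with $e_ju_{k_0}=\sum_{k'\in I_j}r_{jk'}\,e_ju_{k'}$, and summing over $j$ (using $u_{k_0}=\sum_je_ju_{k_0}$ and the orthogonality $e_je_{j'}=0$ for $j\neq j'$) would express $u_{k_0}=\sum_{k'\neq k_0}\bigl(\sum_{j\,:\,k'\in I_j}r_{jk'}e_j\bigr)u_{k'}$, contradicting minimality. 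This immediately yields $i\leq\sum_j|I_j|=M(C)$ and hence the equality $M(C)=\max\{i\geq 0\st C\in\mathcal{S}_i(R^n)\}$.

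To produce the claimed $v_1,\ldots,v_i$, I would set $v_k:=\sum_{j\,:\,k\in I_j}e_ju_k$. Each $v_k$ is nonzero since $\bigcup_jI_j=[i]$, and by construction $e_jv_k=e_ju_k$ when $k\in I_j$ and $e_jv_k=0$ otherwise. Hence $V=\{e_jv_k\st (j,k)\in[\ell]\times[i],\ e_jv_k\neq 0\}$ equals the disjoint union $\bigsqcup_j\{e_ju_k\st k\in I_j\}$ of minimal generating sets of the $e_jC$'s, which is itself a minimal generating set of $C$ of cardinality $\sum_j\mu(C_j)=M(C)\geq i$. Minimality of $\{v_1,\ldots,v_i\}$ then follows because removing any $v_k$ would destroy the minimality of $\{e_ju_{k'}\st k'\in I_j\}$ inside $e_jC$ for each $j$ with $k\in I_j$.

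For the final sentence, when $i=M(C)$ the chain $i\leq|\bigcup_jI_j|\leq\sum_j|I_j|=M(C)$ collapses, forcing the $I_j$'s to be pairwise disjoint and to partition $[M(C)]$, so each index $k$ belongs to a unique $j(k)$. Applying the preceding construction to a given minimal system of generators $w_1,\ldots,w_{M(C)}$ of maximal cardinality then yields new minimal generators $v_k=e_{j(k)}w_k$ whose associated set $V$ is a minimal generating set of the prescribed form. The main obstacle is the central contradiction argument in the second paragraph: one must juggle relations in the product ring $R$ and in its local factors $R_j$ coherently, and the fact that $e_j$ acts as an idempotent with $e_je_{j'}=0$ for $j\neq j'$ is exactly what ensures that projected relations can be reassembled into a global relation expressing $u_{k_0}$ in terms of the remaining $u_{k'}$.
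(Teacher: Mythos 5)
Your proposal is correct, but its mechanics differ from the paper's. The paper proceeds by an iterative reduction of the given generators: for each $j\in[\ell]$ it replaces those $w_h$ whose $j$-th projections are redundant by $w_h-\sum_t r_{h,t}e_jw_t$, so that after looping over all $j$ the nonzero projections $e_jv_k$ form minimal systems of generators of the components $C_j'$, and the cardinality statements then come out of $|V|\ge i$ together with the maximality of $M$. You instead leave the original generators untouched: you select index sets $I_j$ whose $j$-th projections minimally generate $e_jC$ (Theorem~\ref{matsumura} plus Nakayama), prove $\bigcup_j I_j=[i]$ by reassembling the componentwise relations with the orthogonal idempotents against minimality of $\{u_1,\ldots,u_i\}$, and then truncate, setting $v_k=\sum_{j\,:\,k\in I_j}e_ju_k$. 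Your route yields the bound $i\le\sum_j|I_j|=M(C)$ by direct counting and produces $V$ with $|V|=M(C)$ exactly, which is cleaner than extracting the bound from the reduction; the paper's route, on the other hand, keeps track of how the original system transforms. Two presentational points: the equality $M(C)=\max\{i\ge 0\st C\in\mathcal{S}_i(R^n)\}$ also requires exhibiting a minimal system of generators of cardinality $M(C)$, which in your write-up only appears with the construction of $V$ in the next paragraph, so the equality should be asserted there; and you should say explicitly that $v_1,\ldots,v_i$ generate $C$, which is immediate since $e_jv_k=e_ju_k$ for $k\in I_j$ recovers a generating set of each component. Concerning the final sentence of the statement, your proof, like the paper's, shows that from any minimal system of generators of cardinality $M(C)$ one obtains, by keeping for each $w_k$ only its unique essential component $e_{j(k)}w_k$, a minimal system of the prescribed $V$-form; this matches what the paper's own proof actually establishes, and the stronger literal reading (that such a system already consists of elements supported in a single component) is not true in general, as the minimal generating set $\{(1,2),(0,1)\}$ of $R=\F_2\times\Z_4$ shows, so no gap should be charged to you here.
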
 

\begin{proof}
Let $w_1,\ldots,w_i$ be a minimal system of generators of $C=C_1\times\ldots\times C_\ell$. Let $C_j'=0\times\ldots\times 0\times C_j\times 0\times\ldots\times 0\subseteq C$.
Observe that $e_jw_k\in C$ for all $k$ and $j$ and $w_k=e_1w_k+\ldots+e_\ell w_k$ for all $k\in [i]$. This proves that the set $\{e_jw_k\st (j,k)\in[\ell]\times[i],\, e_jw_k\neq 0\}$ generates $C$. Moreover, the set $\{e_jw_k\st k\in[i],\, e_jw_k\neq 0\}$ generates $C_j'$ for any $j\in[\ell]$.

Fix $j\in [\ell]$. If $e_jw_1,\ldots,e_jw_i$ do not form a minimal system of generators of $C_j'$, suppose up to reindexing that $e_jw_1,\ldots,e_jw_k$ do, for some $k<i$. For $h\in[i]\setminus[k]$, write $e_jw_h=r_{h,1}e_jw_1+\ldots+r_{h,k}e_jw_k$ for some $r_{h,1},\ldots,r_{h,k}\in R$. Let $v_h=w_h$ for $h\in[k]$, $v_h=w_h-r_{h,1}e_jw_1-\ldots-r_{h,k}e_jw_k$ for $h\in[i]\setminus[k]$. Then $v_1,\ldots,v_i$ are a minimal system of generators of $C$ with the property that $e_jv_1,\ldots,e_jv_k$ are a minimal system of generators of $C_j'$ and $e_jv_{k+1}=\ldots=e_jv_i=0$. 
Notice that only the $j$th coordinate of $v_1,\ldots,v_i$ was affected by this operation, hence $e_hw_k=e_hv_k$ for all $k\in[i]$ if $h\neq j$. Performing this operation for all $j\in [\ell]$ produces a minimal system of generators $v_1,\ldots,v_i$ of $C$ with the property that the set $V=\{e_jv_k\st (j,k)\in [\ell]\times [i], \, e_jv_k\neq 0\}$ is a minimal system of generators of $C$. Moreover, for $j\in[\ell]$, $\{e_jv_k\st k\in[i], \, e_jv_k\neq 0\}$ is a minimal system of generators of $C_j'$. Since $v_1,\ldots,v_i\neq 0$, for each $k\in[i]$ there must be at least a $j\in[\ell]$ such that $e_jv_k\neq 0$. This proves that $|V|\geq i$.

To prove the last part of the statement, let $M=\max\{i\geq 0\st C\in\mathcal{S}_i(R^n)\}.$ Since $C$ has a minimal system of generators $v_1,\ldots,v_M$, by the first part of the statement $V=\{e_jv_k\st (j,k)\in [\ell]\times [M], \, e_jv_k\neq 0\}$ is a minimal system of generators of $C$ of $|V|=M$. Therefore, for each $k\in[M]$ there is exactly one $j\in[\ell]$ with $e_jv_k\neq 0$. Moreover, $\{v_k\st k\in[M],\, e_jv_k\neq 0\}$ is a minimal system of generators of $C_j'$ for $j\in[\ell]$, hence it has cardinality $\mu(C_j)$ by Theorem~\ref{matsumura}. It follows that $M=\mu(C_1)+\ldots+\mu(C_\ell)=M(C)$. 
\end{proof}

We conclude the section with a few elementary properties of $M(C)$.

\begin{proposition}\label{lem:equal}
Let $R$ be a finite commutative ring and let $D\subseteq C\subseteq 0:_{R^n}J$ be codes. Then $$M(D)\leq M(C)$$ and equality holds if and only if $D=C$.
\end{proposition}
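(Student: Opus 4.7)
The plan is to reduce the statement to the case of vector spaces by exploiting the product decomposition of $R$ and the fact that $0:_{R^n} J$ is annihilated by $J = \mathfrak{M}_1 \times \cdots \times \mathfrak{M}_\ell$.

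First, I would decompose $C = C_1 \times \cdots \times C_\ell$ and $D = D_1 \times \cdots \times D_\ell$ as in \eqref{code_dec}, with $D_i \subseteq C_i$ for all $i \in [\ell]$. Since $C, D \subseteq 0:_{R^n} J$, the identification
\[
0:_{R^n} J = (0:_{R_1^n} \mathfrak{M}_1) \times \cdots \times (0:_{R_\ell^n} \mathfrak{M}_\ell)
\]
recorded in the excerpt shows that each $C_i$ and each $D_i$ is annihilated by $\mathfrak{M}_i$, so both become vector spaces over the residue field $R_i/\mathfrak{M}_i$.

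Next I would apply Theorem~\ref{matsumura} inside the local ring $R_i$. Since $\mathfrak{M}_i C_i = 0$ and $\mathfrak{M}_i D_i = 0$, the quotients $C_i/\mathfrak{M}_i C_i$ and $D_i/\mathfrak{M}_i D_i$ coincide with $C_i$ and $D_i$, so
\[
\mu(C_i) = \dim_{R_i/\mathfrak{M}_i}(C_i), \qquad \mu(D_i) = \dim_{R_i/\mathfrak{M}_i}(D_i).
\]
By elementary linear algebra the inclusion $D_i \subseteq C_i$ gives $\dim_{R_i/\mathfrak{M}_i}(D_i) \leq \dim_{R_i/\mathfrak{M}_i}(C_i)$, with equality if and only if $D_i = C_i$. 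Summing over $i \in [\ell]$ yields $M(D) = \sum_i \mu(D_i) \leq \sum_i \mu(C_i) = M(C)$, with equality precisely when $D_i = C_i$ for all $i$, i.e.\ when $D = C$.

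There is no substantial obstacle here: the statement fails for general codes (compare Example~\ref{exZ6}, where even $\mu$ can increase under inclusion), and the content of the hypothesis $C \subseteq 0:_{R^n} J$ is exactly what is needed to kill $\mathfrak{M}_i$ and reduce to vector spaces componentwise. The only point requiring a little care is verifying that $\mu$ on a module annihilated by the maximal ideal equals its dimension over the residue field, which is immediate from Theorem~\ref{matsumura}.
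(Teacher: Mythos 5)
Your argument is correct and follows essentially the same route as the paper: decompose into the local components $C_i$ and $D_i$, note that the hypothesis $C\subseteq 0:_{R^n}J$ makes these $R_i/\mathfrak{M}_i$-vector spaces, apply Theorem~\ref{matsumura} to identify $\mu(C_i)$ and $\mu(D_i)$ with dimensions, and conclude by comparing dimensions of nested subspaces. No gaps to report.
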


\begin{proof}
Write $C=C_1\times\ldots\times C_\ell$ and $D=D_1\times\ldots\times D_\ell$. Since $D\subseteq C$, we have $D_i\subseteq C_i\subseteq 0:_{R_i^n}\mathfrak{M}_i$ for all $i\in [\ell]$. So $C_i$ and $D_i$ are $R_i/\mathfrak{M}_i$-vector spaces and  $\mu(D_i)\leq \mu(C_i)$ for all $i\in [\ell]$ by Theorem~\ref{matsumura}. It follows that $$M(D)=\mu(D_1)+\ldots+\mu(D_\ell)\leq \mu(C_1)+\ldots+\mu(C_\ell)=M(C).$$ Moreover, $M(D)=M(C)$ if and only if $\mu(D_i)=\mu(C_i)$ for all $i\in [\ell]$. In this case, $C_i$ and $D_i$ are $R_i/\mathfrak{M}_i$-vector spaces
of the same dimension by Theorem~\ref{matsumura}. Hence they are equal, therefore $D=C$.
\end{proof}

Notice that one may have $D\subsetneq C\subseteq R^n$ with $M(D)=M(C)$. Some examples of this arise for instance from the fact that, over a principal ideal ring (\textbf{PIR} in the sequel), the value of $M(C)$ does not change when replacing $C$ with its socle. We will use this fact repeatedly throughout the paper.

\begin{proposition}\label{ann}
Let $R$ be a finite PIR and let $C\subseteq R^n$ be a code. Then $$M(C)=M(0:_C J).$$
\end{proposition}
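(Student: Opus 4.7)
The plan is to reduce the claim to the local case and then exploit the structure theorem for finitely generated modules over a finite chain ring.

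First, I would use the product decomposition $R = R_1 \times \cdots \times R_\ell$, which gives $C = C_1 \times \cdots \times C_\ell$ and, as already observed in the excerpt, $0:_C J = (0:_{C_1}\mathfrak{M}_1) \times \cdots \times (0:_{C_\ell}\mathfrak{M}_\ell)$. By the definition of $M(\cdot)$, both $M(C)$ and $M(0:_C J)$ split as sums indexed by $i \in [\ell]$, so it suffices to prove the equality $\mu(C_i) = \mu(0:_{C_i}\mathfrak{M}_i)$ for each $i$. Since each $R_i$ is a finite local PIR, i.e.\ a finite chain ring, the problem reduces to the following local statement: if $(R,\mathfrak{M})$ is a finite chain ring with $\mathfrak{M} = (\alpha)$ and $C \subseteq R^n$ is a submodule, then $\mu(C) = \dim_{R/\mathfrak{M}}(0:_C \mathfrak{M})$.

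To prove this local statement, I would invoke the structure theorem for finitely generated modules over a finite chain ring: $C$ decomposes as a direct sum of cyclic modules
$$C \;\cong\; R/(\alpha^{a_1}) \oplus \cdots \oplus R/(\alpha^{a_s}),$$
with $1 \le a_j \le e$, where $e$ is the nilpotency index of $\alpha$. On the one hand, Theorem~\ref{matsumura} gives $\mu(C) = \dim_{R/\mathfrak{M}}(C/\mathfrak{M}C) = s$, since each cyclic summand $R/(\alpha^{a_j})$ contributes a one-dimensional quotient modulo $\mathfrak{M}$. On the other hand, in each summand $R/(\alpha^{a_j})$ the annihilator of $\mathfrak{M}$ is generated by the class of $\alpha^{a_j-1}$ and is therefore a one-dimensional $R/\mathfrak{M}$-vector space. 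Taking the direct sum, $0:_C \mathfrak{M}$ is an $R/\mathfrak{M}$-vector space of dimension $s$. This yields $\mu(C) = \dim_{R/\mathfrak{M}}(0:_C\mathfrak{M})$.

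Putting the two steps together, for each $i \in [\ell]$ one has $\mu(C_i) = \mu(0:_{C_i}\mathfrak{M}_i)$, and summing over $i$ gives $M(C) = M(0:_C J)$. The main conceptual point (and the only non-formal ingredient) is the decomposition of $C_i$ into cyclic $R_i$-modules; the rest is a bookkeeping identification of $\mu$ with the number of cyclic summands using Theorem~\ref{matsumura} and the observation that the socle of each cyclic summand is one-dimensional. I do not expect a real obstacle beyond citing (or briefly justifying) the cyclic decomposition for modules over a finite chain ring, which is standard. An alternative route, if one wishes to avoid that structural input, is to argue directly that multiplication by $\alpha^{e-1}$ induces an $R/\mathfrak{M}$-linear isomorphism between $C/\mathfrak{M}C$ and a subspace of $0:_C \mathfrak{M}$ of the right dimension, using a minimal generating set lifted from $C/\mathfrak{M}C$; but the cyclic decomposition approach is cleaner.
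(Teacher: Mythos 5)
Your proof is correct, and its skeleton (decompose $R$ into chain rings, split $M(\cdot)$ and the socle factorwise, reduce to $\mu(C)=\dim_{R/\mathfrak{M}}(0:_C\mathfrak{M})$ over a finite chain ring) matches the paper exactly; the difference lies in the local step. The paper avoids any structure theory: it notes $J=(\alpha)$ and uses the exact sequence $0\to 0:_C\alpha\to C\xrightarrow{\;\cdot\alpha\;}\alpha C\to 0$ to get $\dim_{R/\mathfrak{M}}(C/\alpha C)=\dim_{R/\mathfrak{M}}(0:_C\alpha)$ by a counting argument, and then applies Theorem~\ref{matsumura} on both sides. You instead invoke the classification of finitely generated modules over a finite chain ring as $\bigoplus_j R/(\alpha^{a_j})$ and observe that both $\mu(C)$ and the socle dimension equal the number $s$ of cyclic summands. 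Both arguments are sound; the paper's is more self-contained (only finiteness and Nakayama-type facts), while yours is conceptually transparent about why the two numbers coincide, at the cost of citing the cyclic decomposition. One caveat: your parenthetical ``alternative route'' via multiplication by $\alpha^{e-1}$ does not work as stated, since $\alpha^{e-1}$ annihilates every cyclic summand $R/(\alpha^{a_j})$ with $a_j<e$ (already for $C=2\Z_4\subseteq\Z_4$ the induced map $C/\mathfrak{M}C\to 0:_C\mathfrak{M}$ is zero), so it cannot be injective in general; this does not affect your main argument, but the correct elementary fix is precisely the exact-sequence count the paper uses.
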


\begin{proof}
We may assume without loss of generality that $R$ is a finite chain ring. Indeed, if the result is true for finite chain rings, then write
$R=R_1\times\ldots\times R_\ell$ as a product of finite chain rings and $C=C_1\times\ldots\times C_\ell$, where $C_i\subseteq R_i^n$ is a code for $i\in [\ell]$. We have $$M(C)=\mu(C_1)+\ldots+\mu(C_\ell)=\mu(0:_{C_1}\mathfrak{M}_1)+\ldots+\mu(0:_{C_\ell}\mathfrak{M}_\ell)=M(0:_C J),$$ where the last equality follows from $$0:_C J=(0:_{C_1}\mathfrak{M}_1)\times\ldots\times(0:_{C_\ell}\mathfrak{M}_\ell).$$
In order to prove that $\mu(C)=\mu(0:_C J)$ for $C\subseteq R^n$ and $R$ a finite chain ring, observe that $J=(\alpha)$ is principal and 
\begin{equation}\label{eq:dims}
\mu(C)=\dim_{R/(\alpha)}(C/\alpha C)=
\dim_{R/(\alpha)}(0:_C \alpha)=\mu(0:_C \alpha),
\end{equation}
where the first and last equalities follow from Theorem~\ref{matsumura}.
The short exact sequence 
$$0\to 0:_C\alpha\to C\to\alpha C\to 0$$
induces an isomorphism $C/\alpha C\cong 0:_C\alpha$, which proves the central equality in (\ref{eq:dims}).
\end{proof}

The statement of Proposition \ref{ann} also holds when $C=R^n$ and $R$ is a product of finite Gorenstein local rings. 

\begin{example}\label{ex:gorenstein}
Write $R=R_1\times\ldots\times R_\ell$ and suppose that each $R_i$ is a finite Gorenstein local ring. Suppose first that $\ell=1$, i.e. $R$ is a Gorenstein local ring with maximal ideal $\mathfrak{M}$. We have the following isomorphisms of $R/\mathfrak{M}$-vector spaces $$R^n/\mathfrak{M}R^n\cong(R/\mathfrak{M}R)^n\cong (0:_R \mathfrak{M})^n= 0:_{R^n}\mathfrak{M},$$ 
where the central isomorphism follows from the definition of a Gorenstein local ring. Then $\mu(R^n)=\mu(0:_{R^n}\mathfrak{M})=n$ by Theorem \ref{matsumura}.

For general $\ell$, one has $$M(R^n)=\mu(R_1^n)+\ldots+\mu(R_\ell^n).$$ Moreover, $0:_{R^n} J=\left(0:_{R_1^n} \mathfrak{M}_1\right)\times\ldots\times\left(0:_{R_{\ell}^n} \mathfrak{M}_\ell\right)$, hence $$M(0:_{R^n} J)=\mu\left(0:_{R_1^n} \mathfrak{M}_1\right)+\ldots+\mu\left(0:_{R_{\ell}^n} \mathfrak{M}_\ell\right).$$
It follows from the previous case ($\ell=1$) that $$\mu(R_i^n)=\mu(0:_{R_i^n}\mathfrak{M})=n$$ for all $i\in[\ell].$
\end{example}

The argument of Example \ref{ex:gorenstein} also shows that, if the equality in Proposition \ref{ann} holds for  $C=R^n$, for $R$ a finite local ring, then $R$ must be Gorenstein. However, Proposition \ref{ann} is not true in general over finite Gorenstein local rings. The next example illustrating this was suggested to us by Maria Evelina Rossi.

\begin{example}\label{marilina}
Let $R=\F_2[x,y]/(x^2,y^2)$. Then $R$ is a finite local ring with maximal ideal $\mathfrak{M}=(x,y)$. Let $C=\mathfrak{M}$. Then $\mu(C)=2$, but $\mu(0:_C \mathfrak{M})=\mu(\langle xy\rangle)=1$.
\end{example}

\section{Supports and generalized weights}\label{sec:2}

In this section we develop an algebraic theory of supports over a finite commutative ring~$R$. We propose a general definition of support as a map $R^n \to \N^u$, which 
naturally induces a notion of generalized weights for codes $C \subseteq R^n$.
This extends the notion of generalized Hamming weights for codes that are linear over a finite field $\FF_q$. We establish some properties of support functions and generalized weights.
We also define a family of supports, the \textit{modular supports}, whose associated generalized weights will be studied in the next sections.

\begin{notation}
In the sequel, $u \ge 1$ is an integer.
For $s,t \in \NN^u$ write $s \leq t$ if $s_i \le t_i$ for all $i \in [u]$. Then $(\NN^u, \leq)$ is a (poset) lattice. The {\bf meet} of $s,t \in \NN^u$
is the element $s \wedge t \in \NN^u$ given by $(s \wedge t)_i=\min\{s_i,t_i\}$ for all $i \in [u]$. The {\bf join}
of $s,t \in \NN^u$, denoted by $s \vee t$, is $(s \vee t)_i=\max\{s_i,t_i\}$ for all $i \in [u]$. For $s \in \N^u$, we let $|s|:=s_1+ \cdots +s_u$. \end{notation}

\begin{definition} \label{defsupp}
A \textbf{support} on $R^n$ is a function $\sigma:R^n \to \NN^u$ with the following properties.
\begin{enumerate}[label={(\arabic*)}]
\item[\mylabel{P1}{(P1)}] $\sigma(v)=0$ if and only if $v=0$.
\item[\mylabel{P2}{(P2)}] $\sigma(rv) \le \sigma(v)$ for all $r \in R$ and $v \in R^n$.
\item[\mylabel{P3}{(P3)}] $\sigma(v+w) \le \sigma(v) \vee \sigma(w)$ for all $v,w \in R^n$.
\end{enumerate}
\end{definition}

A support function $\sigma:R^n \to \N^u$ satisfies the following additional properties.

\begin{lemma} \label{prsupp}
Let $\sigma:R^n \to \N^u$ be a support. The following hold.
\begin{enumerate}[label={(\arabic*)}]
\item[\mylabel{prsupp1}{(P4)}] If $v \in R^n$ and $r \in R$ is a unit, then $\sigma(rv)=\sigma(v)$.
\item[\mylabel{prsupp2}{(P5)}] If $v,w \in R^n$ and $i \in [u]$ satisfy $\sigma(v)_i=0$ and $\sigma(w)_i \neq 0$, then $\sigma(w+v)_i \neq 0$.
\item[\mylabel{prsupp3}{(P6)}] If $v,w \in R^n$ and $i \in [u]$ satisfy $\sigma(v)_i=\sigma(w)_i=0$, then $\sigma(w+v)_i = 0$.
\end{enumerate}
\end{lemma}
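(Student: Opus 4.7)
The plan is to derive each of the three properties directly from the axioms \ref{P1}--\ref{P3}, with (P6) being essentially immediate from \ref{P3}, (P4) following from a two-sided use of \ref{P2}, and (P5) requiring a short indirect argument that reduces it to (P4) and (P6).

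For \ref{prsupp1}, I would observe that \ref{P2} applied to $v$ with scalar $r$ gives $\sigma(rv)\le\sigma(v)$, while applying \ref{P2} to $rv$ with scalar $r^{-1}$ gives $\sigma(v)=\sigma(r^{-1}(rv))\le\sigma(rv)$. The two inequalities combine to equality in $\N^u$. The only thing to record carefully is that the inequalities hold coordinatewise, so equality in the poset really does follow.

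For \ref{prsupp3}, the plan is simply to invoke \ref{P3}: by definition of the join, $\sigma(v+w)_i\le \max\{\sigma(v)_i,\sigma(w)_i\}=0$, forcing $\sigma(v+w)_i=0$.

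For \ref{prsupp2}, I will argue by contradiction. Assume $\sigma(w+v)_i=0$. The key move is to write $w=(w+v)+(-v)$. Then \ref{P3} yields $\sigma(w)_i\le\max\{\sigma(w+v)_i,\sigma(-v)_i\}$. Now \ref{P2} applied with the scalar $-1\in R$ (which is a unit, with inverse itself, so in fact \ref{prsupp1} already proved in this lemma gives $\sigma(-v)=\sigma(v)$) forces $\sigma(-v)_i\le\sigma(v)_i=0$. Combined with the hypothesis $\sigma(w+v)_i=0$, this gives $\sigma(w)_i=0$, contradicting $\sigma(w)_i\ne 0$. The main (very minor) subtlety is just making sure that the order in which one proves (P4), (P5), (P6) is consistent, namely to establish (P4) first so that it may be reused in the proof of (P5); alternatively, one can bypass (P4) by noting that $\sigma(-v)\le\sigma(v)$ is a direct consequence of \ref{P2} alone. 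I expect no serious obstacle: the lemma is essentially a formal unpacking of the axioms, with the only mild nuance being the handling of additive inverses via \ref{P2}.
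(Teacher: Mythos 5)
Your proposal is correct and follows essentially the same route as the paper: (P4) from a two-sided application of \ref{P2}, (P6) directly from \ref{P3}, and (P5) by contradiction using the decomposition $w=(w+v)+(-v)$ together with \ref{P3} and $\sigma(-v)=\sigma(v)$. The only difference is that you spell out the details (including the ordering issue around reusing (P4) in (P5)) that the paper leaves implicit.
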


\begin{proof}
The first claim easily follows from Property~\ref{P2}.
To see the second, suppose towards a contradiction that $\sigma(w+v)_i=0$. Since $w=(w+v)+(-v)$, by~\ref{P3} and the first claim we have $\sigma(w) \le \sigma(w+v) \vee \sigma(-v) = \sigma(w+v) \vee \sigma(v)$, from which $\sigma(w)_i=0$, a contradiction.
Finally, the third claim follows from Property~\ref{P3}.
\end{proof}

A support $\sigma:R^n \to \N^u$ naturally induces a weight function $\wt:R^n \to \N$. 

\begin{definition}
The \textbf{weight} of $v \in R^n$ is $\wt(v):=|\sigma(v)|$.
The {\bf minimum weight} and {\bf maximum weight} of a code $0\neq C\subseteq R^n$ are, respectively,
$$\min\wt(C):=\min\{\wt(v)\st v\in C\setminus\{0\}\} \qquad \mbox{and} \qquad 
\max\wt(C):=\max\{\wt(v)\st v\in C\}.$$
\end{definition}

Notice that the function $\wt:R^n\to\NN$ has indeed the properties of a weight, since $\wt(v)\geq 0$ for all $v$, $\wt(v)=0$ if and only if $v=0$, and $\wt(u+v)\leq \wt(u)+\wt(v)$  for all $u,v\in R^n$. In addition, the weight satisfies $\wt(rv)\leq \wt(v)$ for all $r\in R$ and $v\in R^n$ and $\wt(rv)=\wt(v)$ for $r\in R$ invertible and $v\in R^n$. 

We give some examples of support functions. Many others can be obtained by applying Proposition~\ref{prop:ext} to these examples. 

\begin{example} \label{exasupp}
\begin{enumerate}[label={(\arabic*)}]
\item \label{exasupp3} The function $\sigma:\F_2^2 \to \{0,1,2\}^3$ defined by $\sigma(0,0)=(0,0,0)$, $\sigma(1,0)=(2, 0, 2)$, $\sigma(0,1)=(2, 1, 0)$, and
$\sigma(1,1)=(0,1,2)$ is a support.
\item \label{exasupp2}  Let $R$ a be finite ring and let $0=I_0 \subsetneq I_1 \subsetneq
\cdots \subsetneq I_{\epsilon-1} \subsetneq I_{\epsilon}=R$
be a chain of ideals of $R$. For $r \in R$, let $\sigma(r):=\min\{0 \le i \le \epsilon \st r \in I_i\}$. Extend $\sigma$ coordinatewise to $\sigma:R^n \to \{0,\ldots,\epsilon\}^n$. It can be checked that $\sigma$ is a support, called the \textbf{chain support} on $R$, see~\cite[Example 26]{ravagnani2018duality}.

\item \label{exasupp2a} Let $R$ be a finite chain ring. The chain support associated to the full chain of ideals of $R$ is the \textbf{chain ring support} on $R$.
\item \label{exasupp1} If $R=\F_q$ is a finite field, the chain ring support coincides with the \textbf{Hamming support} $\sH:\F_q^n \to \{0,1\}^n$, given by $\sH(v)_i=1$ if $v_i \neq 0$ and $\sH(v)_i=0$ if $v_i=0$. See~\cite{macwilliams1977theory} for a general reference on Hamming-metric codes.

\item In his master thesis \cite{gassner20}, written under the direction of J. Rosenthal and V. Weger, N.~Gassner introduces the $p$-adic weight and distance on $\mathbb{Z}_{p^e}^n$, where $p$ is prime and $e\geq 1$. The $p$-adic weight on $\mathbb{Z}_{p^e}$ induces the same partition as the weight associated to the chain ring support of $\mathbb{Z}_{p^e}$.
\end{enumerate}
\end{example}

Not all the supports studied in the coding theory literature are supports according to Definition~\ref{defsupp}.

\begin{example}\label{rem:lee}
The \textbf{Lee weight} $\wL:\Z_4 \to \{0,1,2\}$ is defined by $\wL(0)=0$, $\wL(1)=\wL(3)=1$ and $\wL(2)=2$. Its coordinatewise extension to $\Z_4^n$ is not a support in the sense of Definition~\ref{defsupp}. For instance, $\wL(1+1)=2 \not\le \max\{\wL(1), \wL(1)\}=1$, contradicting Property~\ref{P3}.
\end{example}

In the next proposition we list some simple operations that allow one to construct new supports from known ones. The proof is straightforward and left to the reader.

\begin{proposition}\label{prop:ext}
\begin{enumerate}[label={(\arabic*)}]
\item Let $\sigma:R^n \to \N^u$ be a function and let $s:\N^u\to \N^u$ be a permutation of the coordinates. Then $\sigma$ is a support if and only if $s\circ\sigma$ is a support.
\item \label{prop:ext1} Let $\sigma_i:R^{n_i} \to \N^{u_i}$ be functions, $i\in[\ell]$. Let $n=n_1+\ldots+n_\ell$ and $u=u_1+\ldots+u_\ell$. Let
$$\sigma=\sigma_1 \times \cdots \times \sigma_\ell:R^n \to \N^u, \qquad (v_1,\ldots,v_\ell) \mapsto (\sigma_1(v_1),\ldots,\sigma_\ell(v_\ell)).$$ Then $\sigma$ is a support if and only if $\sigma_1,\ldots,\sigma_\ell$ are supports.
\item \label{prop:ext2} Let $\sigma:R^n \to \N^u$, $\sigma(v)=(\sigma_1(v),\ldots,\sigma_u(v))$, be a function. Let $i\in [u]$, $a\in\mathbb{N}\setminus\{0\}$, and define $$\sigma_{i,a}:R^n \to \N^u,\qquad v\mapsto (\sigma_1(v),\ldots,\sigma_{i-1}(v),a\sigma_i(v),\sigma_{i+1}(v),\ldots,\sigma_u(v)).$$ Then $\sigma$ is a support if and only if $\sigma_{i,a}$ is.
\item \label{prop:ext3} Let $\sigma:R^n \to \N^u$, $\sigma(v)=(\sigma_1(v),\ldots,\sigma_u(v))$. For $i\in[u]$, let $$\tilde{\sigma}_i:R^n \to \N^{u+1},\qquad v\mapsto (\sigma_1(v),\ldots,\sigma_{i-1}(v),\sigma_i(v),\sigma_i(v),\sigma_{i+1}(v),\ldots,\sigma_u(v)).$$ Then $\sigma$ is a support if and only if $\tilde{\sigma}_{i}$ is.
\item Let $\sigma:R^n \to \N^u$, $\sigma(v)=(\sigma_1(v),\ldots,\sigma_u(v))$, be a support and let $i\in[u]$. Assume that there are no $a\in\N\setminus\{0\}$ and $v\in R^n$ such that $\sigma(v)=(0,\ldots,0,a,0,\ldots,0)$, where $a$ appears in the $i$th entry. Then $$\hat{\sigma}_i:R^n \to \N^{u-1},\qquad v\mapsto (\sigma_1(v),\ldots,\sigma_{i-1}(v),\sigma_{i+1}(v),\ldots,\sigma_u(v))$$ is a support.
\item Let $\sigma:R^n \to \N^u$ be a function, $\sigma(v)=(\sigma_1(v),\ldots,\sigma_u(v))$. For $i\in[u]$, let $$\check{\sigma}_i:R^n \to \N^{u+1},\qquad v\mapsto (\sigma_1(v),\ldots,\sigma_{i-1}(v),\sigma_i(v),0,\sigma_{i+1}(v),\ldots,\sigma_u(v)).$$ Then $\sigma$ is a support if and only if $\check{\sigma}_{i}$ is.
\item \label{prop:ext4} Let $\sigma: R^n \to \N^u$ be a support and let $f: R^k \to R^n$ be an injective $R$-linear map. Then $\sigma \circ f: R^k \to \N^u$ is a support.
\item Let $\sigma_i:R^n \to \N^{u_i}$ be supports, $i\in[k]$. Then $\sigma=(\sigma_1,\ldots,\sigma_k):R^n\to\N^{u_1+\ldots+u_k}$ is a support.
\end{enumerate}
\end{proposition}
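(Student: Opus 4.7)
The plan is to verify the three defining properties \ref{P1}, \ref{P2}, \ref{P3} item by item, exploiting the fact that all three conditions can be read off coordinatewise in $\N^u$. For (1), permuting the coordinates of $\N^u$ is an order-isomorphism that commutes with $\vee$, so each property transfers verbatim; the same observation, applied componentwise, handles the product construction (2), the duplication (4), the insertion of a zero entry (6), and the stacking (8). For the scaling in (3) the key remark is that for $a \in \N \setminus \{0\}$ the map $x \mapsto ax$ on $\N$ is strictly increasing and satisfies $ax = 0 \Leftrightarrow x = 0$, so \ref{P1} survives and the comparisons underlying \ref{P2}, \ref{P3} are preserved; the converse direction (recovering $\sigma$ from $\sigma_{i,a}$) is symmetric since one simply divides the $i$th coordinate by $a$, which again preserves order and the zero locus.

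For (5) I would first note that \ref{P2} and \ref{P3} pass trivially to $\hat{\sigma}_i$ because they are coordinatewise inequalities and dropping a coordinate from an inequality leaves it valid. The nontrivial point is \ref{P1}: the ``only if'' direction is automatic, while the ``if'' direction is exactly what the extra hypothesis on $\sigma$ was designed to ensure. Namely, if $\hat{\sigma}_i(v) = 0$ but $v \neq 0$, then \ref{P1} for $\sigma$ forces $\sigma(v)$ to be of the form $(0,\ldots,0,a,0,\ldots,0)$ with $a \neq 0$ in the $i$th entry, contradicting the hypothesis. For (7), \ref{P2} and \ref{P3} follow from $R$-linearity of $f$ composed with the corresponding properties of $\sigma$, while \ref{P1} uses precisely the injectivity of $f$ to ensure $f(v) \neq 0$ whenever $v \neq 0$.

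I do not expect any deep obstacle in this proposition; the only spot that requires a moment of care is (5), where one has to recognise that the technical hypothesis is tailored exactly to salvage \ref{P1}. The scaling (3) also deserves a brief sanity check because multiplication by $a$ is not a lattice homomorphism of $\N^u$ in general, but on a single coordinate it preserves the binary operations $\min$ and $\max$ used in \ref{P3}, which is all that is needed. Everything else reduces to the remark that \ref{P1}--\ref{P3} are preserved by coordinatewise order-isomorphisms and by embedding $\N^u$ into $\N^{u+1}$ as a face, so the verifications are mechanical.
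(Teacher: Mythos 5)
Your verification is correct, and it follows exactly the routine coordinatewise check of \ref{P1}--\ref{P3} that the paper intends (the paper explicitly leaves this proof to the reader as straightforward). The two points you single out — that the hypothesis in (5) exists precisely to rescue \ref{P1}, and that injectivity of $f$ in (7) is what gives \ref{P1} — are indeed the only places where anything beyond mechanical transfer of the properties is needed.
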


Similarly to the situation of linear codes endowed with the Hamming support, a support function over a finite commutative ring $R$ induces a notion of support of a code. In turn, this allows us to define generalized weights for $R$-linear codes. 

\begin{definition}
The \textbf{support} of a code $C \subseteq R^n$ is
$$\sigma(C):= \bigvee_{v \in C} \sigma(v)\in \NN^u.$$
\end{definition}

Notice that the support of a code is determined by the supports of the vectors in any system of generators. More precisely, let $C=\langle v_1,\ldots,v_k\rangle\subseteq R^n$. Since, by Definition~\ref{defsupp}, 
$$\sigma(r_1v_1+\ldots+r_kv_k)\leq\sigma(r_1v_1)\vee\ldots\vee\sigma(r_kv_k)\leq \sigma(v_1)\vee\ldots\vee\sigma(v_k)$$
for any $r_1,\ldots,r_k\in R$, we have \begin{equation}\label{eqn:supp}
\sigma(C)=\bigvee_{i=1}^k\sigma(v_i).
\end{equation}
Moreover, if $D\subseteq C$, then by definition $\sigma(D)\leq\sigma(C)$.

\begin{example}
Equation~\eqref{eqn:supp} does not hold for the Lee weight $\wL:\Z_4 \to \{0,1,2\}$, nor for its coordinatewise extension to $\mathbb{Z}_4^n$. For example, $\wL(\langle 1\rangle)=2>1=\wL(1)$, see also Example~\ref{rem:lee}.
\end{example}

\begin{definition} \label{def:gw}
For $r\in[M(C)]$, the $r$-th \textbf{generalized weight} of~$C$ is the integer
$$d_r(C):=\min\{|\sigma(D)| \st D\in\mathcal{S}_{j}(C) \mbox{ for some } j\geq r\}.$$
We also set $$d_0(C):=0.$$
\end{definition} 

It follows from Theorem \ref{lem:minspecial} that for $r\in [M(C)]$ we have $\mathcal{S}_{r}(C)\neq\emptyset$. Hence $d_r(C)$ is well-defined. 

\begin{rmk}\label{rmk:containment}
For $r\in[M(C)]$ one has $$d_r(C)=\min\{|\sigma(D)| \st D\in\mathcal{S}_r(C)\}.$$ Indeed, let $j\geq r$ and let $D\in\mathcal{S}_{j}(C)$. Then there exists a $D'\subseteq D$ such that $D'\in\mathcal{S}_r(C)$ and $|\sigma(D')|\leq|\sigma(D)|$.
\end{rmk}

In the next lemma we collect a few easy consequences of Definition~\ref{def:gw}. 

\begin{lemma}\label{lem:genwgt}
Let $D\subseteq C\subseteq R^n$ be codes. The following hold.
\begin{enumerate}[label={(\arabic*)}]
\item \label{lem:genwgt1} $d_1(C)=\min\wt(C)$.
\item \label{lem:genwgt2} $d_r(D)\geq d_r(C)$ for $r\in[\min\{M(D),M(C)\}]$.
\item \label{lem:genwgt2a} $d_{r+1}(C)\geq d_r(C)$ for $r\in[M(C)-1]$.
\item \label{lem:genwgt3} $d_r(C) = \min\{|\sigma(D)| \st D \subseteq C, \, M(D) \ge r\}$ for $r\in[M(C)]$.
\end{enumerate}
\end{lemma}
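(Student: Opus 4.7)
The four statements are largely bookkeeping exercises using the definitions together with Theorem~\ref{lem:minspecial} and Remark~\ref{rmk:containment}; my plan is to dispose of them in the order presented, reusing earlier parts as needed.

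For \ref{lem:genwgt1} the plan is to invoke Remark~\ref{rmk:containment} to rewrite $d_1(C)=\min\{|\sigma(D)|\st D\in\mathcal{S}_1(C)\}$ and then to identify $\mathcal{S}_1(C)$ with the cyclic subcodes $\langle v\rangle$, $v\in C\setminus\{0\}$. For such a cyclic subcode I plan to observe that property~\ref{P2} gives $\sigma(rv)\le\sigma(v)$ for every $r\in R$, so that $\sigma(\langle v\rangle)=\bigvee_{r\in R}\sigma(rv)=\sigma(v)$, and hence $|\sigma(\langle v\rangle)|=\wt(v)$. Taking the minimum over $v\in C\setminus\{0\}$ produces $\min\wt(C)$. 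The only subtle point is to convince oneself that $\langle v\rangle\in\mathcal{S}_1(C)$ for $v\ne0$, which follows because $\{v\}$ is a generating set of cardinality one, and the empty set generates only the zero module.

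For \ref{lem:genwgt2} I will argue by inclusion of the sets appearing in the definition: if $E\in\mathcal{S}_r(D)$ then $E\subseteq D\subseteq C$ is a subcode of $C$ with a minimal generating set of cardinality $r$, so $E\in\mathcal{S}_r(C)$. Hence $\mathcal{S}_r(D)\subseteq\mathcal{S}_r(C)$, and the minimum defining $d_r(C)$ is taken over a set that contains the one defining $d_r(D)$; the inequality follows. For \ref{lem:genwgt2a} the plan is the same in spirit: any $E\in\mathcal{S}_j(C)$ with $j\ge r+1$ also satisfies $j\ge r$, so the set of values $|\sigma(E)|$ appearing in the definition of $d_{r+1}(C)$ is contained in the corresponding set for $d_r(C)$. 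Alternatively, I can apply the construction in Remark~\ref{rmk:containment} to produce, for each $E\in\mathcal{S}_{r+1}(C)$, a subcode $E'\subseteq E$ with $E'\in\mathcal{S}_r(C)$ and $|\sigma(E')|\le|\sigma(E)|$, from which $d_r(C)\le|\sigma(E)|$ for every competitor in the definition of $d_{r+1}(C)$.

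For \ref{lem:genwgt3} I plan to translate the condition $M(D)\ge r$ via Theorem~\ref{lem:minspecial}, which states $M(D)=\max\{i\ge0\st D\in\mathcal{S}_i(R^n)\}$. Thus $M(D)\ge r$ is equivalent to the existence of some $j\ge r$ with $D\in\mathcal{S}_j(R^n)$, and under the additional assumption $D\subseteq C$ this amounts to $D\in\mathcal{S}_j(C)$ for some $j\ge r$. The set of subcodes over which the minimum is taken in the right-hand side of \ref{lem:genwgt3} therefore coincides verbatim with the one in Definition~\ref{def:gw}, giving the claimed equality. I expect no real obstacle; the only thing to be careful about is ensuring that the $r=M(C)$ endpoint does not create an empty minimum, which is guaranteed because $C$ itself is admissible ($C\subseteq C$ and $M(C)\ge r$), consistent with the non-emptiness already established after Definition~\ref{def:gw}.
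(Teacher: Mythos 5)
Your proposal is correct and follows essentially the same route as the paper: part (1) via Property (P2) giving $\sigma(\langle v\rangle)=\sigma(v)$ together with Remark~\ref{rmk:containment}, parts (2)–(3) by containment of the sets over which the minima are taken, and part (4) by translating $M(D)\ge r$ through Theorem~\ref{lem:minspecial}. No gaps to report.
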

 
\begin{proof}
By Property~\ref{P2} one has $\sigma(\langle v\rangle)=\sigma(v)$ for any $v\in C$. Hence \ref{lem:genwgt1}
follows, thanks to Remark \ref{rmk:containment}. Part \ref{lem:genwgt2} holds since every subcode of $D$ is also a subcode of $C$. Part \ref{lem:genwgt2a} follows from observing that $d_r$ is the minimum of the function $|\sigma(D)|$ for $D$ ranging over the set $\mathcal{S}_r(C)\cup\ldots\cup\mathcal{S}_{M(C)}(C)$ and passing from $r$ to $r+1$ we minimize over a subset.
In order to prove \ref{lem:genwgt3}, let $i\geq r$ and $D\in\mathcal{S}_i(C)$.
Then $M(D)\geq i$ by Theorem~\ref{lem:minspecial}. Therefore, if $D\in\mathcal{S}_i(C)$ for some $i\geq r$, then $D\in\mathcal{S}_{M(D)}(C)$ and $M(D)\geq r$.
Since every $D\subseteq C$ belongs to $\mathcal{S}_{M(D)}(C)$, then $\{D\in\mathcal{S}_i(C)\mbox{ for some } i\ge r\}=\{D \subseteq C, \, M(D) \ge r\}$.
Therefore \begin{equation*}
d_r(C) = \min\{|\sigma(D)| \st D\in\mathcal{S}_i(C)\mbox{ for some } i\ge r\}=\min\{|\sigma(D)| \st D\subseteq C, \, M(D) \ge r\}.\qedhere
\end{equation*}
\end{proof}

We now show how the structure of supports relate to the decomposition of $R$ in~\eqref{decR}. 

\begin{proposition}\label{thm:supp}
Let $\sigma:R^n\rightarrow\mathbb{N}^u$ be a support. Then for any $v=(v_1,\ldots,v_\ell)\in R^n= R_1^n\times\ldots\times R_\ell^n$ we have $$\sigma(v)=\sigma_1(v_1)\vee\ldots\vee\sigma_\ell(v_\ell),$$ where $\sigma_i:R_i^n\rightarrow\mathbb{N}^u$ is as support defined via $\sigma_i(v_i):=\sigma(e_iv)$ for all $i\in[\ell]$.
\end{proposition}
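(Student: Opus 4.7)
My plan is to establish the equality via two inequalities in $(\N^u,\le)$, after first checking that each $\sigma_i$ is well-defined and satisfies \ref{P1}--\ref{P3}.

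First I would observe that the expression $\sigma(e_iv)$ depends only on the $i$th component $v_i$ of $v$, since $e_iv = (0,\ldots,0,v_i,0,\ldots,0)$, so the formula $\sigma_i(v_i):=\sigma(e_iv)$ defines an unambiguous function $\sigma_i:R_i^n\to \N^u$. To check that $\sigma_i$ is a support, I let $\iota_i:R_i\to R$ be the inclusion $r_i\mapsto (0,\ldots,r_i,\ldots,0)$. Then \ref{P1} for $\sigma_i$ is immediate since $e_iv=0$ iff $v_i=0$. For \ref{P2}, given $r_i\in R_i$ and $v_i\in R_i^n$, the element $e_iw$, where $w$ has $r_iv_i$ in the $i$th slot, equals $\iota_i(r_i)\cdot e_iv$; applying \ref{P2} for $\sigma$ gives $\sigma_i(r_iv_i)\le \sigma_i(v_i)$. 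For \ref{P3}, the identity $e_i(v+w)=e_iv+e_iw$ together with \ref{P3} for $\sigma$ yields $\sigma_i(v_i+w_i)\le \sigma_i(v_i)\vee \sigma_i(w_i)$.

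Next I would prove the claimed equality. For the inequality $\sigma(v)\le \sigma_1(v_1)\vee\cdots\vee\sigma_\ell(v_\ell)$, I would use the decomposition $v=e_1v+e_2v+\cdots+e_\ell v$ and apply \ref{P3} iteratively to conclude
\[
\sigma(v)\le \sigma(e_1v)\vee\cdots\vee\sigma(e_\ell v)=\sigma_1(v_1)\vee\cdots\vee\sigma_\ell(v_\ell).
\]
For the reverse inequality, since $e_iv=e_i\cdot v$, Property \ref{P2} gives $\sigma_i(v_i)=\sigma(e_iv)\le \sigma(v)$ for each $i\in[\ell]$; taking the join over $i$ yields $\sigma_1(v_1)\vee\cdots\vee\sigma_\ell(v_\ell)\le \sigma(v)$, which combined with the previous inequality gives the required equality.

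There is no real obstacle here: the proof is a direct unwinding of the axioms in Definition \ref{defsupp}, with the only minor point being the bookkeeping that the element $\iota_i(r_i)\cdot e_i v$ in $R^n$ corresponds to multiplication by $r_i$ on the $i$th factor when passing between $R^n$ and $R_i^n$ via the projection $\pi_i$.
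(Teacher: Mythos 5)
Your argument is correct and is essentially the paper's proof: both establish $\sigma_1(v_1)\vee\cdots\vee\sigma_\ell(v_\ell)\le\sigma(v)$ via \ref{P2} applied to $e_iv$, and the reverse inequality via \ref{P3} applied to $v=e_1v+\cdots+e_\ell v$. The only difference is that you spell out the (routine) verification that each $\sigma_i$ is well-defined and a support, which the paper leaves as an easy check.
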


\begin{proof}
It is easy to check that $\sigma_i$ is well-defined and is a support for all $i\in[\ell]$. One has 
$\sigma_i(v_i)=\sigma(e_iv)\leq\sigma(v)$, hence $\sigma_1(v_1)\vee\ldots\vee\sigma_\ell(v_\ell) \le \sigma(v)$.
Furthermore,
$$\sigma(v)=\sigma\left(\sum_{i=1}^\ell e_iv\right)\leq\bigvee_{i=1}^\ell\sigma(e_iv)=\bigvee_{i=1}^\ell\sigma_i(v_i).$$
It follows that $\sigma(v)=\sigma_1(v_1)\vee\ldots\vee\sigma_\ell(v_\ell)$,
as desired.
\end{proof}

\begin{remark}\label{rem:suppfcr}
When $R$ is a finite chain ring, support functions on $R$ have a simple description. To see this, let $\alpha$ be a generator of the maximal ideal of $R$ and let $\epsilon=\min\{i>0\st \alpha^i=0\}$. Let $\sigma,\tau:R\rightarrow\mathbb{N}^u$ be supports. Then $\sigma=\tau$ if and only if $\sigma(\alpha^i)=\tau(\alpha^i)$ for $0\leq i\leq \epsilon-1$. Indeed, every element of $R\setminus\{0\}$ is of the form $r\alpha^i$ where $r$ is a unit and $0\leq i\leq \epsilon-1$, and $\sigma(r\alpha^i)=\sigma(\alpha^i)$. Therefore, a support $\sigma:R\rightarrow\mathbb{N}^u$ corresponds to a set of vectors $a^{(0)}, a^{(1)},\ldots, a^{(\epsilon-1)}\in\mathbb{N}^u$ with the property that $a^{(0)}\geq a^{(1)}\geq\ldots\geq a^{(\epsilon-1)}$. The correspondence is determined by setting $\sigma(\alpha^i)=a^{(i)}$ for all $i\in\{0,\ldots,\epsilon-1\}$.
In particular, any support on $R$ induces the same partition as a chain support.
\end{remark}

\subsection{Modular Supports}

In this subsection we define and study a class of supports whose structure is closely related to the $R$-module structure of $R^n$, and that we therefore call \textit{modular}. This paper is primarily devoted to the study of generalized weights associated to modular supports.

\begin{definition}
A support $\sigma$ is \textbf{modular} if it satisfies the following:
\begin{enumerate}[label={(\arabic*)}]
\item[\mylabel{PP}{(P$\star$)}]  
If $v,w\in R^n$ and $i\in [u]$ satisfy $0 \neq \sigma(v)_i \le \sigma(w)_i$, then there exists $r\in R$ such that $\sigma(v+rw)_i<\sigma(v)_i$.
\end{enumerate}
\end{definition}

\begin{remark}
By repeatedly applying Property~\ref{PP}, one obtains the following equivalent property: \ 
If $v,w\in R^n$ and $i\in [u]$ satisfy $0 \neq \sigma(v)_i \le \sigma(w)_i$, then there exists $r\in R$ such that $\sigma(v+rw)_i=0$.
\end{remark}

As for supports, one can easily produce new modular supports from known ones.

\begin{proposition}\label{newPP}
Let $\sigma:R^n\rightarrow\NN^u$ be a support.
Following the notation and numbering of Proposition~\ref{prop:ext}, we have:
\begin{enumerate}[label={(\arabic*)}]
\item \label{newPP1} $\sigma$ is modular if and only if $s\circ\sigma$ is modular;
\item \label{newPP2} $\sigma_1,\ldots,\sigma_\ell$ are modular if and only if $\sigma$ is modular;
\item \label{newPP3} $\sigma$ is modular if and only if $\sigma_{i,a}$ is modular;
\item \label{newPP4} $\sigma$ is modular if and only if $\tilde{\sigma}_{i}$ is modular;
\item \label{newPP5} if $\sigma$ is modular, then $\hat{\sigma}_{i}$ is modular;
\item \label{newPP6} $\sigma$ is modular if and only if $\check{\sigma}_{i}$ is modular;
\item \label{newPP7} if $\sigma$ is modular, then $\sigma \circ f$ is modular;
\item \label{newPP8} if $\sigma_1,\ldots,\sigma_k$ are modular, then $\sigma=(\sigma_1,\ldots,\sigma_k)$ is modular.
\end{enumerate}
\end{proposition}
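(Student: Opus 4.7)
The plan is to verify each of the eight assertions by directly unfolding property \ref{PP}. The key observation is that \ref{PP} is a one-coordinate statement: from a single inequality $0 \neq \sigma(v)_i \le \sigma(w)_i$ it demands a scalar $r \in R$ such that $\sigma(v+rw)_i < \sigma(v)_i$. Seven of the eight operations in Proposition \ref{prop:ext} act on $\sigma$ coordinate by coordinate, while the remaining one --- pullback along an injective $R$-linear map --- acts on the domain but commutes with the coordinate structure of the codomain. In every case the verification reduces to a reindexing argument together with the honest use of the witness $r$ coming from the initial support.

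For \ref{newPP1}, \ref{newPP3}, \ref{newPP4}, \ref{newPP6}, and \ref{newPP8} I would set up the obvious bijection between the index set of the new support and that of the original(s). Along this bijection corresponding entries are related by a strictly order-preserving injection $\N \to \N$ sending $0$ to $0$: the identity in \ref{newPP1}, \ref{newPP4}, \ref{newPP6}, \ref{newPP8}, and multiplication by the fixed positive integer $a$ in \ref{newPP3}. Consequently the hypothesis and conclusion of \ref{PP} transfer verbatim between the two sides and one may reuse the same witness $r \in R$; the ``if and only if'' direction is then automatic. In \ref{newPP6} the inserted zero coordinate never satisfies the hypothesis $\sigma(v)_i \neq 0$ of \ref{PP}, so it imposes no condition at all.

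For \ref{newPP2}, indices of $\sigma = \sigma_1 \times \cdots \times \sigma_\ell$ decompose as pairs $(j,i')$ with $i' \in [u_j]$, and a one-coordinate inequality at such an index involves only the $j$-th block; modularity of $\sigma_j$ then produces a witness $r \in R$ that serves for $\sigma$, and for the converse I apply \ref{PP} for $\sigma$ to block vectors of the form $(0,\ldots,v_j,\ldots,0)$ and $(0,\ldots,w_j,\ldots,0)$. Part \ref{newPP5} goes in only one direction because deletion of a coordinate loses information: any index of $\hat{\sigma}_i$ is an index of $\sigma$ other than $i$, and the witness $r$ supplied by \ref{PP} for $\sigma$ works for $\hat{\sigma}_i$ unchanged. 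Part \ref{newPP7} is the only place where the $R$-module structure genuinely enters: given $v', w' \in R^k$ with $0 \neq (\sigma \circ f)(v')_i \le (\sigma \circ f)(w')_i$, I set $v := f(v')$ and $w := f(w')$, invoke \ref{PP} for $\sigma$ to obtain $r \in R$ with $\sigma(v+rw)_i < \sigma(v)_i$, and use the $R$-linearity of $f$ to rewrite $v + rw = f(v' + r w')$, whence $(\sigma \circ f)(v' + r w')_i < (\sigma \circ f)(v')_i$.

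The main obstacle is purely bookkeeping: keeping the index correspondences consistent across duplication, insertion, deletion, concatenation, and block decomposition, so that the one-coordinate statement \ref{PP} is reindexed to the correct slot of the new support. No deeper combinatorial or algebraic content is required, and in each part the witness $r$ coming from modularity of the starting support is reused essentially verbatim, after an obvious embedding, projection, or preimage in parts \ref{newPP2}, \ref{newPP5}, and \ref{newPP7}.
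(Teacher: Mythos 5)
Your proposal is correct: each part is indeed a one-coordinate unfolding of Property~\ref{PP}, with the witness $r$ transported along the obvious index correspondence (plus the block embedding in \ref{newPP2} and linearity of $f$ in \ref{newPP7}), and I see no gaps. The paper omits the proof of Proposition~\ref{newPP} as routine, and your argument is exactly the intended verification.
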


Several, but not all, of the supports that we have encountered so far are modular.

\begin{example}\label{ex:chain*}
Support \ref{exasupp3} of Example~\ref{exasupp} is modular, while an arbitrary chain support is not. Over a finite chain ring, the only modular chain support is the chain ring support.
For example, the chain support on $\Z_4$ associated with the chain  $0\subsetneq\Z_4$ is not modular. Indeed, $\sigma(2)=\sigma(4)=1$, but there is no $r \in \Z_4$ with $2-4r=0$.
\end{example}

The Hamming support is an example of modular support.

\begin{example}
It is easy to check that the chain ring support of Example~\ref{exasupp}\ref{exasupp2a} is modular.
Hence a product of chain ring supports is modular by Proposition~\ref{prop:ext} and Proposition~\ref{newPP}\ref{newPP2}.
In particular, the Hamming support is modular.
\end{example}

\begin{example}
The supports of Remark~\ref{rem:suppfcr} are modular if and only if $(a^{(j)})_{i}\neq (a^{(k)})_{i}$ for all $j,k\in\{0,\ldots,\epsilon-1\}$ distinct and $i\in[u]$.
\end{example}

We now give more examples of supports which are not modular.

\begin{example}\label{expato}
Let $R=\F_2$ and let 
$\sigma: \F_2^2 \to \{0,1\}^2$
be defined by
$$\sigma(0,0)=(0,0), \quad
\sigma(1,0)=(1,1), \quad
\sigma(0,1)=(0,1), \quad
\sigma(1,1)=(1,1).$$
Then $\sigma$ is a support which is not modular.
\end{example}
 
\begin{example}
The chain support on $\Z_6$ associated with the chain  $(0) \subsetneq (2) \subsetneq \Z_6$ is not modular. Indeed, $\sigma(2)=1$ and $\sigma(3)=2$, but there is no $r \in \Z_6$ with $2-3r=0$.
\end{example} 

The next result shows that every modular support over a finite commutative ring decomposes as a product of modular supports over finite local rings.

\begin{theorem}\label{thm:supp*}
Let $R$ be a finite commutative ring and let $\sigma:R^n\rightarrow\mathbb{N}^u$ be a modular support. Up to a permutation of the coordinates of $\mathbb{N}^u$ we have $\sigma=\sigma_1\times\ldots\times\sigma_\ell$ where $\sigma_i:R_i^n\rightarrow\mathbb{N}^{u_i}$ for $i\in[\ell]$ and $u_1,...,u_\ell$ are integers with $u_1+\ldots+u_\ell=u$. Moreover, $\sigma_i$ is a modular support for all $i\in[\ell]$.
\end{theorem}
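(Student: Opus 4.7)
The plan is to build on Proposition~\ref{thm:supp}, which already gives $\sigma(v) = \sigma_1(v_1) \vee \ldots \vee \sigma_\ell(v_\ell)$ for the supports $\sigma_i:R_i^n \to \N^u$ defined by $\sigma_i(v_i) := \sigma(e_i v)$. The new content of the theorem is that the $\sigma_i$'s have \emph{disjoint} coordinate supports in $\N^u$. Concretely, define
\[ T_i := \{k \in [u] \st \sigma_i(v_i)_k \neq 0 \textnormal{ for some } v_i \in R_i^n\}, \qquad i \in [\ell]. \]
Once the $T_i$'s are shown to be pairwise disjoint, one absorbs any coordinate in $[u] \setminus (T_1 \cup \ldots \cup T_\ell)$ into, say, $T_1$ (as an always-zero coordinate), sets $u_i := |T_i|$ so that $u_1+\ldots+u_\ell = u$, permutes the coordinates of $\N^u$ so that $T_1,\ldots,T_\ell$ appear as consecutive blocks, and defines $\sigma_i:R_i^n \to \N^{u_i}$ as the corresponding coordinate projections of the old $\sigma_i$. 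Then the identity $\sigma(v) = \sigma_1(v_1) \vee \ldots \vee \sigma_\ell(v_\ell)$ from Proposition~\ref{thm:supp} reads exactly as the product $\sigma = \sigma_1 \times \ldots \times \sigma_\ell$, since each term contributes only to its own block.

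The heart of the argument is to prove disjointness, and this is where modularity is used. Suppose for contradiction that $k \in T_i \cap T_j$ for some $i \neq j$, so there exist $v_i \in R_i^n$ and $w_j \in R_j^n$ with $\sigma_i(v_i)_k \neq 0$ and $\sigma_j(w_j)_k \neq 0$. Consider the embeddings $\tilde v := e_i \cdot (v_i\textnormal{ placed in slot }i)$ and $\tilde w := e_j \cdot (w_j\textnormal{ placed in slot }j)$ in $R^n$, so that $\sigma(\tilde v) = \sigma_i(v_i)$ and $\sigma(\tilde w) = \sigma_j(w_j)$. Without loss of generality, $0 \neq \sigma(\tilde v)_k \le \sigma(\tilde w)_k$. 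By modularity of $\sigma$ there exists $r = (r_1,\ldots,r_\ell) \in R$ with $\sigma(\tilde v + r\tilde w)_k < \sigma(\tilde v)_k$. But $\tilde v + r \tilde w$ has $v_i$ in slot $i$, $r_j w_j$ in slot $j$, and zeros elsewhere, so by Proposition~\ref{thm:supp}
\[ \sigma(\tilde v + r\tilde w) \;=\; \sigma_i(v_i) \,\vee\, \sigma_j(r_j w_j) \;\geq\; \sigma_i(v_i) \;=\; \sigma(\tilde v) \]
coordinatewise. In particular $\sigma(\tilde v + r\tilde w)_k \ge \sigma(\tilde v)_k$, a contradiction. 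Hence the $T_i$'s are pairwise disjoint, which is the main obstacle overcome.

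It remains to check that the projected $\sigma_i : R_i^n \to \N^{u_i}$ is itself a modular support. That it is a support follows from Proposition~\ref{thm:supp} together with Proposition~\ref{prop:ext} (the relevant coordinates of the old $\sigma_i$ are precisely those in $T_i$, so dropping the identically-zero ones yields a support). For modularity, suppose $v_i,w_i \in R_i^n$ and $k \in [u_i]$ satisfy $0 \neq \sigma_i(v_i)_k \leq \sigma_i(w_i)_k$. Let $\tilde v, \tilde w \in R^n$ be the corresponding embeddings into slot $i$; then the same inequality holds for $\sigma$ at the corresponding coordinate, so by modularity of $\sigma$ there is $r = (r_1,\ldots,r_\ell) \in R$ with $\sigma(\tilde v + r \tilde w)_k < \sigma(\tilde v)_k$. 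Since $\tilde v + r \tilde w$ is supported only in slot $i$ and equals the embedding of $v_i + r_i w_i$, this reads $\sigma_i(v_i + r_i w_i)_k < \sigma_i(v_i)_k$, proving modularity of $\sigma_i$ with the witness $r_i \in R_i$. This completes the proof.
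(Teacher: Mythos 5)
Your proof is correct and follows essentially the same route as the paper: both rely on Proposition~\ref{thm:supp} and prove disjointness of the coordinate supports of the $\sigma_i$ by the same modularity contradiction (adding $r$ times the embedded vector from the other factor and comparing with the join), then permute coordinates and conclude. The only cosmetic differences are that you verify modularity of each factor directly rather than citing Proposition~\ref{prop:ext}\ref{prop:ext1} and Proposition~\ref{newPP}\ref{newPP2}, and you explicitly absorb the identically-zero coordinates into one block, which the paper leaves implicit.
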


\begin{proof}
For $v\in R^n$, write $v=(v_1,\ldots,v_\ell)$ with $v_i\in R_i^n$. By Proposition~\ref{thm:supp} we have $$\sigma(v_1,\ldots,v_\ell)=\sigma_1(v_1)\vee\ldots\vee\sigma_\ell(v_\ell),$$ where $\sigma_i:R_i^n\rightarrow\mathbb{N}^u$ is a support defined via $\sigma_i(v_i):=\sigma(e_iv)$, $i\in[\ell]$. We claim that for each $x\in [u]$ there is at most one $i\in[\ell]$ such that $\sigma_i(v_i)_x\neq 0$ for some $v\in R^n$. 
Indeed, assume towards a contradiction that there exist $i\neq j$ and $v,w\in R^n$ such that $\sigma_i(v_i)_x,\sigma_j(w_j)_x\neq 0$. Without loss of generality we may assume that $0<\sigma_i(v_i)_x\leq \sigma_j(w_j)_x$.
By Property~\ref{PP} there exists $r=(r_1,\ldots,r_\ell)\in R$ such that $$\sigma(e_iv)_x>\sigma(e_iv+e_jrw)_x=[\sigma_i(v_i)\vee\sigma_j(r_jw_j)]_x\geq \sigma_i(v_i)_x,$$ where the equality follows from Proposition~\ref{thm:supp}. This a contradiction, establishing the claim.

We have shown that for each $x\in [u]$ there exists at most one $i\in[\ell]$ for which $(\sigma_i)_x$ is not the zero function. In other words, the supports of the images of the functions $\sigma_i$ are disjoint. Up to permuting the coordinates of $\mathbb{N}^u$, one may assume that $\sigma_1$ is supported on the first $u_1$ coordinates, $\sigma_2$ on the next $u_2$,\ldots, and~$\sigma_\ell$ on the last $u_\ell$. Therefore one may regard each $\sigma_i$ as a function which takes values in $\mathbb{N}^{u_i}$. Then $\sigma=\sigma_1\times\ldots\times\sigma_\ell$ and each $\sigma_i$ is a modular support by Proposition \ref{prop:ext}\ref{prop:ext1} and Proposition \ref{newPP}\ref{newPP2}.
\end{proof}

By combining Remark~\ref{rem:suppfcr} with Theorem~\ref{thm:supp*}, support functions on a principal ideal ring~$R$ can be easily characterized as follows. 

\begin{corollary}\label{cor:supponR}
Let $R$ be a finite principal ideal ring and let $\sigma:R\rightarrow\mathbb{N}^u$ be a modular support. By the Zariski-Samuel Theorem, $R=R_1\times\ldots\times R_\ell$ where $R_1,\ldots,R_\ell$ are finite chain rings. For each $i$, let $\alpha_i$ be a generator of the maximal ideal of $R_i$ and let $\epsilon_i:=\min\{j\st \alpha_i^j=0\}$.
Then there exist $u_1,\ldots,u_\ell$ such that $u_1+\ldots+u_\ell=u$ and $\sigma=\sigma_1\times\ldots\times\sigma_\ell$, where $\sigma_i:R_i\rightarrow\mathbb{N}^{u_i}$ for $i\in[\ell]$.
Let $\sigma_i(\alpha_i^j)=a^{(i,j)}\in\mathbb{N}^{u_i}$ for $i\in[\ell]$ and $j\in\{0,\ldots,\epsilon_i-1\}$. Then $(a^{(i,j-1)})_k>(a^{(i,j)})_k$ for $j\in[\epsilon_i-1]$, $i\in[\ell]$, $k\in[u_i]$.

Conversely, any set of vectors $a^{(i,j)}\in\mathbb{N}^{u_i}$ such that $a^{(i,j-1)}\geq a^{(i,j)}$ for $j\in[\epsilon_i-1]$ and $i\in[\ell]$ defines a support $\sigma=\sigma_1\times\ldots\times\sigma_\ell$ on $R$ via
$\sigma_i(r\alpha_i^j)=a^{(i,j)}$ for $i\in[\ell],j\in\{0,\ldots,\epsilon_i-1\}$, and $r\in R_i$ invertible. Moreover, if $(a^{(i,j-1)})_k>(a^{(i,j)})_k$ for $j\in[\epsilon_i-1]$, $i\in[\ell]$, and $k\in[u_i]$, then~$\sigma$ is modular.
\end{corollary}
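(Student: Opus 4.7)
The plan is to combine Theorem \ref{thm:supp*} with Remark \ref{rem:suppfcr}, and use modularity to strengthen the weak coordinate-wise monotonicity of the Remark to a strict one in every coordinate. For the forward implication, the first step is to invoke Theorem \ref{thm:supp*} to write, after permuting the coordinates of $\mathbb{N}^u$, $\sigma=\sigma_1\times\ldots\times\sigma_\ell$ with each $\sigma_i:R_i\to\mathbb{N}^{u_i}$ a modular support on the finite chain ring $R_i$. Remark \ref{rem:suppfcr} then shows that each $\sigma_i$ is completely determined by the vectors $a^{(i,j)}:=\sigma_i(\alpha_i^j)$ for $j\in\{0,\ldots,\epsilon_i-1\}$, and these form a weakly decreasing chain $a^{(i,0)}\ge a^{(i,1)}\ge\ldots\ge a^{(i,\epsilon_i-1)}$ in the product order of $\mathbb{N}^{u_i}$.

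The crux is to deduce the strict inequality $(a^{(i,j-1)})_k>(a^{(i,j)})_k$ for every $k\in[u_i]$. I plan to argue by contradiction: assuming $(a^{(i,j-1)})_k=(a^{(i,j)})_k$ at some such $k$ (necessarily nonzero, as $k$ is an active coordinate of $\sigma_i$), set $v:=\alpha_i^{j-1}$ and $w:=\alpha_i^j$. Then $0\ne\sigma_i(v)_k=\sigma_i(w)_k$, so Property (\ref{PP}) produces $r\in R_i$ with $\sigma_i(v+rw)_k<\sigma_i(v)_k$. However $v+rw=\alpha_i^{j-1}(1+r\alpha_i)$, and $1+r\alpha_i$ is a unit since $r\alpha_i\in\mathfrak{M}_i$; by Property (\ref{prsupp1}) we get $\sigma_i(v+rw)=a^{(i,j-1)}$, a contradiction. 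I expect this choice of pair to be the main obstacle: the naive alternative $v=\alpha_i^j$, $w=\alpha_i^{j-1}$ is useless, because $r=-\alpha_i$ yields $v+rw=0$ and modularity is satisfied trivially; the argument only bites when the \emph{lower}-valuation element is placed in the first slot, so that the unit $1+r\alpha_i$ obstructs any decrease of the support at coordinate $k$.

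For the converse, the assignment $\sigma_i(r\alpha_i^j):=a^{(i,j)}$ (with $r$ a unit) is well-defined because the $\alpha_i$-valuation of a nonzero element of $R_i$ is unique; the axioms (\ref{P1})--(\ref{P3}) for $\sigma_i$ follow routinely from $a^{(i,j-1)}\ge a^{(i,j)}$, using for (\ref{P3}) that the valuation of a sum of two elements of $R_i$ is at least the minimum of the individual valuations. Hence $\sigma=\sigma_1\times\ldots\times\sigma_\ell$ is a support on $R$ by Proposition \ref{prop:ext}\ref{prop:ext1}. Under the strict-inequality hypothesis I will show that each $\sigma_i$ is modular and then conclude by Proposition \ref{newPP}\ref{newPP2}: if $0\ne\sigma_i(v)_k\le\sigma_i(w)_k$ for some $v,w\in R_i$ and $k\in[u_i]$, then the strict decrease of the $k$-th coordinate of $a^{(i,\cdot)}$ forces $j_w\le j_v$ for the respective valuations, so $v$ is an $R_i$-multiple of $w$, and one may pick $r\in R_i$ with $rw=-v$ to make $\sigma_i(v+rw)=0$.
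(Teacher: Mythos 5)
Your route is the intended one: the paper gives no separate argument for this corollary beyond ``combine Remark~\ref{rem:suppfcr} with Theorem~\ref{thm:supp*}'', and that is exactly what you do. Your converse is essentially correct (well-definedness via the $\alpha_i$-valuation, \ref{P3} from the fact that the valuation of a sum is at least the minimum of the valuations, and modularity because strict decrease of $\bigl((a^{(i,j)})_k\bigr)_j$ forces $j_v\geq j_w$, hence $v\in(w)$ and $v+rw=0$ for a suitable $r$, after which Proposition~\ref{newPP}\ref{newPP2} applies). Your forward mechanism --- apply Property~\ref{PP} to $v=\alpha_i^{j-1}$, $w=\alpha_i^{j}$ and note that $v+rw=\alpha_i^{j-1}(1+r\alpha_i)$ has the same support as $\alpha_i^{j-1}$ since $1+r\alpha_i$ is a unit (Lemma~\ref{prsupp}\ref{prsupp1}) --- is also the right one, including your observation that the pair must be taken in this order.

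The gap is the parenthetical ``necessarily nonzero, as $k$ is an active coordinate of $\sigma_i$''. Activity of coordinate $k$ only guarantees $(a^{(i,0)})_k\neq 0$; it does not guarantee $(a^{(i,j-1)})_k\neq 0$ for $j\geq 2$, and Property~\ref{PP} is vacuous at coordinate $k$ once the value there is $0$. Moreover, the missing step cannot be supplied in general: on $R=\Z_8$ (so $\alpha=2$, $\epsilon=3$) the assignment $a^{(0)}=(5,3)$, $a^{(1)}=(0,2)$, $a^{(2)}=(0,1)$ defines a support satisfying \ref{PP} --- at the first coordinate the hypothesis of \ref{PP} is triggered only when $v$ and $w$ are both units, and at the second coordinate only when the valuation of $v$ is at least that of $w$; in both cases there is $r$ with $v+rw=0$ --- and yet $(a^{(1)})_1=(a^{(2)})_1=0$. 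So what your argument actually proves is that $\bigl((a^{(i,j)})_k\bigr)_j$ is strictly decreasing as long as its entries are nonzero (in particular the case $j=1$ is fine); the full strictness asserted in the statement needs the additional, implicit hypothesis that $(a^{(i,\epsilon_i-1)})_k\neq 0$ for all $k\in[u_i]$. The same implicit nonvanishing is needed in your converse, where \ref{P1} does not ``follow routinely'' from weak monotonicity alone if some $a^{(i,j)}=0$. Either add this hypothesis explicitly or weaken the conclusion to strict decrease along the nonzero part of each coordinate sequence; as written, the step is a genuine gap.
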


The following is a reformulation of Property~\ref{PP} for elements of $0:_{R^n}J$.

\begin{corollary}\label{lemadd}
Assume that $\sigma$ is modular. 
If $v,w \in 0:_{R^n}J$ and $i \in [u]$ satisfy $\sigma(v)_i \neq 0$ and $\sigma(w)_i \neq 0$, then there exists a unit $r \in R$ with $\sigma(v-rw)_i=0$.
\end{corollary}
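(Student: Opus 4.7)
The plan is to reduce the problem to the local factor in which the coordinate~$i$ lives, via Theorem~\ref{thm:supp*}, and to apply the iterated reformulation of Property~\ref{PP} recorded in the remark following the definition of modular support; the socle hypothesis on $w$ will then force the reducer produced there to be a unit in that local factor, which is all we need.

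First I would invoke Theorem~\ref{thm:supp*} to write $\sigma=\sigma_1\times\cdots\times\sigma_\ell$, after a permutation of the coordinates of $\mathbb{N}^u$, with each $\sigma_j\colon R_j^n\to\mathbb{N}^{u_j}$ a modular support and the images of the $\sigma_j$'s supported on pairwise disjoint blocks of coordinates. The index~$i$ then lies in exactly one block, say the $j$-th, at some local coordinate $i'\in[u_j]$, and $\sigma(x)_i=\sigma_j(x_j)_{i'}$ for every $x\in R^n$. Assume first that $\sigma(v)_i\le\sigma(w)_i$. By the iterated reformulation of~\ref{PP} there exists $s=(s_1,\dots,s_\ell)\in R$ with $\sigma(v+sw)_i=0$, equivalently $\sigma_j(v_j+s_jw_j)_{i'}=0$.

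The key step is to show that $s_j$ must be a unit in $R_j$. Otherwise $s_j\in\mathfrak{M}_j$ and, since $w_j\in 0:_{R_j^n}\mathfrak{M}_j$, we would have $s_jw_j=0$, giving $\sigma_j(v_j+s_jw_j)_{i'}=\sigma_j(v_j)_{i'}=\sigma(v)_i\neq 0$, a contradiction. Having secured that $s_j$ is a unit in $R_j$, define $r\in R$ to be the element whose $j$-th component is $-s_j$ and whose other components are all equal to~$1$; this is a unit in~$R$, and $\sigma(v-rw)_i=\sigma_j(v_j+s_jw_j)_{i'}=0$. The opposite case $\sigma(w)_i<\sigma(v)_i$ reduces to the one just treated by swapping the roles of $v$ and $w$: one gets a unit $r'\in R$ with $\sigma(w-r'v)_i=0$, and then $r:=(r')^{-1}$ is a unit, $v-rw=-(r')^{-1}(w-r'v)$ has the same support as $w-r'v$ by Property~\ref{prsupp1}, whence $\sigma(v-rw)_i=0$.

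The hard part will be the upgrade from the a priori arbitrary reducer~$s\in R$ produced by~\ref{PP} to a global unit of~$R$. Without the assumption $w\in 0:_{R^n}J$ there is no reason to expect $s_j$ to be invertible in~$R_j$, and the claimed conclusion would in general fail; this is precisely what the socle hypothesis rules out, and it is the only place where that hypothesis gets used in the argument.
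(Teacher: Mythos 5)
Your proposal is correct and follows essentially the same route as the paper's own proof: reduce to the relevant local factor via Theorem~\ref{thm:supp*}, apply the iterated form of Property~\ref{PP}, use the socle hypothesis to rule out a multiplier in the maximal ideal (hence obtain a unit in that factor, padded with $1$'s to a unit of $R$), and handle the case $\sigma(w)_i<\sigma(v)_i$ by swapping $v$ and $w$ and inverting, using~\ref{prsupp1}. No gaps; the only cosmetic difference is that you keep the product decomposition explicit where the paper phrases the reduction as ``assume $R$ local.''
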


\begin{proof}
By Theorem~\ref{thm:supp*} we may assume without loss of generality that $R$ is a finite local ring. Indeed, let $j\in[\ell]$ be such that $(\sigma_j)_i$ is not identically zero and suppose that $\sigma_j(v_j-r_jw_j)_i=0$ for some $r_j\in R_j$ invertible, then $r=(1,\ldots,1,r_j,1,\ldots,1)\in R$ is invertible and satisfies $\sigma(v-rw)_i=0$.

Assume now that $(R,\mathfrak{M})$ is a finite local ring.
If $\sigma(v)_i\leq\sigma(w)_i$, then by Property~\ref{PP} there is $r\in R$ such that $\sigma(v-rw)_i=0$. If $r\in\mathfrak{M}$, then $rw=0$, hence $\sigma(v)_i=0$, contradicting the assumption in the statement. Therefore $r$ is invertible. Similarly, if $\sigma(w)_i\leq\sigma(v)_i$, then there exists $s\in R$ invertible such that $0=\sigma(w-sv)_i=\sigma(v-s^{-1}w)_i$.
\end{proof}

\section{Codewords  and subcodes of minimal support}\label{sec:3}

In this section we study the codewords and subcodes of minimal support of an $R$-linear code endowed with a modular support. In particular, we establish some properties of the systems of generators of subcodes of minimal support. This allows us to derive properties of the generalized weights, such as monotonicity and a generalization of the Singleton bound.

In the sequel, we follow the notation of the previous sections and let 
$\sigma:R^n \to \N^u$ be a modular support. 
The minimal codewords of a code play a central role in our work. They are defined as follows.

\begin{definition}
Let $C \subseteq R^n$ be a code. We say that $v \in C\setminus\{0\}$ is \textbf{minimal} in $C$ if its support is minimal among the supports of the elements of $C\setminus\{0\}$.
We denote by $\Min(C)$ the set of minimal codewords of $C$.
\end{definition}

\begin{remark}
By definition, $C=0$ has no minimal codewords, i.e., $\Min(0)=\emptyset$.
\end{remark}

We start by observing that a modular support $\sigma: R^n \to \N^u$ that takes values in $\{0,1\}^u$ allows us to associate a matroid to a code $C$. More precisely, the minimal ones among the supports of the codewords of $C$ are the circuits of a matroid. This generalizes the well-known fact that one may associate to a linear block-code the matroid represented by its parity-check matrix, whose circuits correspond to the minimal supports of the nonzero codewords of $C$ with respect to the Hamming weight.

\begin{theorem}
Let $R$ be a finite commutative ring and let $\sigma:R^n \to \{0,1\}^u$ be a modular support. Let $0\neq C \subseteq R^n$ be a code. Then the elements of the set
$$\mC:=\{\sigma(v) \st v \in\Min(C)\}$$
are the circuits of a matroid.
\end{theorem}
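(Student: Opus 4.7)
The plan is to identify $\{0,1\}^u$ with the power set of $[u]$ and verify the three circuit axioms for $\mC$: (C1) the empty set does not belong to $\mC$; (C2) no element of $\mC$ is properly contained in another; (C3) for any distinct $A,B \in \mC$ and any $i \in A \cap B$, there is some $D \in \mC$ with $D \subseteq (A \cup B) \setminus \{i\}$.

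Axiom (C1) is immediate from \ref{P1}: the empty set corresponds to the zero vector, which is excluded from $\Min(C)$. Axiom (C2) follows directly from the defining property of minimal codewords: if $v,w \in \Min(C)$ with $\sigma(v) \le \sigma(w)$, then by minimality of $\sigma(w)$ we must have $\sigma(v) = \sigma(w)$.

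The main step is (C3), which is where modularity enters. Given $v,w \in \Min(C)$ with $\sigma(v) \ne \sigma(w)$ and $i \in [u]$ such that $\sigma(v)_i = \sigma(w)_i = 1$, I will apply Property~\ref{PP} to produce $r \in R$ with $\sigma(v + rw)_i < \sigma(v)_i = 1$, hence $\sigma(v + rw)_i = 0$. By \ref{P3} and \ref{P2} we also have $\sigma(v + rw) \le \sigma(v) \vee \sigma(w)$. To use $v + rw$, I first need to check it is nonzero: if $v + rw = 0$, then $\sigma(v) = \sigma(-rw) \le \sigma(w)$ by \ref{prsupp1} and \ref{P2}, and minimality of both then forces $\sigma(v) = \sigma(w)$, contradicting the hypothesis of (C3). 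So $v + rw$ is a nonzero element of $C$.

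To finish, I will pick a minimal codeword below $v+rw$: starting from $v+rw \in C \setminus\{0\}$, any strictly descending chain of supports in the finite lattice $\{0,1\}^u$ terminates, yielding some $u' \in \Min(C)$ with $\sigma(u') \le \sigma(v+rw)$. Then $\sigma(u')_i = 0$ and $\sigma(u') \le \sigma(v) \vee \sigma(w)$, so $\sigma(u') \in \mC$ witnesses the elimination axiom. I expect no serious obstacle here: the modular axiom is engineered to produce exactly the cancellation needed for circuit elimination, and the $\{0,1\}$-valuedness of $\sigma$ makes the comparison $\sigma(v)_i \le \sigma(w)_i$ in \ref{PP} automatic whenever both are nonzero, which is what makes the argument go through uniformly.
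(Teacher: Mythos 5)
Your proposal is correct and follows essentially the same route as the paper: identify the $\{0,1\}$-valued supports with subsets of $[u]$, note (C1) and (C2) are immediate from \ref{P1} and minimality, and prove the elimination axiom (C3) by using modularity \ref{PP} to cancel the $i$-th coordinate, checking $v+rw\neq 0$ exactly as the paper does, and then passing to a minimal codeword whose support lies below $\sigma(v+rw)\le\sigma(v)\vee\sigma(w)$. The only cosmetic differences are the sign convention ($v+rw$ versus $v-rw$) and your explicit finiteness argument for the existence of a minimal codeword below a nonzero codeword, which the paper leaves implicit.
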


\begin{proof}
If a support $\sigma$ takes values in $\{0,1\}^u$, then the support of a vector can be naturally identified with a subset of $[u]$. In order to show that $\mC$ is the set of circuits of a matroid, we check the circuit axioms as stated in~\cite[page 9]{oxley}.

Properties (C1) and~(C2) are immediate to verify.
To see that Property (C3) holds, suppose that $\sigma(v),\sigma(w)\in \mC$, that $\sigma(v)\neq\sigma(w)$, and that $(\sigma(v)\wedge\sigma(w))_i \neq 0$.
By repeatedly applying Property~\ref{PP} and up to exchanging the role of $v$ and $w$, one sees that there exists $r \in R$ with $\sigma(v-rw)_i=0$. We claim that $v-rw \neq 0$. Indeed, if $v=rw$ then we would have $\sigma(v) = \sigma(rw) \le \sigma(w)$. Since $\sigma(w)$ is minimal by assumption and $v \neq 0$, it must be that $\sigma(v)=\sigma(w)$, a contradiction.

Since $v-rw \neq 0$, we have $\sigma(v-rw) \neq 0$. Fix $z\in C$ with $\sigma(z)\in\mC$, $\sigma(z) \le \sigma(v-rw)$. We have
\begin{equation}
\sigma(z) \le \sigma(v-rw) \le \sigma(v) \vee \sigma(-rw) \le \sigma(v) \vee \sigma(w).
\end{equation}
Moreover, $0=\sigma(v-rw)_i \ge \sigma(z)_i$. This establishes Property~(C3).
\end{proof}

We start our study of the minimal codewords by showing that the minimal codewords of a code $C$ coincide with those of its socle. We also show that the minimal codewords are determined by their support, up to multiplication by a unit.

\begin{theorem}\label{lemma:minsupp}
Let $C\subseteq R^n$ be a code and assume that $\sigma$ is modular. The following hold.
\begin{enumerate}[label={(\arabic*)}]
\item \label{lemma:minsupp1} The set of minimal codewords of $C$ is
\begin{eqnarray*}
\Min(C) & = & \bigcup_{i=1}^\ell (0\times\ldots\times 0\times \Min(C_i)\times 0\times\ldots\times 0)  \\
 & \subseteq & \bigcup_{i=1}^\ell (0\times\ldots\times 0\times (0:_{C_i}\mathfrak{M}_i)\times 0\times\ldots\times 0)\;\; \subseteq\;\; 0:_{C}J.
\end{eqnarray*}
\item \label{lemma:minsupp2} In particular, $$\Min(C)=\Min(0:_C J).$$
\item \label{lemma:minsupp3} If $v,w\in\Min(C)$ are minimal codewords with $\sigma(v)=\sigma(w)$, then $v=rw$ for some invertible $r\in R$.
\end{enumerate}
\end{theorem}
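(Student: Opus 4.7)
My plan is to reduce to the finite local case via Theorem~\ref{thm:supp*} and then exploit Property~\ref{PP} together with the nilpotency of the maximal ideal to pin down minimal codewords.

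For part~\ref{lemma:minsupp1}, Theorem~\ref{thm:supp*} decomposes $\sigma = \sigma_1 \times \cdots \times \sigma_\ell$ on pairwise disjoint coordinate blocks of $\N^u$. Let $v = (v_1, \ldots, v_\ell) \in \Min(C)$. If two components $v_i, v_k$ with $i \neq k$ were both nonzero, then $e_i v \in C \setminus \{0\}$ would satisfy $\sigma(e_i v) = \sigma_i(v_i) \lneq \sigma(v)$, since $\sigma_k(v_k) \neq 0$ populates coordinates disjoint from the block carrying $\sigma_i$; this contradicts minimality. Hence $v = e_i v$ for a unique $i$, and a parallel comparison shows that $v \in \Min(C)$ if and only if $v_i \in \Min(C_i)$, yielding the first equality.

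To prove $\Min(C_i) \subseteq 0:_{C_i} \mathfrak{M}_i$, I would reduce to a finite local ring $(R, \mathfrak{M})$ and argue by contradiction: suppose $v \in \Min(C)$ with $\mathfrak{M} v \neq 0$. Since $\mathfrak{M}$ is nilpotent, let $k \geq 1$ be the largest integer with $\mathfrak{M}^k v \neq 0$, and choose $\alpha \in \mathfrak{M}^k$ with $w := \alpha v \neq 0$. By construction $\mathfrak{M} w \subseteq \mathfrak{M}^{k+1} v = 0$, and $\sigma(w) \leq \sigma(v)$ by \ref{P2}, so minimality of $v$ forces $\sigma(w) = \sigma(v)$. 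Fix any $j$ with $\sigma(v)_j \neq 0$; applying Property~\ref{PP} repeatedly produces $r \in R$ with $\sigma(v + rw)_j = 0$, while \ref{P2} and \ref{P3} give $\sigma(v + rw) \leq \sigma(v) \vee \sigma(w) = \sigma(v)$. Thus $\sigma(v + rw) \lneq \sigma(v)$, and minimality of $v$ forces $v + rw = 0$, i.e.\ $v = -rw$. But then $\mathfrak{M} v \subseteq \mathfrak{M} w = 0$, a contradiction.

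Parts~\ref{lemma:minsupp2} and~\ref{lemma:minsupp3} follow quickly. For part~\ref{lemma:minsupp2}, part~\ref{lemma:minsupp1} gives $\Min(C) \subseteq \Min(0:_C J)$; conversely, any $v \in \Min(0:_C J) \setminus \Min(C)$ would admit $u \in C \setminus \{0\}$ with $\sigma(u) \lneq \sigma(v)$, and picking $u'$ minimal in $C \setminus \{0\}$ subject to $\sigma(u') \leq \sigma(u)$ places $u' \in \Min(C) \subseteq 0:_C J$ with $\sigma(u') \lneq \sigma(v)$, contradicting $v \in \Min(0:_C J)$. For part~\ref{lemma:minsupp3}, part~\ref{lemma:minsupp1} puts $v$ and $w$ in the same block $0:_{C_i} \mathfrak{M}_i$ (the common block is forced by $\sigma(v) = \sigma(w)$), so Corollary~\ref{lemadd} produces a unit $r \in R$ with $\sigma(v - rw)_j = 0$ at any $j$ with $\sigma(v)_j \neq 0$; the same estimate gives $\sigma(v - rw) \lneq \sigma(v)$, and minimality forces $v = rw$. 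The main obstacle is the escalation from nilpotency of $\mathfrak{M}$ to producing a companion $w$ of the same support as $v$ that is already annihilated by $\mathfrak{M}$; once such a $w$ is in hand, Property~\ref{PP} delivers the contradictions driving all three parts.
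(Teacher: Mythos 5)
Your proposal is correct; all three parts go through, and the skeleton (decompose $\sigma$ via Theorem~\ref{thm:supp*}, reduce to a finite local ring, then exploit Property~\ref{PP}) matches the paper. The interesting divergence is in the two "hard" steps. For the inclusion $\Min(C)\subseteq 0:_CJ$ in the local case, the paper does not use nilpotency at all: it takes any $r\in R$ with $rv\neq 0$, notes $\sigma(rv)=\sigma(v)$ by minimality, uses \ref{PP} to get $s$ with $v-srv=0$, and then $1-sr\in 0:_Rv\subseteq\mathfrak{M}$ forces $r$ to be a unit — so every element of $\mathfrak{M}$ kills $v$. Your route instead invokes nilpotency of $\mathfrak{M}$ to build the socle companion $w=\alpha v$ with $\mathfrak{M}w=0$ and $\sigma(w)=\sigma(v)$, and then modularity forces $v=-rw$, hence $\mathfrak{M}v=0$; this is valid (in a finite local ring $\mathfrak{M}$ is indeed nilpotent, and the repeated application of \ref{PP} to reach $\sigma(v+rw)_j=0$ is exactly the Remark following the definition of modularity), but it needs the extra nilpotency input where the paper only needs $0:_Rv\subseteq\mathfrak{M}$. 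For part~\ref{lemma:minsupp3}, your argument is actually slicker than the paper's: you use part~\ref{lemma:minsupp1} to place $v,w\in 0:_{R^n}J$ and then Corollary~\ref{lemadd} hands you a unit $r$ with $\sigma(v-rw)_j=0$ directly, whereas the paper applies \ref{PP} in both directions, gets $(1-rs)v=0$, and reconstructs an invertible multiplier by analyzing the annihilator $0:_Rv$ componentwise; since Corollary~\ref{lemadd} is proved before this theorem there is no circularity. Parts~\ref{lemma:minsupp1} (first equality) and \ref{lemma:minsupp2} are essentially the paper's arguments; the only (harmless) omission is the final containment $\bigcup_i(0\times\ldots\times(0:_{C_i}\mathfrak{M}_i)\times\ldots\times 0)\subseteq 0:_CJ$, which is immediate from $0:_CJ=(0:_{C_1}\mathfrak{M}_1)\times\ldots\times(0:_{C_\ell}\mathfrak{M}_\ell)$.
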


\begin{proof}
\begin{enumerate}[label={(\arabic*)}]
\item By Theorem~\ref{thm:supp*}, up to a permutation of the coordinates of $\N^u$, $\sigma$ decomposes as a product $\sigma=\sigma_1\times\ldots\times\sigma_\ell$, where
each $\sigma_i$ is a modular support.
Let $i \in [\ell]$ and $v=(v_1,\ldots,v_\ell)\in\Min(C)$ with $v_i\neq 0$.
Then $0\neq \sigma(e_iv)\leq\sigma(v)$, hence $\sigma(e_iv)=\sigma(v)$. In particular, $\sigma_j(v_j)=0$ for all $j\neq i$, hence $v_j=0$ for all $j\neq i$. Therefore, $v=e_iv$ and $v_i\in\Min(C_i)$. This proves the equality in the statement. 

Suppose now that $(R,\mathfrak{M})$ is a finite local ring and $v\in\Min(C)$. If $r\in R$ is such that $rv\neq 0$, then $\sigma(v)=\sigma(rv)$. Since  $\sigma$ is modular, there exists $s\in R$ such that $\sigma(v-srv)<\sigma(v)$, hence $v-srv=0$ by the minimality of $\sigma(v)$. Hence $1-sr\in 0:_{R} v\subseteq\mathfrak{M}$. This shows that $sr\not\in\mathfrak{M}$, hence $r\not\in\mathfrak{M}$. Therefore, $v\in 0:_C\mathfrak{M}$, which proves the second inclusion. 

The inclusion follows from the fact that 
$0:_C J=(0:_{C_1}\mathfrak{M}_1)\times\ldots\times (0:_{C_\ell}\mathfrak{M}_\ell)$.

\item  This follows from part~\ref{lemma:minsupp1}, since $0:_{C}J\subseteq C$ implies $$\Min(0:_{C}J)\supseteq \Min(C)\cap (0:_{C}J)=\Min(C),$$ where the equality follows from part (1). Conversely, if $v\in\Min(0:_{C}J)$, then there exists $w\in\Min(C)$ such that $\sigma(w)\leq\sigma(v)$. Since $w\in 0:_{C}J$ by part (1), then $\sigma(w)=\sigma(v)$ and $v\in\Min(C)$.

\item Since $\sigma$ is modular, there exists $r\in R$ such that $\sigma(v-rw)<\sigma(v)$. By the minimality of $\sigma(v)$, $v-rw=0$, hence $v=rw$. Exchanging the roles of $v$ and $w$ one sees that there exists $s\in R$ such that $w=sv$. Therefore, $(1-rs)v=0$, so $1-rs\in 0:_{R}v$. By part~\ref{lemma:minsupp1}, $v=e_iv$ for some $i\in[\ell]$ and  $0:_{R}v=R_1\times\ldots\times R_{i-1}\times\mathfrak{M}_i\times R_{i+1}\times\ldots\times R_\ell$. Since $1-s_ir_i\in\mathfrak{M}_i$,
then $r_i\not\in\mathfrak{M}_i$, hence $r_i$ is invertible. Let $\overline{r}=(1,\ldots,1,r_i,1,\ldots,1)$. Then $v=\overline{r}w$ and $\overline{r}\in R$ is invertible. \qedhere
\end{enumerate}
\end{proof}

Theorem~\ref{lemma:minsupp} implies that a code  generated by its minimal codewords must be a subcode of $0:_{R^n}J$. In the next theorem we prove that every subcode of $0:_{R^n}J$ is generated by its minimal codewords.

\begin{thm}\label{thm:mingens}
Let $0\neq C\subseteq R^n$ be a code and assume that~$\sigma$ is modular. Then $0:_C J$ has a minimal system of generators consisting of codewords that are minimal in $C$. Moreover, every minimal system of generators of $0:_C J$ consisting of minimal codewords has the same cardinality $M(0:_C J)$.
In particular, $C$ has a minimal system of generators consisting of minimal codewords if and only if $C\subseteq 0:_{R^n}J$. If this is the case, then every such minimal system of generators has cardinality $M(C)$.
\end{thm}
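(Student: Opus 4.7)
The plan is to show that $\Min(C)$ spans $0:_C J$ as an $R$-module, extract a minimal generating subset from it, and then pin down its cardinality via the product decomposition of $\Min(C)$ furnished by Theorem~\ref{lemma:minsupp}\ref{lemma:minsupp1}. The existence assertion in part~(1) is the core; the cardinality assertion in~(2) and the equivalence in~(3) are then bookkeeping.

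For the spanning claim I proceed by induction on $\wt(v) = |\sigma(v)|$ for $v \in 0:_C J \setminus \{0\}$. A finite poset argument produces $u \in \Min(C)$ with $\sigma(u) \leq \sigma(v)$: choose $u \in C \setminus \{0\}$ with $\sigma(u) \leq \sigma(v)$ minimal among such elements; if some $u' \in C \setminus \{0\}$ satisfied $\sigma(u') < \sigma(u)$, then $\sigma(u') \leq \sigma(v)$ would contradict that choice. By Theorem~\ref{lemma:minsupp}\ref{lemma:minsupp1}, $u \in 0:_{R^n} J$, so both $v$ and $u$ are annihilated by $J$. Pick any coordinate $i$ with $\sigma(u)_i \neq 0$; since $\sigma(u) \leq \sigma(v)$, also $\sigma(v)_i \neq 0$, and Corollary~\ref{lemadd} supplies a unit $r \in R$ with $\sigma(v - ru)_i = 0$. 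At every other coordinate~$j$, the bound $\sigma(ru)_j \leq \sigma(u)_j \leq \sigma(v)_j$ combined with Property~\ref{P3} gives $\sigma(v - ru)_j \leq \sigma(v)_j$, so $\wt(v - ru) < \wt(v)$. Since $r$ is a unit, $ru \in \Min(C)$ by Lemma~\ref{prsupp}\ref{prsupp1}, and by induction $v - ru$ is an $R$-linear combination of minimal codewords; hence so is $v$. Because $\Min(C)$ is finite, any generating subset contains a minimal one, which proves part~(1).

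For part~(2), let $v_1, \ldots, v_k$ be any minimal generating system of $0:_C J$ with each $v_j \in \Min(C)$. By Theorem~\ref{lemma:minsupp}\ref{lemma:minsupp1}, each $v_j$ lies in a single factor: there is a unique $i_j \in [\ell]$ with $v_j = e_{i_j} v_j$. Setting $S_i := \{v_j : i_j = i\}$ and projecting onto the $i$th factor of $0:_C J = \prod_{i=1}^\ell 0:_{C_i} \mathfrak{M}_i$ shows that $S_i$ generates the $R_i/\mathfrak{M}_i$-vector space $0:_{C_i} \mathfrak{M}_i$; minimality of the whole system forces each $S_i$ to be a minimal generating set inside its factor, hence an $R_i/\mathfrak{M}_i$-basis by Theorem~\ref{matsumura}, of cardinality $\mu(0:_{C_i} \mathfrak{M}_i)$. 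Summing over $i$ gives $k = \sum_{i=1}^\ell \mu(0:_{C_i} \mathfrak{M}_i) = M(0:_C J)$.

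Part~(3) is then formal: the forward direction uses $\Min(C) \subseteq 0:_{R^n} J$ from Theorem~\ref{lemma:minsupp}\ref{lemma:minsupp1}, so any code generated by its minimal codewords must lie in $0:_{R^n} J$; conversely, if $C \subseteq 0:_{R^n} J$ then $C = 0:_C J$ and parts~(1)--(2) apply to $C$ itself, yielding a minimal system of generators by minimal codewords of cardinality $M(0:_C J) = M(C)$. The main obstacle is precisely the induction step above: engineering a \emph{strict} weight decrease via a single minimal codeword while remaining inside $0:_C J$. This is where modularity of $\sigma$ is essential, since without Property~\ref{PP} the one-coordinate cancellation delivered by Corollary~\ref{lemadd} fails; once that step is secured, the rest is routine use of the product decomposition.
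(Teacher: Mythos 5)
Your proposal is correct and takes essentially the same route as the paper: a descent argument in which one subtracts a suitable multiple of a minimal codeword to strictly decrease the codeword one is reducing — you phrase it as induction on $\wt(v)$ and do the cancellation with Corollary~\ref{lemadd}, while the paper takes a minimal counterexample with respect to supports and applies Property~\ref{PP} directly, building the needed invertible scalar by hand — followed by the same factor-wise cardinality count via Theorem~\ref{lemma:minsupp}\ref{lemma:minsupp1} and Theorem~\ref{matsumura}, and the same bookkeeping for the final equivalence. I see no gaps.
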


\begin{proof}
By Theorem~\ref{lemma:minsupp}\ref{lemma:minsupp2} we have $\Min(C)=\Min(0:_C J)$. 
Let $D=0:_C J$. Since every system of generators of $D$ contains a minimal one, in order to show that $D$ has a minimal system of generators consisting of minimal codewords, it suffices to show that the elements of~$\Min(D)$ generate $D$.

Let $v\in D$ and suppose by contradiction that $v$ is a codeword of minimal support among those in the set $D\setminus\langle\Min(D)\rangle$. Since $v\not\in\Min(D)$, then there is a $w\in\Min(D)$ such that $\sigma(w)<\sigma(v)$. Let $i\in[u]$ such that $\sigma(w)_i\neq 0$. 
By Theorem~\ref{lemma:minsupp}\ref{lemma:minsupp1}, $w=e_jw$ for some $j\in[\ell]$ and $w_j\mathfrak{M}_j=0$. By Property~\ref{PP} there exists $r\in R$ such that $\sigma(w-rv)_i=0$. Since 
\begin{equation}\label{eqn:rj}
0=\sigma(w-rv)_i=\sigma(e_jw-e_jrv)_i<\sigma(e_jw)_i=\sigma(w)_i,   
\end{equation}
then $r_j\not\in\mathfrak{M}_j$. Indeed, $v\in 0:_C J$ implies that $v_j\in 0:_{C_j}\mathfrak{M}_j$. Hence, if $r_j\in\mathfrak{M}_j$, then $e_jw-e_jrv=e_jw$, contradictiong equation (\ref{eqn:rj}).
Let $s=(1,\ldots,1,r_j,1,\ldots,1)\in R$. Then $s\in R$ is invertible and $$\sigma(w-sv)_i=\sigma(e_jw-e_jsv)_i=\sigma(e_jw-e_jrv)_i=\sigma(w-rv)_i=0.$$
Then $\sigma(w-sv)<\sigma(v)$, hence  by the minimality of $\sigma(v)$ among the supports of elements of $D\setminus\langle\Min(D)\rangle$ we have that $w-sv\in\langle\Min(D)\rangle$. Since $s$ is invertible, this implies that $v\in\langle\Min(D)\rangle$, which contradicts the assumption that $v\in D\setminus\langle\Min(D)\rangle$.

In order to prove that every minimal system of generators of $D$ consisting of minimal codewords has cardinality $M(D)$, write $D=D_1\times\ldots\times D_\ell$. By Theorem~\ref{lemma:minsupp}\ref{lemma:minsupp1}, every minimal codeword $v$ of $D$ satisfies $v=e_iv$ for some $i\in[\ell]$. Therefore, each minimal system of generators of $D$ consisting of minimal codewords is the union for $i\in[\ell]$ of minimal systems of generators of $0\times\ldots\times 0\times D_i\times 0\times\ldots\times 0$. Since $R_i$ is a finite local ring, the cardinality of any minimal system of generators of $0\times\ldots\times 0\times D_i\times 0\times\ldots\times 0$ is $\mu(D_i)$ by Theorem~\ref{matsumura}. Therefore, the cardinality of a minimal system of generators of $D$ consisting of minimal codewords is $\mu(D_1)=\ldots+\mu(D_\ell)=M(D)$.
\end{proof}

We stress that not every minimal system of generators of a code $C\subseteq 0:_{R^n}J$ consists of minimal codewords.

\begin{example}
The element $(2,3)$ is not an element of minimal support in $C=\langle(2,3)\rangle\subseteq\Z_6^2$. However, $(2,0),(0,3)$ are elements of minimal support that generate $C$. Here $C_1=\langle(2,0)\rangle\subseteq\Z_3^2$, $C_2=\langle(0,1)\rangle\subseteq\Z_2^2$, and $M(C)=\mu(C_1)+\mu(C_2)=2$.
\end{example}

The following property of minimal codewords will be needed in the next proposition. 

\begin{lemma}\label{lem:replace} 
Let $\sigma$ be a modular support on $R^n$.
Let $C \subseteq 0:_{R^n}J$ be a code and let $v \in C$ with $\sigma(v)_i \neq 0$. Then there exists $w \in\Min(C)$ with $\sigma(w) \le \sigma(v)$ and $\sigma(w)_i \neq 0$.
\end{lemma}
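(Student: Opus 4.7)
The plan is to select $w$ to be a codeword whose support is minimal among those that sit below $\sigma(v)$ and are nonzero in the $i$-th coordinate, and then show that such a $w$ must in fact be minimal in all of $C$. The modular structure of $\sigma$, together with the containment $C\subseteq 0:_{R^n}J$, is exactly what allows us to upgrade ``minimal under a constraint'' to ``minimal overall''.

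More precisely, I would set
\[
S:=\{z\in C\setminus\{0\}\st \sigma(z)\le \sigma(v),\ \sigma(z)_i\ne 0\}.
\]
The set $S$ is nonempty since $v\in S$, and finite because $C\subseteq R^n$ is finite. Therefore I can pick $w\in S$ whose support $\sigma(w)\in\N^u$ is minimal in $S$ with respect to the componentwise order of $\N^u$. By construction $\sigma(w)\le\sigma(v)$ and $\sigma(w)_i\ne 0$, so the only thing left to check is that $w\in\Min(C)$.

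Suppose for contradiction that $w\notin\Min(C)$. Then there exists $z\in C\setminus\{0\}$ with $\sigma(z)\lneq\sigma(w)$. I split on the value of $\sigma(z)_i$. If $\sigma(z)_i\ne 0$, then $z\in S$ and $\sigma(z)\lneq\sigma(w)$ contradicts the minimality of $w$ in $S$. The interesting case, and what I expect to be the main obstacle, is $\sigma(z)_i=0$: here $z$ is not in $S$, and I need to produce from $w$ and $z$ a new element $w'\in S$ with $\sigma(w')\lneq\sigma(w)$.

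To do this I use that $C\subseteq 0:_{R^n}J$, which lets me invoke Corollary~\ref{lemadd}. Since $z\ne 0$ and $\sigma(z)_i=0$, there exists some $j\ne i$ with $\sigma(z)_j\ne 0$; as $\sigma(z)\le\sigma(w)$ we also have $\sigma(w)_j\ne 0$. Corollary~\ref{lemadd} then yields a unit $r\in R$ with $\sigma(w-rz)_j=0$. Set $w':=w-rz\in C$. Property~\ref{P3} gives $\sigma(w')_k\le\sigma(w)_k\vee\sigma(rz)_k\le\sigma(w)_k$ for every $k$, using $\sigma(rz)\le\sigma(z)\le\sigma(w)$, so $\sigma(w')\le\sigma(w)\le\sigma(v)$; the equality $\sigma(w')_j=0\ne\sigma(w)_j$ makes this strict. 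Finally, Property~\ref{prsupp2} applied to $w$ and $-rz$ at coordinate $i$ (where $\sigma(-rz)_i=\sigma(rz)_i\le\sigma(z)_i=0$ and $\sigma(w)_i\ne 0$) forces $\sigma(w')_i\ne 0$, so in particular $w'\ne 0$ and $w'\in S$. This contradicts the minimality of $w$ in $S$, completing the proof.
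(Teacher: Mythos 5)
Your proof is correct and takes essentially the same route as the paper's: both arguments descend by using Corollary~\ref{lemadd} to kill a coordinate $j$ where the smaller codeword is supported while Lemma~\ref{prsupp}\ref{prsupp2} keeps coordinate $i$ alive, the only difference being that you organize the descent as choosing a support-minimal element of the constrained set $S$ instead of the paper's induction on $|\sigma(v)|$ (and you skip the paper's preliminary reduction to the local case, which is indeed unnecessary here since $C\subseteq 0:_{R^n}J$ lets Corollary~\ref{lemadd} apply directly).
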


\begin{proof}
Write $C=C_1\times\ldots\times C_\ell$.
By Theorem \ref{thm:supp*}, we may assume without loss of generality that $(R,\mathfrak{M})$ is a finite local ring. Indeed, if $\sigma=\sigma_1\times\ldots\sigma_{\ell}$ and $\sigma(v)_i=\sigma_k(v_k)_i$ for some $k\in[\ell]$, then $e_kv\in 0\times\ldots\times 0\times R_k\times 0\times\ldots\times 0$ has $\sigma(e_kv)\leq\sigma(v)$ and $\sigma(e_kv)_i\neq 0$. If $w_k\in \Min(C_k)$ and $\sigma_k(w_k)_i\neq 0$, then $\sigma_k(w_k)\leq\sigma_k(v_k)$, therefore $$\sigma(0,\ldots,0,w_k,0,\ldots,0)\leq\sigma(e_kv)\leq\sigma(v)$$
and $\sigma(0,\ldots,0,w_k,0,\ldots,0)_i=\sigma_k(w_k)_i\neq 0$.
Moreover, $(0,\ldots,0,w_k,0,\ldots,0)\in\Min(C)$ as $w_k\in\Min(C_k)$.

Proceed by induction on $|\sigma(v)|$. Let $v'\in\Min(C)$ with $\sigma(v') \le \sigma(v)$. If $\sigma(v')_i \ne 0$ then let $w=v'$, else fix $j \in [u]$ with $\sigma(v')_j \neq 0$. By Corollary \ref{lemadd} there exists $r\in R$ invertible such that $\sigma(v'-rv)_j=0$. Hence $\sigma(v'-rv)<\sigma(v)$ and $\sigma(v'-rv)_i\neq 0$, by Lemma~\ref{prsupp}\ref{prsupp2}. 
So we may apply the induction hypothesis to $v'-rv$ and obtain $w\in\Min(C)$ such that $\sigma(w)\leq\sigma(v'-rv)\leq\sigma(v)$ and $\sigma(w)_i\neq 0$.
\end{proof}

We now prove that modularity allows us to produce minimal system of generators of submodules of $0:_{R^n}J$, whose supports have a shape which is reminiscent of the rows of a matrix in reduced row-echelon form.

\begin{proposition}\label{special_gens} 
Let $j\geq 1$ and let $C\in\mathcal{S}_j(0:_{R^n}J)$.
If $\sigma$ is modular, then $C$ has a minimal system of generators $\{v_1,\ldots,v_j\}$ such that for all $i \in [j]$ there exists $k_i\in [u]$ with $\sigma(v_i)_{k_i} \neq 0$ and $\sigma(v_h)_{k_i}=0$ for all $h \neq i$. 
\end{proposition}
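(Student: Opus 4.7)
My plan is to reduce to the case of a finite local ring via Theorem~\ref{thm:supp*}, prove that case by induction on $j$, and then glue together local solutions using the product structure. The reduction is as follows. By Theorem~\ref{thm:supp*}, after a harmless permutation of the coordinates of $\N^u$, we may write $\sigma = \sigma_1 \times \cdots \times \sigma_\ell$ with each $\sigma_p \colon R_p^n \to \N^{u_p}$ a modular support on the local ring $R_p$, and correspondingly $C = C_1 \times \cdots \times C_\ell$ with each $C_p \subseteq 0:_{R_p^n}\mathfrak{M}_p$ a vector space over $R_p/\mathfrak{M}_p$ of dimension $n_p := \mu(C_p)$. Then $M(C) = \sum_p n_p$, and a direct argument shows $\mu(C) = \max_p n_p$, so the hypothesis $C \in \mathcal{S}_j(0:_{R^n}J)$ yields $\max_p n_p \leq j \leq \sum_p n_p$.

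For the local case, assume $(R,\mathfrak{M})$ is a finite local ring with $C \subseteq 0:_{R^n}\mathfrak{M}$; then $j = \dim_{R/\mathfrak{M}} C$ by Theorem~\ref{matsumura}. I argue by induction on $j$. The case $j=1$ is trivial. For $j \geq 2$, pick $v_1 \in \Min(C)$ and a coordinate $k_1$ with $\sigma(v_1)_{k_1}\neq 0$, and extend $v_1$ to a basis $\{v_1,w_2,\ldots,w_j\}$ of $C$. Since $C \subseteq 0:_{R^n}\mathfrak{M}$, Corollary~\ref{lemadd} applies, yielding for each $i \geq 2$ with $\sigma(w_i)_{k_1}\neq 0$ a unit $r_i \in R$ such that $\sigma(w_i - r_iv_1)_{k_1}=0$; replacing $w_i$ by $w_i' := w_i - r_iv_1$ produces a basis in which only $v_1$ is nonzero at coordinate $k_1$. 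The subspace $C' := \langle w_2',\ldots,w_j'\rangle$ has dimension $j-1$ and lies in $\{v \in C : \sigma(v)_{k_1}=0\}$ by Property~\ref{prsupp3}, so the induction hypothesis yields a basis $\{v_2,\ldots,v_j\}$ of $C'$ with pivots $k_2,\ldots,k_j$ enjoying the echelon property inside $C'$, automatically distinct from $k_1$. A second application of Corollary~\ref{lemadd} lets me subtract suitable unit multiples of $v_2,\ldots,v_j$ from $v_1$ to clear every coordinate $k_i$ with $\sigma(v_1)_{k_i}\neq 0$; Property~\ref{prsupp2} ensures that $k_1$ remains in the support of the modified $v_1$, delivering a basis of $C$ with the full echelon property.

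For the gluing, apply the local case to each $C_p$ to obtain a basis $\{b_1^{(p)},\ldots,b_{n_p}^{(p)}\}$ of $C_p$ with local echelon pivots $K_1^{(p)},\ldots,K_{n_p}^{(p)} \in [u_p]$. Since $\max_p n_p \leq j \leq \sum_p n_p$, a standard combinatorial construction (for example cyclic placement) produces a matrix $A \in \{0,1\}^{\ell \times j}$ with $p$-th row sum equal to $n_p$ and every column sum at least $1$. For each $p$, fix a bijection $\pi_p \colon \{i : A_{p,i}=1\} \to [n_p]$ and set
\[
v_i := \sum_{p : A_{p,i}=1} e_p\,b_{\pi_p(i)}^{(p)} \in C.
\]
The criterion that a subset of $C = \prod_p C_p$ is a minimal system of generators if and only if each element has a nonzero projection and the collection of nonzero $p$-projections forms a basis of $C_p$ for every $p$ then delivers minimality of $\{v_1,\ldots,v_j\}$. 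For the echelon property, for each $i$ pick any $p_i$ with $A_{p_i,i}=1$ and let $k_i$ be the coordinate of $[u]$ corresponding to $K_{\pi_{p_i}(i)}^{(p_i)}$: the local echelon at $C_{p_i}$ gives $\sigma(v_i)_{k_i}\neq 0$, while for $h \neq i$ either $v_h|_{p_i}=0$ or $\pi_{p_i}(h)\neq \pi_{p_i}(i)$, and in both cases $\sigma(v_h)_{k_i}=0$.

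The main obstacle I anticipate is the gluing step: one must pin down the minimality criterion for products of vector spaces (in particular the identity $\mu(C) = \max_p n_p$, which controls the admissible range of $j$) and carefully interlace the local echelon bases into a single minimal system of generators of the prescribed cardinality. The local induction itself is conceptually clear once Corollary~\ref{lemadd} is available, as it amounts to performing echelon-like row operations compatible with the modular support.
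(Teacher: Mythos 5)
Your argument is correct, but it is organized differently from the paper's. The paper proves the proposition by a single induction on $j$ carried out directly over the arbitrary finite commutative ring $R$: starting from \emph{any} minimal system of generators $\{w_1,\ldots,w_j\}$ of $C$, it uses Corollary~\ref{lemadd} (which only requires the vectors to lie in $0:_{R^n}J$, not that $R$ be local) to clear a pivot coordinate of $w_1$ from the other generators, applies the inductive hypothesis to $\langle w_2',\ldots,w_j'\rangle$, and then clears the new pivots from $w_1$, exactly as in your local step; since these elementary transformations preserve both generation and minimality, the cardinality $j$ is carried along automatically and no case distinction on $j$ is ever needed. Your route instead first decomposes $\sigma$ and $C$ via Theorem~\ref{thm:supp*}, proves the local case (which coincides with the paper's induction), and then interlaces local echelon bases through a $0$--$1$ matrix with prescribed row sums and positive column sums. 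This works, and it even yields a byproduct the paper does not state: for $C\subseteq 0:_{R^n}J$ one has $C\in\mathcal{S}_j(0:_{R^n}J)$ precisely when $\max_p\mu(C_p)\le j\le M(C)$, together with an explicit echelon system of each such cardinality. The price is extra machinery (the identity $\mu(C)=\max_p\mu(C_p)$, the appeal to Theorem~\ref{lem:minspecial} for the upper bound on $j$, and the gluing combinatorics) that the paper's direct induction avoids, since Corollary~\ref{lemadd} never needed locality in the first place. One small blemish: the minimality criterion you invoke for products is not an ``if and only if''---a minimal system of generators of $C_1\times\cdots\times C_\ell$ need not project to a basis at every factor (e.g.\ $\bigl((1,0),e_1\bigr),\bigl((0,1),e_2\bigr),\bigl((1,1),e_3\bigr)$ minimally generate a product of a $2$- and a $3$-dimensional $\F_2$-space); only the sufficiency direction holds, and fortunately that is the only direction your construction uses.
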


\begin{proof}
Any system of generators with the required property is minimal, since  for all $i\in[j]$ $$\sigma(v_i)\not\leq\bigvee_{h\neq i}\sigma(v_h).$$ We prove that $C$ has such a system of generators by induction on $j$. The statement is trivial if $j=1$. Hence assume $j\ge 2$ and fix a minimal system of generators $\{w_1,\ldots,w_j\}$ of~$C$. Up to permuting the entries in the supports, we may assume without loss of generality that $\sigma(w_1)_1 \neq 0$. By Corollary~\ref{lemadd} there exist $r_2,\ldots,r_j \in R$ with $\sigma(w_i-r_iw_1)_1=0$ for all $i \in \{2,\ldots,j\}$. 
Let $w'_i:=w_i-r_iw_1$ for $i \in \{2,\ldots,j\}$ and observe that $C$ is generated by $\{w_1,w_2',\ldots,w_j'\}$. Moreover, $\sigma(v)_1=0$ for all $v\in C'=\langle w_2',\ldots,w_j'\rangle$ by Lemma~\ref{prsupp}\ref{prsupp3}. We apply the induction hypothesis to the code $C'=\langle w_2',\ldots,w_j'\rangle$, obtaining a system of generators $\{v_2,\ldots,v_j\}$ of $C'$ such that for all $i \in \{2,\ldots,j\}$ there exists $k_i$ with $\sigma(v_i)_{k_i} \neq 0$ and $\sigma(v_h)_{k_i}=0$ for $h \in \{2,\ldots,j\} \setminus \{i\}$. By Corollary~\ref{lemadd} we find $r_2',\ldots,r_j' \in R$ with $\sigma(w_1-r_i'v_i)_{k_i}=0$ for $i \in \{2,\ldots,j\}$. Finally, let $v_1:=w_1-\sum_{i=2}^j r_i'v_i$ and set $k_1=1$. By parts \ref{prsupp2} and \ref{prsupp3} of
Lemma~\ref{prsupp} we have $\sigma(v_1)_{k_1}\neq 0$ and $\sigma(v_1)_{k_i}=0$ for $i \in \{2,\ldots,j\}$.
In addition, $\{v_1,\ldots,v_j\}$ is a system of generators of $C$, since $\{w_1,v_2,\ldots,v_j\}$ is. 
\end{proof}

The proposition also implies that the codomain of a modular support cannot be too small.

\begin{corollary}
If $\sigma: R^n \to \N^u$ is modular, then $u \geq M(0:_{R^n}J)$. In particular, if $R$ is a PIR or $JR^n=0$, then $u\geq \ell n$.
\end{corollary}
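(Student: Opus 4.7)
The plan is to apply Proposition~\ref{special_gens} to $C = 0:_{R^n}J$, taking $j = M(0:_{R^n}J)$. By definition of $M$ together with Theorem~\ref{lem:minspecial}, the code $0:_{R^n}J$ has a minimal system of generators of cardinality $j$, so $0:_{R^n}J \in \mathcal{S}_j(0:_{R^n}J)$. Since $\sigma$ is modular, Proposition~\ref{special_gens} produces a minimal system of generators $v_1,\ldots,v_j$ of $0:_{R^n}J$ together with indices $k_1,\ldots,k_j \in [u]$ such that $\sigma(v_i)_{k_i} \neq 0$ and $\sigma(v_h)_{k_i}=0$ for all $h \neq i$.

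The crux of the argument is the observation that the indices $k_1,\ldots,k_j$ must be pairwise distinct. Indeed, if $k_i = k_{i'}$ for some $i \neq i'$, then the condition applied with the pair $(i,i')$ would give $\sigma(v_{i'})_{k_i} = 0$, contradicting $\sigma(v_{i'})_{k_{i'}} = \sigma(v_{i'})_{k_i} \neq 0$. Since $\{k_1,\ldots,k_j\}$ is then a subset of $[u]$ of cardinality $j$, we conclude $u \geq j = M(0:_{R^n}J)$. This is the bulk of the proof; there is no real obstacle beyond unpacking the statement of Proposition~\ref{special_gens}.

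For the ``in particular'' clause, I would handle the two cases separately. If $R$ is a PIR, then Proposition~\ref{ann} gives $M(0:_{R^n}J) = M(R^n)$, and writing $R = R_1 \times \cdots \times R_\ell$ as a product of finite chain (in particular local) rings, each $R_i^n$ is free of rank $n$ over the local ring $R_i$, so $\mu(R_i^n) = n$ and $M(R^n) = \mu(R_1^n) + \cdots + \mu(R_\ell^n) = n\ell$. If instead $JR^n = 0$, then $0:_{R^n}J = R^n$ and again $M(R^n) = n\ell$ by the same direct computation (with each $R_i$ being a field, as $\mathfrak{M}_i R_i^n = 0$ forces $\mathfrak{M}_i=0$). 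In both cases the first part yields $u \geq n\ell$.
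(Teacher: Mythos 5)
Your proof is correct and follows essentially the same route as the paper: apply Proposition~\ref{special_gens} to $C=0:_{R^n}J$ with $j=M(0:_{R^n}J)$ (the distinctness of the indices $k_1,\ldots,k_j$, which you spell out, is exactly the implicit point there), and then reduce the ``in particular'' clause to $M(0:_{R^n}J)=M(R^n)=\ell n$ via Proposition~\ref{ann} in the PIR case and via $0:_{R^n}J=R^n$ when $JR^n=0$. No gaps.
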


\begin{proof}
Proposition \ref{special_gens} for $C=0:_{R^n}J$ and $j=M(0:_{R^n}J)$ implies that $u\geq M(0:_{R^n}J)$. If $JR^n=0$, then $R^n=0:_{R^n}J$, hence $M(0:_{R^n}J)=M(R^n)$. If $R$ is a PIR, then $M(0:_{R^n}J)=M(R^n)$ by Proposition \ref{ann}. In both cases, one has that $M(0:_{R^n}J)=\ell n$.
\end{proof}

Understanding the subcodes of $C$ generated by minimal codewords allows us to prove that the generalized of weights of $C$ are attained by subcodes of $0:_C J$. In particular, $C$ and its socle have the same generalized weights. 

\begin{prop}\label{prop:replacew/ann}
Let $R$ be a PIR. Suppose that $\sigma$ is modular and 
let $C\subseteq R^n$ be a code. Let $r\in [M(C)]$ and $D\in\mathcal{S}_j(C)$, $j\geq r$, be such that $d_r(C)=|\sigma(D)|$. Then $0:_D J\in\mathcal{S}_i(C)$ for some $i\geq r$ and $d_r(C)=|\sigma(0:_D J)|.$ In particular, $$d_r(C)=d_r(0:_C J).$$
\end{prop}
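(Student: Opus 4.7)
The plan is to deduce everything from Proposition~\ref{ann} combined with the characterization of $d_r$ in Lemma~\ref{lem:genwgt}\ref{lem:genwgt3}. The whole statement is essentially a bookkeeping consequence: passing from $D$ to its socle $0:_D J$ preserves the invariant $M(\cdot)$ over a PIR, and the support only shrinks.

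First I would settle that $0:_D J\in\mathcal{S}_i(C)$ for some $i\ge r$. Since $D\in\mathcal{S}_j(C)$ with $j\ge r$, Theorem~\ref{lem:minspecial} gives $M(D)\ge j\ge r$. Because $R$ is a PIR, Proposition~\ref{ann} applies to $D$ and yields $M(0:_D J)=M(D)\ge r$. Therefore $0:_D J\in\mathcal{S}_{M(0:_D J)}(C)$, which is $\mathcal{S}_i(C)$ for $i=M(0:_D J)\ge r$.

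Next I would compare $|\sigma(0:_D J)|$ with $|\sigma(D)|=d_r(C)$. Since $0:_D J\subseteq D$, monotonicity of $\sigma$ gives $\sigma(0:_D J)\le \sigma(D)$, hence $|\sigma(0:_D J)|\le d_r(C)$. Conversely, as $0:_D J\subseteq C$ and $M(0:_D J)\ge r$, Lemma~\ref{lem:genwgt}\ref{lem:genwgt3} yields $d_r(C)\le |\sigma(0:_D J)|$. Equality follows, which finishes the first claim.

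For the \emph{in particular} part, I would first note that $M(0:_C J)=M(C)\ge r$ by Proposition~\ref{ann}, so $d_r(0:_C J)$ is well-defined, and $d_r(C)\le d_r(0:_C J)$ follows from Lemma~\ref{lem:genwgt}\ref{lem:genwgt2}. For the reverse inequality, choose any $D\subseteq C$ with $M(D)\ge r$ and $|\sigma(D)|=d_r(C)$ (which exists by finiteness and Lemma~\ref{lem:genwgt}\ref{lem:genwgt3}). By the first part, $0:_D J\subseteq 0:_C J$ satisfies $M(0:_D J)\ge r$ and $|\sigma(0:_D J)|=d_r(C)$, so applying Lemma~\ref{lem:genwgt}\ref{lem:genwgt3} inside $0:_C J$ gives $d_r(0:_C J)\le d_r(C)$. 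There is no serious obstacle here: the only subtle ingredient is that Proposition~\ref{ann} requires $R$ to be a PIR, which is precisely the hypothesis.
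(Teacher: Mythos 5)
Your proof is correct and follows essentially the same route as the paper's: Proposition~\ref{ann} to see that passing to the socle preserves $M(\cdot)$ over a PIR, monotonicity of the support under inclusion, and the characterization of $d_r$ as a minimum over subcodes $D'\subseteq C$ with $M(D')\ge r$. The only cosmetic difference is that you invoke Theorem~\ref{lem:minspecial} (rather than Theorem~\ref{thm:mingens}, as the paper does) to place $0:_D J$ in $\mathcal{S}_i(C)$ with $i\ge r$, which works just as well.
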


\begin{proof}
By Proposition \ref{ann} and Theorem \ref{thm:mingens}, $0:_D J\subseteq D$ is minimally generated by a set of $M(0:_D J)=M(D)$ codewords. Therefore $0:_D J\in\mathcal{S}_{M(D)}(C)$ and $M(D)\geq r$.
Moreover $$|\sigma(0:_D J)|\leq|\sigma(D)|=d_r(C),$$ hence equality holds.
Since $0:_D J\subseteq 0:_C J$, then $$d_r(0:_C J)\leq |\sigma(0:_D J)|=d_r(C).$$ The reverse inequality follows from the inclusion $0:_C J\subseteq C$.
\end{proof}

In particular, this allows us to determine the last generalized weight of $C$.

\begin{corollary}\label{lastgenwgt}
Let $C \subseteq R^n$ be a code and $\sigma$ be a modular support. Assume that either $C \subseteq 0:_{R^n} J$ or $R$ is a PIR. Then
$$d_{M(C)}(C)=|\sigma(0:_C J)|.$$
\end{corollary}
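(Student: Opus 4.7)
The plan is to reduce the PIR case to the simpler case where $C \subseteq 0:_{R^n}J$, and then to observe that in this simpler case the only admissible subcode in the definition of $d_{M(C)}(C)$ is $C$ itself.

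First I would handle the case $C \subseteq 0:_{R^n}J$. Here $0:_C J = C \cap (0:_{R^n}J) = C$, so the target equality reads $d_{M(C)}(C) = |\sigma(C)|$. Using the reformulation in Lemma~\ref{lem:genwgt}\ref{lem:genwgt3}, we have
$$d_{M(C)}(C) = \min\{|\sigma(D)| \st D \subseteq C,\ M(D) \geq M(C)\}.$$
Now apply Proposition~\ref{lem:equal} to any such $D$: since $D \subseteq C \subseteq 0:_{R^n}J$ and $M(D) \geq M(C)$, we must have $D = C$. Thus $C$ is the unique minimizer and $d_{M(C)}(C) = |\sigma(C)| = |\sigma(0:_C J)|$.

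Next I would tackle the PIR case. By Proposition~\ref{prop:replacew/ann},
$$d_{M(C)}(C) = d_{M(C)}(0:_C J),$$
and by Proposition~\ref{ann} we have $M(C) = M(0:_C J)$, so the right-hand side equals $d_{M(0:_C J)}(0:_C J)$. Since $0:_C J \subseteq 0:_{R^n}J$, the first case (applied to the code $0:_C J$) yields
$$d_{M(0:_C J)}(0:_C J) = |\sigma(0:_{(0:_C J)} J)| = |\sigma(0:_C J)|,$$
where the last equality is clear because $0:_C J$ is already annihilated by $J$.

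There is no real obstacle here: the main ingredients, Propositions~\ref{lem:equal}, \ref{ann}, and \ref{prop:replacew/ann}, and Lemma~\ref{lem:genwgt}\ref{lem:genwgt3}, have all been proved earlier. The only delicate point is checking that in the PIR case the assumptions of Proposition~\ref{prop:replacew/ann} apply with $r = M(C)$; this is fine because $r \in [M(C)]$ is allowed in its statement, and a subcode $D \subseteq C$ realizing the minimum for $d_{M(C)}(C)$ exists since $\mathcal{S}_{M(C)}(C)$ is nonempty by Theorem~\ref{lem:minspecial}.
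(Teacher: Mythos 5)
Your proof is correct and takes essentially the same route as the paper's: in the PIR case you reduce to the socle via Propositions~\ref{prop:replacew/ann} and~\ref{ann}, and in the case $C\subseteq 0:_{R^n}J$ you use Lemma~\ref{lem:genwgt}\ref{lem:genwgt3} together with Proposition~\ref{lem:equal} to see that the only admissible subcode with $M(D)\geq M(C)$ is $C$ itself, which is exactly the paper's argument. No gaps.
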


\begin{proof}
We claim that $M(C)=M(0:_C J)$ and $d_{M(C)}(C)=d_{M(C)}(0:_C J)$.
This is clear if $C \subseteq 0:_{R^n} J$, since $C=0:_C J$. 
If $R$ is a PIR, the claim follows from Proposition~\ref{ann} and Proposition~\ref{prop:replacew/ann}.

By Proposition~\ref{lem:equal}, $M(D)\leq M(C)$ for every $D\subseteq 0:_C J$ and the only subcode $D\subseteq 0:_C J$ with $M(D)=M(C)$ is $D=0:_C J$. 
Therefore \begin{equation*}
d_{M(C)}(C)=|\sigma(0:_C J)|. \qedhere
\end{equation*}
\end{proof}

For a given code, we can produce subcodes that attain its generalized weights and that are minimally generated by a set of minimal codewords, whose supports have the same reduced shape as in Proposition \ref{special_gens}.
This technical result plays a crucial role in the proof of Theorem~\ref{mainth}.

\begin{theorem}\label{coro:special_gens4}
Let $C\subseteq R^n$ be a code and let $\sigma$ be a modular support. Assume that either $C\subseteq 0:_{R^n} J$ or $R$ is a PIR. Then, for all $r\in [M(C)]$, there exists a subcode $D\subseteq C$ such that:
\begin{enumerate}
\item $d_r(C)=|\sigma(D)|$,
\item $D$ has a minimal system of generators
$\{v_1,\ldots,v_r\}$ such that $v_i\in\Min(C)$ for all $i\in[r]$. Moreover
$$\sigma(v_i) \not\le \bigvee_{j \neq i}\sigma(v_j).$$ 
\end{enumerate}
\end{theorem}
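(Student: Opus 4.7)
The plan is to reduce to the socle, extract a subcode $D_0$ attaining $d_r(C)$ with a minimal system of generators in the ``reduced echelon'' shape produced by Proposition~\ref{special_gens}, and then invoke Lemma~\ref{lem:replace} to replace each echelon generator by a codeword that is minimal in $C$, in such a way that the echelon pattern is preserved.

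First I reduce to the case $C\subseteq 0:_{R^n}J$. If the inclusion already holds there is nothing to do; otherwise $R$ is a PIR, and Proposition~\ref{prop:replacew/ann} gives $d_r(C)=d_r(0:_C J)$, while Theorem~\ref{lemma:minsupp}\ref{lemma:minsupp2} gives $\Min(0:_C J)=\Min(C)$. So any subcode $D\subseteq 0:_C J$ that satisfies (1)--(2) for $0:_C J$ automatically satisfies them for $C$ as well. With this reduction in place, Remark~\ref{rmk:containment} lets me pick $D_0\in\mathcal{S}_r(C)$ with $|\sigma(D_0)|=d_r(C)$, and Proposition~\ref{special_gens} applied to $D_0\subseteq 0:_{R^n}J$ yields generators $v_1,\ldots,v_r$ of $D_0$ together with coordinates $k_1,\ldots,k_r\in[u]$ satisfying $\sigma(v_i)_{k_i}\neq 0$ and $\sigma(v_h)_{k_i}=0$ for $h\neq i$.

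The heart of the argument is to apply Lemma~\ref{lem:replace} to each pair $(v_i,k_i)$, obtaining $w_i\in\Min(C)$ with $\sigma(w_i)\leq\sigma(v_i)$ and $\sigma(w_i)_{k_i}\neq 0$. The crucial observation is that for $h\neq i$ we have $\sigma(w_h)_{k_i}\leq\sigma(v_h)_{k_i}=0$, so the echelon pattern is inherited by $\{w_1,\ldots,w_r\}$. This forces $\sigma(w_i)\not\leq\bigvee_{j\neq i}\sigma(w_j)$, which proves (2) and shows that $D:=\langle w_1,\ldots,w_r\rangle\in\mathcal{S}_r(C)$. For (1) one notes $\sigma(D)=\bigvee_i\sigma(w_i)\leq\bigvee_i\sigma(v_i)=\sigma(D_0)$, so $|\sigma(D)|\leq d_r(C)$, while $D\in\mathcal{S}_r(C)$ gives the reverse inequality by the definition of $d_r(C)$.

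The step I expect to be the main obstacle is confirming that the echelon structure of the $v_i$ survives the replacement by codewords minimal in $C$ (rather than in $D_0$, which would follow directly from Theorem~\ref{thm:mingens}). The resolution is that Lemma~\ref{lem:replace} preserves the distinguished coordinate $k_i$ while only shrinking the support, so the zero-pattern at the other distinguished coordinates $k_h$, $h\neq i$, is inherited automatically.
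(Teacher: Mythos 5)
Your proposal is correct and follows essentially the same route as the paper: pass to the socle in the PIR case via Proposition~\ref{prop:replacew/ann}, take a subcode attaining $d_r(C)$ with echelon-shaped generators from Proposition~\ref{special_gens}, and replace each generator by a codeword in $\Min(C)$ via Lemma~\ref{lem:replace}, noting that the distinguished coordinates preserve the echelon pattern and hence minimality of the generating set. The only cosmetic difference is that you pick a subcode in $\mathcal{S}_r(C)$ directly via Remark~\ref{rmk:containment}, whereas the paper works with $D'\in\mathcal{S}_j(C)$, $j\geq r$, and keeps only the first $r$ replaced generators.
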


\begin{proof}
If $C\subseteq 0:_{R^n} J$, then let $D'\subseteq C$ such that $D'\in\mathcal{S}_j(C)$, $j\geq r$, and $d_r(C)=|\sigma(D')|$.
If $R$ is a PIR, then by Proposition \ref{prop:replacew/ann} there exist $j\geq r$ and $D'\subseteq 0:_C J$ such that $D'\in\mathcal{S}_j(C)$ and $d_r(C)=|\sigma(D')|$. In both cases, by Proposition~\ref{special_gens}, $D'$ has a minimal system of generators $w_1,\ldots,w_j$ with the following property: For all $i \in [j]$ there exists $k_i \in [u]$ with $\sigma(w_i)_{k_i} \neq 0$ and $\sigma(w_h)_{k_i}=0$ for $h \neq i$. By Lemma~\ref{lem:replace}, for all $i \in [j]$ there exists $v_i \in\Min(C)$ with $\sigma(v_i) \le \sigma(w_i)$ and $\sigma(v_i)_{k_i} \neq 0$.
In particular, $\sigma(v_i)\not\leq\vee_{h\neq i}\sigma(v_h)$ for $i\in[j]$. Let $D=\langle v_1,\ldots,v_r\rangle$. Notice that $D\in\mathcal{S}_r(C)$, since $v_h\not\in\langle v_k\st k\in[r], k\neq h\rangle$ for all $h\in[r]$. Moreover, \begin{equation*}
    |\sigma(D)|=\bigvee_{i=1}^r\sigma(v_i)\leq \bigvee_{i=1}^j\sigma(w_i)=|\sigma(D')|.
\end{equation*}
Therefore $|\sigma(D)|=|\sigma(D')|=d_r(C)$.
\end{proof}

\begin{notation}
Let $C\subseteq R^n$ be a code and let $j\in[M(C)]$. We let
$$\mathcal{M}_{j}(C):=\{D\subseteq C \st \mbox{$D$ has a minimal system of generators  of $j$ minimal codewords}\}.$$
\end{notation}

Theorem \ref{coro:special_gens4} shows that for all $r\in [M(C)]$ there exists $D\in\mathcal{M}_r(C)$ such that $d_r(C)=|\sigma(D)|$.
In particular, we have shown the following.

\begin{corollary}\label{min_subcodes}
Let $C \subseteq R^n$ be a code and let $\sigma$ be a modular support. Assume that either $C\subseteq 0:_{R^n} J$ or $R$ is a PIR. The following quantities are equal to the $r$-th generalized weight $d_r(C)$, for any $r\in [M(C)]$:
\begin{enumerate}
\item $\min\{|\sigma(D)|\st D\in\mathcal{S}_j(C) \mbox{ for some } j\geq r\}$,
\item $\min\{|\sigma(D)|\st D\in\mathcal{S}_r(C)\}$,
\item $\min\{|\sigma(D)|\st D\in\mathcal{M}_j(C) \mbox{ for some } j\geq r\}$,
\item $\min\{|\sigma(D)| \st D\in\mathcal{M}_r(C)\}.$
\end{enumerate}
\end{corollary}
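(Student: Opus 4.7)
The plan is to establish a circular chain of inequalities showing that all four quantities coincide with $d_r(C)$. Almost all the work has already been done in the preceding results; the corollary is essentially a repackaging.

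First I would observe that (1) equals $d_r(C)$ by Definition~\ref{def:gw}, and (2) equals $d_r(C)$ by Remark~\ref{rmk:containment}. So the only real content lies in (3) and (4). Next I would record the trivial inclusions between the sets being minimized over: since every minimal system of generators made up of $j$ minimal codewords is, in particular, a minimal system of generators of cardinality~$j$, we have $\mathcal{M}_j(C) \subseteq \mathcal{S}_j(C)$ for every $j$. Also, $\mathcal{M}_r(C) \subseteq \bigcup_{j \ge r} \mathcal{M}_j(C)$ tautologically. Taking minima of $|\sigma(D)|$ over larger sets can only decrease the value, hence
\[
d_r(C) \; = \; \min\bigl\{|\sigma(D)| \st D \in \mathcal{S}_j(C),\, j \ge r\bigr\} \; \le \; \min\bigl\{|\sigma(D)| \st D \in \mathcal{M}_j(C),\, j \ge r\bigr\} \; \le \; \min\bigl\{|\sigma(D)| \st D \in \mathcal{M}_r(C)\bigr\}.
\]
In other words, (1) $\le$ (3) $\le$ (4).

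The remaining step is to show (4) $\le d_r(C)$, which closes the circle. This is exactly what Theorem~\ref{coro:special_gens4} provides: under the hypothesis that $C \subseteq 0:_{R^n}J$ or that $R$ is a PIR, there exists a subcode $D \subseteq C$ with $|\sigma(D)| = d_r(C)$ and a minimal system of generators $\{v_1,\ldots,v_r\}$ consisting of minimal codewords of $C$. By definition, $D \in \mathcal{M}_r(C)$, so $\min\{|\sigma(D)| \st D \in \mathcal{M}_r(C)\} \le |\sigma(D)| = d_r(C)$.

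Combining, (1) $\le$ (3) $\le$ (4) $\le d_r(C) =$ (1), so all inequalities are equalities, and together with (2) $= d_r(C)$ from Remark~\ref{rmk:containment} this proves the corollary. There is no real obstacle here, since the technical content was already absorbed into Theorem~\ref{coro:special_gens4}; the only point to watch is to invoke the correct hypothesis ($C \subseteq 0:_{R^n}J$ or $R$ a PIR) when appealing to that theorem.
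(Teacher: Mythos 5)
Your proof is correct and follows essentially the same route as the paper: both anchor the argument on Theorem~\ref{coro:special_gens4} to show that the minimum over $\mathcal{M}_r(C)$ equals $d_r(C)$, and then use the inclusions $\mathcal{M}_r(C)\subseteq\bigcup_{j\geq r}\mathcal{M}_j(C)\subseteq\bigcup_{j\geq r}\mathcal{S}_j(C)$ and $\mathcal{M}_r(C)\subseteq\mathcal{S}_r(C)\subseteq\bigcup_{j\geq r}\mathcal{S}_j(C)$ to sandwich the remaining quantities. Your appeal to Remark~\ref{rmk:containment} for item (2) is a harmless shortcut, as that remark encodes the same inclusion argument the paper writes out.
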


\begin{proof}
Equality between $d_r(C)$ and 1. holds by definition.
Equality between $d_r(C)$ and 4. follows directly from Theorem \ref{coro:special_gens4}. Equality between $d_r(C)$ and 3. then follows from the chain of inclusions $$\mathcal{M}_r(C)\subseteq\cup_{j\geq r}\mathcal{M}_j(C)\subseteq\cup_{j\geq r}\mathcal{S}_j(C).$$ Similarly, equality between $d_r(C)$ and 2. follows from the chain of inclusions \begin{equation*}
    \mathcal{M}_r(C)\subseteq\mathcal{S}_r(C)\subseteq\cup_{j\geq r}\mathcal{S}_j(C). \qedhere
\end{equation*}
\end{proof}

\begin{remark}
Theorem \ref{thm:mingens} and Corollary \ref{min_subcodes} are in general false for supports that are not modular. For instance, the support of Example~\ref{expato} violates both results taking $C=\F_2^2$.
\end{remark}

We can now prove that the generalized weights form a strictly increasing sequence. This extends a classical result by Wei \cite{W91}.

\begin{theorem}\label{monoton}
Let $C\subseteq R^n$ be a code and let $\sigma$ be a modular support. Assume that either $C\subseteq 0:_{R^n} J$ or $R$ is a PIR. Then
$$\min\wt(C)=d_1(C)<d_2(C)<\ldots<d_{M(C)}(C)=|\sigma(0:_C J)|.$$
\end{theorem}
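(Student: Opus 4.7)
The plan is to read off the two boundary equalities from results already established, and then reduce strict monotonicity to a single application of Theorem~\ref{coro:special_gens4}. The equality $d_1(C)=\min\wt(C)$ is Lemma~\ref{lem:genwgt}\ref{lem:genwgt1}, and $d_{M(C)}(C)=|\sigma(0:_C J)|$ is precisely Corollary~\ref{lastgenwgt}, whose hypotheses match those of the current theorem. So the only content to prove is strict inequality $d_r(C)<d_{r+1}(C)$ for each $r\in[M(C)-1]$.

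To get strict inequality, I would fix $r$ and apply Theorem~\ref{coro:special_gens4} at level $r+1$: this yields a subcode $D\subseteq C$ with $|\sigma(D)|=d_{r+1}(C)$ and a minimal system of generators $v_1,\ldots,v_{r+1}$ consisting of minimal codewords of $C$, such that $\sigma(v_i)\not\leq\bigvee_{j\neq i}\sigma(v_j)$. In fact, tracing back into the proof of Theorem~\ref{coro:special_gens4} (which uses Proposition~\ref{special_gens} before Lemma~\ref{lem:replace}), one may further assume that each $v_i$ carries a pivot coordinate $k_i\in[u]$ with $\sigma(v_i)_{k_i}\neq 0$ and $\sigma(v_j)_{k_i}=0$ for all $j\neq i$. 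Set $D':=\langle v_1,\ldots,v_r\rangle\subseteq C$. The pivot coordinates $k_1,\ldots,k_r$ witness that $\{v_1,\ldots,v_r\}$ is a minimal system of generators of $D'$, so $D'\in\mathcal{S}_r(C)$.

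Now I would compare supports coordinate-wise via \eqref{eqn:supp}. For every $k\in[u]$ we have $\sigma(D')_k=\bigvee_{i\leq r}\sigma(v_i)_k\leq\bigvee_{i\leq r+1}\sigma(v_i)_k=\sigma(D)_k$, while at the pivot $k_{r+1}$ we have $\sigma(D')_{k_{r+1}}=0$ by the pivot property, and $\sigma(D)_{k_{r+1}}\geq\sigma(v_{r+1})_{k_{r+1}}\geq 1$. Summing coordinates gives $|\sigma(D')|<|\sigma(D)|=d_{r+1}(C)$. Since $D'\in\mathcal{S}_r(C)$, the definition of $d_r(C)$ (together with Remark~\ref{rmk:containment}) yields $d_r(C)\leq|\sigma(D')|<d_{r+1}(C)$, completing the induction step and therefore the theorem.

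The main potential obstacle is whether Theorem~\ref{coro:special_gens4} really delivers the pivot structure and not merely the weaker condition $\sigma(v_i)\not\leq\bigvee_{j\neq i}\sigma(v_j)$; if only the weaker form is granted, one instead runs the following backup argument. The weaker form still forces $\{v_1,\ldots,v_r\}$ to be a minimal system of generators of $D'$, so $D'\in\mathcal{S}_r(C)$. To get strictness, pick a coordinate $k$ with $\sigma(v_{r+1})_k\neq 0$ and $\sigma(v_{r+1})_k\not\leq\bigvee_{j\leq r}\sigma(v_j)_k$, which exists because $\sigma(v_{r+1})\not\leq\bigvee_{j\leq r}\sigma(v_j)$. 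Then $\sigma(D)_k>\sigma(D')_k$ at that coordinate while $\sigma(D')\leq\sigma(D)$ everywhere, so again $|\sigma(D')|<|\sigma(D)|=d_{r+1}(C)$, and the conclusion $d_r(C)<d_{r+1}(C)$ follows identically.
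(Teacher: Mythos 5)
Your proposal is correct and follows essentially the same route as the paper: the paper also handles the boundary equalities via Lemma~\ref{lem:genwgt}\ref{lem:genwgt1} and Corollary~\ref{lastgenwgt}, and proves strictness by taking a subcode attaining $d_{r+1}(C)$, passing to generators with the pivot structure of Proposition~\ref{special_gens} (after reducing to $C\subseteq 0:_{R^n}J$ via Corollary~\ref{min_subcodes} and Theorem~\ref{lemma:minsupp}\ref{lemma:minsupp1}), dropping one generator, and comparing supports at the dropped pivot coordinate. The only remark worth making is that your ``backup'' argument, which uses just the stated condition $\sigma(v_{r+1})\not\leq\bigvee_{j\leq r}\sigma(v_j)$ from Theorem~\ref{coro:special_gens4}, is already fully sufficient, so there is no need to rely on the pivot structure extracted from the internals of that theorem's proof.
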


\begin{proof}
By Corollary~\ref{min_subcodes} and Theorem \ref{lemma:minsupp}\ref{lemma:minsupp1} we may assume without loss of generality that $C \subseteq 0:_{R^n} J$. Let $r\in[M(C)-1]$ and let $D\subseteq C$ be such that $|\sigma(D)|=d_{r+1}(C)$. We may assume that $D$ has a minimal system of generators $\{v_1,\ldots,v_{j+1}\}$ as in Proposition~\ref{special_gens} with $j\geq r$. Then $D':=\langle v_1,\ldots,v_j\rangle\in\mathcal{S}_j(D)$. We have $\sigma(D')  \le \sigma(D)$ and $\sigma(D')_{k_{j+1}}=0<\sigma(D)_{k_{j+1}}$, hence $|\sigma(D')| < |\sigma(D)|$.
In particular, $$d_r(C) \le |\sigma(D')| < |\sigma(D)|=d_{r+1}(C).$$
The two equalities in the statement follow from Lemma~\ref{lem:genwgt}\ref{lem:genwgt1} and 
Corollary~\ref{lastgenwgt}.
\end{proof}

As an application of Theorem~\ref{monoton}, we extend the generalized Singleton bound~\cite[Corollary~1]{W91} to every code over a PIR and some codes over finite commutative rings.

\begin{corollary}\label{singleton}
Let $C \subseteq R^n$ be a code and let $\sigma$ be a modular support. Assume that either $C\subseteq 0:_{R^n} J$ or $R$ is a PIR. Then $$\min\wt(C)+r-1\leq d_r(C)\leq |\sigma(0:_{R^n}J)|-M(C)+r$$
for all $r\in[M(C)]$. In particular, 
$$\min\wt(C)\leq|\sigma(0:_{R^n} J)|-M(C)+1.$$
\end{corollary}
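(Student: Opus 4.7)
The plan is to deduce both inequalities directly from Theorem~\ref{monoton} by exploiting the strict monotonicity of the sequence of generalized weights together with the explicit value of the two ``endpoints'' $d_1(C)$ and $d_{M(C)}(C)$.

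For the lower bound $\min\wt(C)+r-1\leq d_r(C)$, I would argue by induction on $r$. The base case $r=1$ is Lemma~\ref{lem:genwgt}\ref{lem:genwgt1}. For the inductive step, Theorem~\ref{monoton} gives the strict inequality $d_r(C)<d_{r+1}(C)$; since both quantities are integers, this upgrades to $d_{r+1}(C)\geq d_r(C)+1$, and the induction hypothesis finishes the step. The chain of strict inequalities guaranteed by Theorem~\ref{monoton} does require the hypothesis that either $C\subseteq 0:_{R^n}J$ or $R$ is a PIR, which matches our assumptions.

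For the upper bound $d_r(C)\leq |\sigma(0:_{R^n}J)|-M(C)+r$, I would first observe that $0:_C J\subseteq 0:_{R^n}J$ implies $\sigma(0:_C J)\leq \sigma(0:_{R^n}J)$ by the definition of the support of a code, hence
\[
d_{M(C)}(C)=|\sigma(0:_C J)|\leq |\sigma(0:_{R^n}J)|,
\]
where the equality is Corollary~\ref{lastgenwgt} (whose hypotheses are again exactly ours). Then, applying once more the strict monotonicity $d_r(C)<d_{r+1}(C)<\cdots<d_{M(C)}(C)$ from Theorem~\ref{monoton}, integrality forces $d_{M(C)}(C)-d_r(C)\geq M(C)-r$, which rearranged yields $d_r(C)\leq d_{M(C)}(C)-M(C)+r\leq |\sigma(0:_{R^n}J)|-M(C)+r$. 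The ``in particular'' statement is simply the case $r=1$, combined with $d_1(C)=\min\wt(C)$.

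There is no real obstacle here: everything is already packaged in Theorem~\ref{monoton} and Corollary~\ref{lastgenwgt}. The only subtlety worth flagging is that the strict inequalities on integers give an automatic gain of at least one at each step, which is what both inequalities are secretly exploiting, and that the bound uses $|\sigma(0:_{R^n}J)|$ rather than $|\sigma(0:_C J)|$ only to obtain a quantity depending on the ambient module rather than on $C$.
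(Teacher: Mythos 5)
Your proposal is correct and follows essentially the same route as the paper's proof: both inequalities come from the strict monotonicity in Theorem~\ref{monoton} (with integrality giving a gain of one per step), combined with $d_{M(C)}(C)=|\sigma(0:_C J)|\leq|\sigma(0:_{R^n}J)|$ via Corollary~\ref{lastgenwgt}. You simply spell out the integer-counting argument that the paper leaves implicit.
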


\begin{proof}
The result follows by combining Theorem~\ref{monoton} with $$d_M(C)=|\sigma(0:_{C} J)|\leq|\sigma(0:_{R^n}J)|,$$ where the equality on the left hand side follows from Corollary~\ref{lastgenwgt}.
\end{proof}

The next corollary proves that, for modular supports, any subcode $D$ of $C$ with $d_r(C)=|\sigma(D)|$ has a minimal system of generators consisting of $r$ elements, and no minimal system of generators of larger cardinality. This result allows us to  restrict to such subcodes when studying the generalized weights of $C$.

\begin{corollary}\label{cor:i=MD}
Let $C \subseteq R^n$ be a code and let $\sigma$ be a modular support. Assume that either $C\subseteq 0:_{R^n} J$ or $R$ is a PIR. Let $r\in [M(C)]$ and $D\in\mathcal{S}_j(C)$, $j\geq r$, be such that $d_r(C)=|\sigma(D)|$. 
Then $r=j=M(D)$. 
\end{corollary}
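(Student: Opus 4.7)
The plan is to use the strict monotonicity of generalized weights established in Theorem~\ref{monoton} together with the characterization of $M(D)$ from Theorem~\ref{lem:minspecial}. The proof is short once those two results are in hand, so the main ``obstacle'' is really just to use them in the right order.

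First I would recall that any code $D$ belongs to $\mathcal{S}_{M(D)}(C)$, since by definition $M(D)$ is the cardinality of some minimal system of generators of $D$; moreover, by Theorem~\ref{lem:minspecial}, if $D\in\mathcal{S}_j(C)$ then $j\le M(D)$. Applied to our hypothesis $D\in\mathcal{S}_j(C)$ with $j\ge r$, this gives the chain
\[
r\;\le\;j\;\le\;M(D).
\]

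Next I would rule out $M(D)>r$. Suppose, towards a contradiction, that $M(D)\ge r+1$. Since $D\in\mathcal{S}_{M(D)}(C)$ and $M(D)\ge r+1$, the definition of generalized weight (Definition~\ref{def:gw}) yields
\[
d_{r+1}(C)\;\le\;|\sigma(D)|\;=\;d_r(C).
\]
This contradicts Theorem~\ref{monoton}, which under our hypotheses (either $C\subseteq 0:_{R^n}J$ or $R$ is a PIR) asserts $d_r(C)<d_{r+1}(C)$. Hence $M(D)=r$.

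Combining this with $r\le j\le M(D)=r$ forces $j=r=M(D)$, which is the desired conclusion. The only subtle point is to check that the hypothesis of Theorem~\ref{monoton} is indeed satisfied under the hypothesis of the corollary, but this is immediate since we are assuming exactly the same alternative ($C\subseteq 0:_{R^n}J$ or $R$ a PIR).
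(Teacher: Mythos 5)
Your proof is correct and is essentially the paper's own argument: both rest on Theorem~\ref{lem:minspecial} (giving $r\le j\le M(D)$ and $D\in\mathcal{S}_{M(D)}(C)$) combined with the strict monotonicity of Theorem~\ref{monoton}, with your contradiction via $d_{r+1}(C)\le|\sigma(D)|=d_r(C)$ being just a rephrasing of the paper's sandwich $|\sigma(D)|\ge d_{M(D)}(C)\ge d_r(C)=|\sigma(D)|$. The only nitpick is that $D\in\mathcal{S}_{M(D)}(C)$ follows from Theorem~\ref{lem:minspecial} (where $M(D)$ is shown to be the maximal cardinality of a minimal generating set), not from Definition~\ref{defmuM} itself.
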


\begin{proof}
Since $D\in\mathcal{S}_j(C)$, then $r\leq j\leq M(D)$ and $D\in\mathcal{S}_{M(D)}(C)$ by Proposition \ref{lem:minspecial}. Then $$|\sigma(D)|\geq d_{M(D)}(C)\geq d_r(C)=|\sigma(D)|,$$ where the first inequality follows from $D\in\mathcal{S}_{M(D)}(C)$ and the second from Lemma \ref{lem:genwgt}\ref{lem:genwgt2a}. Therefore the inequalities are equalities and $r=j=M(D)$ by Theorem \ref{monoton}. 
\end{proof}

In the next theorem, we establish some additional properties of the subcodes of $C$ that realize the generalized weights of $C$.

\begin{theorem}\label{coro:special_gens}
Let $C \subseteq R^n$ be a code and let $\sigma$ be a modular support. Assume that either $C\subseteq 0:_{R^n} J$ or $R$ is a PIR. Let $r\in [M(C)]$ and $D\in\mathcal{S}_r(C)$ be such that $d_r(C)=|\sigma(D)|$. The following hold.
\begin{enumerate}[label={(\arabic*)}]
\item \label{coro:special_gens1} If $v\in\Min(C)$ satisfies $\sigma(v)\leq\sigma(D)$, then $v\in D$. In particular, $\Min(D)=\Min(C)\cap D$.
\item \label{coro:special_gens2} $0:_D J=\langle v\in C\st v\in\Min(C),\, \sigma(v)\leq\sigma(D)\rangle$. 
\end{enumerate}
In particular, if $D\in\mathcal{M}_r(C)$, then $D=\langle v\in C\st v\in\Min(C),\, \sigma(v)\leq\sigma(D)\rangle$.
\end{theorem}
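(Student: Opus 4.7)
My plan is to derive part~(1) by contradiction, using strict monotonicity of the generalized weights (Theorem~\ref{monoton}) together with a socle comparison of the invariant $M$. Part~(2) will then follow formally from part~(1) combined with Theorem~\ref{thm:mingens}.

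For the main implication in part~(1), I would suppose $v \in \Min(C)$ satisfies $\sigma(v) \le \sigma(D)$ but $v \notin D$, and set $D' := D + \langle v\rangle$. Since $\sigma(v) \le \sigma(D)$, one has $\sigma(D') = \sigma(D)$, so $|\sigma(D')| = d_r(C)$. By Theorem~\ref{lemma:minsupp}\ref{lemma:minsupp1}, $v \in 0:_{R^n}J$, so $v \in 0:_{D'}J$ while $v \notin 0:_D J$ (as $v \notin D$), giving a strict inclusion $0:_D J \subsetneq 0:_{D'} J \subseteq 0:_{R^n}J$. Proposition~\ref{lem:equal} then yields $M(0:_D J) < M(0:_{D'} J)$; both when $C \subseteq 0:_{R^n}J$ (trivially) and when $R$ is a PIR (via Proposition~\ref{ann}) this transfers to $M(D) < M(D')$. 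By Corollary~\ref{cor:i=MD}, $M(D) = r$, hence $M(D') \ge r+1$, and by Theorem~\ref{lem:minspecial}, $D' \in \mathcal{S}_{M(D')}(C)$. Therefore $d_{r+1}(C) \le |\sigma(D')| = d_r(C)$, contradicting Theorem~\ref{monoton}.

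For the equality $\Min(D) = \Min(C) \cap D$, the inclusion $\supseteq$ is immediate from $D \subseteq C$. Conversely, given $v \in \Min(D)$, Theorem~\ref{lemma:minsupp}\ref{lemma:minsupp2} places $v$ in $0:_D J \subseteq 0:_C J \subseteq 0:_{R^n}J$. Applying Lemma~\ref{lem:replace} to the code $0:_C J$ (which does lie in $0:_{R^n}J$) produces $w \in \Min(0:_C J) = \Min(C)$ with $\sigma(w) \le \sigma(v) \le \sigma(D)$. By the first part of~(1), $w \in D$, and the minimality of $v$ in $D$ forces $\sigma(w) = \sigma(v)$, so $v \in \Min(C)$.

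For part~(2), the inclusion $\supseteq$ holds because every generator on the right side lies in $\Min(C) \cap D \subseteq 0:_{R^n}J \cap D = 0:_D J$ (using Theorem~\ref{lemma:minsupp}\ref{lemma:minsupp1} and part~(1)). For $\subseteq$, Theorem~\ref{thm:mingens} applied to $D$ shows that $0:_D J$ is generated by $\Min(0:_D J) = \Min(D)$, and by the equality just proved each such generator lies in $\Min(C)$ with support $\le \sigma(D)$. Finally, if $D \in \mathcal{M}_r(C)$, then $D$ admits a generating set of minimal codewords of $C$, which by Theorem~\ref{lemma:minsupp}\ref{lemma:minsupp1} all lie in $0:_{R^n}J$; hence $D \subseteq 0:_{R^n}J$, so $D = 0:_D J$, and the last assertion is immediate from part~(2). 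The main obstacle is handling the two disjunctive hypotheses uniformly: Lemma~\ref{lem:replace} is stated only for codes inside $0:_{R^n}J$, so in the PIR case it must be applied to the socle $0:_C J$, and the corresponding equalities for $M$ and for $\Min$ must be transferred via Propositions~\ref{ann} and~\ref{prop:replacew/ann} and Theorem~\ref{lemma:minsupp}\ref{lemma:minsupp2}.
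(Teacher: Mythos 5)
Your proposal is correct and follows essentially the same route as the paper: adjoin the missing minimal codeword $v$ to $D$, observe that $D'=D+\langle v\rangle$ has the same support as $D$ but strictly larger $M$, and contradict the strict monotonicity of Theorem~\ref{monoton} (using Corollary~\ref{cor:i=MD}), after which the identity $\Min(D)=\Min(C)\cap D$, part~(2) via Theorem~\ref{thm:mingens}, and the final assertion are argued exactly as in the paper. The only minor difference is mechanical: where the paper first replaces $D$ by its socle (Proposition~\ref{prop:replacew/ann}) and computes $M(D')=M(D)+1$ componentwise, you compare the socles $0:_D J\subsetneq 0:_{D'}J$ directly via Propositions~\ref{lem:equal} and~\ref{ann}, which works equally well.
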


\begin{proof}
\begin{enumerate}[label={(\arabic*)}]
\item If $C\subseteq 0:_{R^n} J$, then also $D\subseteq 0:_{R^n} J$. If $R$ is a PIR, then $0:_D J\subseteq D$ has $\sigma(0:_D J)=\sigma(D)$ by Proposition \ref{prop:replacew/ann}. In both cases, it suffices to prove the thesis under the assumption that $D\subseteq 0:_{R^n} J$.

If $v\not\in D$, consider $D\subsetneq D'=D+\langle v\rangle\subseteq 0:_{R^n}J$. We have $$r=M(D)<M(D')\leq M(D)+1=r+1,$$ where the equalities follows from Corollary \ref{cor:i=MD} and the first inequality follows from Proposition~\ref{lem:equal}. The second inequality follows from observing that, if $D=D_1\times\ldots\times D_\ell$ and $v=e_jv$, then $$D'=D_1\times\ldots\times D_{j-1}\times (D_j+\langle v_j\rangle)\times D_{j+1}\times\ldots\times D_\ell.$$
Since $v_j\not\in D_j$ and $0:_{R_j}\mathfrak{M}_j\supseteq D_j\cup\{v_j\}$, then $$\mu(D_j+\langle v_j\rangle)=\dim_{R_j/\mathfrak{M}_j}(D_j+\langle v_j\rangle)=\dim_{R_j/\mathfrak{M}_j}(D_j)+1=\mu(D_j)+1.$$ Therefore $M(D')=r+1$.
Since $\sigma(D')=\sigma(v)\vee\sigma(D)=\sigma(D)$, then 
$d_{r+1}(C)\leq |\sigma(D)|=d_{r}(C)$, contradicting Theorem \ref{monoton}. It follows that $v\in D$, as desired.

In order to prove that $\Min(D)=\Min(C)\cap D$, it suffices to prove that $\Min(D)\subseteq\Min(C)$. Let $w\in\Min(D)\subseteq C$, then there is a $v\in\Min(C)$ such that $\sigma(v)\leq\sigma(w)\leq\sigma(D)$, where the second inequality follows from $w\in D$. By the first part of the proof, $v\in D$. Therefore $\sigma(w)=\sigma(v)$ and $w\in\Min(C)$.

\item By Theorem \ref{lemma:minsupp}\ref{lemma:minsupp2} and Theorem~\ref{thm:mingens} and part~\ref{coro:special_gens1},
\begin{equation*}
    0:_D J=\langle v\in \Min(D)\rangle=\langle v\in \Min(C)\cap D\rangle=\langle v\in \Min(C)\st \sigma(v)\leq\sigma(D)\rangle. \qedhere
\end{equation*}
\end{enumerate}
\end{proof}

The next result relates the generalized weights of $C$ with those of its factors.

\begin{corollary}
Let $C=C_1\times\ldots\times C_\ell \subseteq R^n$ be a code and let $\sigma$ be a modular support. Assume that either $C\subseteq 0:_{R^n} J$ or $R$ is a PIR. Then
$$d_r(C)=\min\left\{\sum_{j=1}^\ell d_{r_j}(C_j) \st r_1+\ldots+r_\ell=r, \; r_j\in\{0,\ldots,\mu(C_j)\}\right\}$$ for all $r\in[M(C)]$.
\end{corollary}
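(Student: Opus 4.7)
The plan is to leverage the product decomposition of $\sigma$ afforded by modularity, reducing the computation of $d_r(C)$ to computations over the local factors. First I would invoke Theorem~\ref{thm:supp*} to write (after a harmless permutation of coordinates) $\sigma = \sigma_1 \times \cdots \times \sigma_\ell$, where each $\sigma_i:R_i^n \to \NN^{u_i}$ is a modular support. Since multiplication by the idempotent $e_i$ maps any submodule $D \subseteq C = C_1\times\cdots\times C_\ell$ into itself, every subcode decomposes as $D = D_1 \times \cdots \times D_\ell$ with $D_i = \pi_i(D) \subseteq C_i$. The product structure of $\sigma$ then yields
$$|\sigma(D)| = \sum_{i=1}^{\ell} |\sigma_i(D_i)|, \qquad M(D) = \sum_{i=1}^{\ell} \mu(D_i).$$
Moreover, under either hypothesis of the corollary, each $C_i$ satisfies the assumption of the earlier results: if $C \subseteq 0:_{R^n}J$ then $C_i \subseteq 0:_{R_i^n}\mathfrak{M}_i$, while if $R$ is a PIR then each $R_i$ is a chain ring.

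For the lower bound, I would use Lemma~\ref{lem:genwgt}\ref{lem:genwgt3} to rewrite $d_r(C) = \min\{|\sigma(D)| \st D \subseteq C, \, M(D) \geq r\}$. Given such a $D$ realizing the minimum, set $r_i := \mu(D_i)$. Then $\sum_i r_i = M(D) \geq r$, and $r_i \leq \mu(C_i)$: indeed, under either hypothesis, both $D_i$ and $C_i$ have the same number of generators as their socles (by Proposition~\ref{ann} when $R_i$ is a chain ring, trivially otherwise), and $0:_{D_i}\mathfrak{M}_i \subseteq 0:_{C_i}\mathfrak{M}_i$ is an inclusion of $R_i/\mathfrak{M}_i$-vector spaces, so by Theorem~\ref{matsumura} $\mu(D_i) \leq \mu(C_i)$. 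By Lemma~\ref{lem:genwgt}\ref{lem:genwgt3} applied to each $C_i$, $|\sigma_i(D_i)| \geq d_{r_i}(C_i)$ (including the case $r_i = 0$, with $d_0(C_i)=0$). Summing gives $|\sigma(D)| \geq \sum_i d_{r_i}(C_i)$ with $\sum_i r_i \geq r$. Finally, since $d_k$ is non-decreasing in $k$ by Lemma~\ref{lem:genwgt}\ref{lem:genwgt2a}, I can decrease some of the $r_i$ to land on $\sum r_i = r$ without increasing the sum, establishing the inequality $\geq$.

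For the upper bound, fix any admissible tuple $(r_1,\ldots,r_\ell)$ with $\sum_i r_i = r$ and $r_i \in \{0,\ldots,\mu(C_i)\}$. For each $i$ with $r_i \geq 1$, apply Corollary~\ref{min_subcodes} (valid for $C_i$ by the remark above) to produce $D_i \in \mathcal{S}_{r_i}(C_i)$ with $|\sigma_i(D_i)| = d_{r_i}(C_i)$; Corollary~\ref{cor:i=MD} then forces $\mu(D_i) = r_i$. For $r_i = 0$ take $D_i = 0$. Set $D := D_1 \times \cdots \times D_\ell \subseteq C$, so $M(D) = \sum_i \mu(D_i) = r$, whence $D \in \mathcal{S}_r(C)$ and
$$d_r(C) \leq |\sigma(D)| = \sum_{i=1}^{\ell} |\sigma_i(D_i)| = \sum_{i=1}^{\ell} d_{r_i}(C_i).$$
Minimizing over admissible tuples yields the reverse inequality, closing the proof. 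The main bookkeeping obstacle is verifying the auxiliary results apply uniformly to each $C_i$ under both versions of the blanket hypothesis, but this follows cleanly from the factor-wise reduction.
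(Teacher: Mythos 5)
Your proposal is correct and follows essentially the same route as the paper: decompose $\sigma=\sigma_1\times\cdots\times\sigma_\ell$ via Theorem~\ref{thm:supp*}, get the upper bound by taking products of factor-wise optimal subcodes, and get the lower bound by splitting an optimal subcode $D=D_1\times\cdots\times D_\ell$ into its factors. The only (harmless) variation is in the lower bound, where you work with arbitrary $D$ having $M(D)\ge r$ via Lemma~\ref{lem:genwgt}\ref{lem:genwgt3}, bound $r_i=\mu(D_i)\le\mu(C_i)$ through Proposition~\ref{ann}, and then shrink the $r_i$ using monotonicity, whereas the paper reduces to $D\in\mathcal{M}_r(C)$ and counts the minimal generators landing in each factor via Theorem~\ref{lemma:minsupp}\ref{lemma:minsupp1}; both arguments are sound.
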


\begin{proof}
By Theorem \ref{thm:supp*}, up to a permutation of the coordinates of $\N^u$ we can write $\sigma=\sigma_1\times\ldots\times\sigma_\ell$, where $\sigma_j:R_j^n\to\N^{u_j}$ is a modular support for $j\in[\ell]$ and $u=u_1+\ldots+u_\ell$. Fix $r\in[M(C)]$ and $r_1,\ldots,r_\ell$ such that $r=r_1+\ldots+r_\ell$ and $r_j\in\{0,\ldots,\mu(C_j)\}$. For $j\in[\ell]$, let $D_j\in\mathcal{M}_{r_j}(C_j)$ be such that $|\sigma_j(D_j)|=d_{r_j}(C_j)$. Let $D=D_1\times\ldots\times D_\ell$. Then $D\in\mathcal{M}_r(C)$ has $|\sigma(D)|=d_{r_1}(C_1)+\ldots+d_{r_\ell}(C_\ell)$, proving that
$$d_r(C)\leq\min\left\{\sum_{j=1}^\ell d_{r_j}(C_j) \st  r_1+\ldots+r_\ell=r,  \; r_j\in\{0,\ldots,\mu(C_j)\}\right\}.$$

To prove the reverse inequality, let $D=D_1\times\ldots\times D_\ell\in\mathcal{M}_r(C)$. By Theorem~\ref{lemma:minsupp}\ref{lemma:minsupp1} each of the minimal codewords of $C$, say $v_1,\ldots,v_r$, that minimally generate $D$ belongs to $0\times\ldots\times 0\times D_j\times 0\times\ldots\times 0$ for some $j\in[\ell]$. 
Let $$r_j=|\{v_1,\ldots,v_r\}\cap(0\times\ldots\times 0\times D_j\times 0\times\ldots\times 0)|.$$ Then $r=r_1+\ldots+r_\ell$ and $0\leq r_j\leq\mu(D_j)\leq\mu(C_j)$ for $j\in[\ell]$. Moreover, $$|\sigma(D)|=\sum_{j=1}^\ell|\sigma_j(D_j)|\geq\sum_{j=1}^\ell d_{r_j}(C_j),$$
where the last inequality follows from the fact that $D_j\in\mathcal{M}_{r_j}(C_j)$. By Corollary \ref{cor:i=MD}
\begin{align*}
d_r(C) &= \min\{|\sigma(D)|\st D\in\mathcal{M}_r(C)\} \\ &\geq \min\left\{\sum_{j=1}^\ell d_{r_j}(C_j)\st r=r_1+\ldots+r_\ell,\, r_j\in\{0,\ldots,\mu(C_j)\}\right\}.    \qedhere
\end{align*}
\end{proof}

\section{Codes, Supports, and Monomial Ideals}\label{sec:4}

In this section, we prove that the generalized weights of an $R$-linear code endowed with a modular support are determined by the graded Betti numbers of a monomial ideal associated to the code. We follow the notation of the previous sections.

\begin{notation} In the sequel we work in the multivariate polynomial ring $S=K[x_1,\ldots,x_u]$, where~$K$ is an arbitrary field. 
A \textbf{monomial} of $S$ is a polynomial of the form $x_1^{a_1} \cdots x_u^{a_u}$, where $(a_1,\ldots,a_u) \in \N^u$. In particular, we assume that monomials are monic. A {\bf monomial ideal} is an ideal which has a system of generators consisting of monomials.
\end{notation}

We fix a modular support $\sigma$ on $R^n$. The support $\sigma$ can be used to associate a monomial ideal to a subcode of $R^n$ as follows. 

\begin{defn} \label{def:IC}
For any $v \in R^n\setminus\{0\}$, let 
$$m_v:=x^{\sigma(v)}:=x_1^{\sigma(v)_1}\cdots x_u^{\sigma(v)_u}\in S.$$
For $0\neq C \subseteq R^n$, let 
$$I_C:=\left(m_v\st v\in C\setminus\{0\} \right)\subseteq S.$$
\end{defn}

Notice that not every monomial $m \in I_C$ corresponds to the support of a codeword $v \in C \setminus \{0\}$. 
However, every monomial $m \in I_C$ is of the form $m=m_v \cdot m'$ for some $v \in C \setminus \{0\}$ and some monomial $m' \in S$.

\begin{proposition} \label{prop:minimal}
Let $0\neq C \subseteq R^n$ be a code. Then $$I_C=I_{0:_C J}=\left(m_v \st v\in\Min(C)\right)$$ and $\left\{m_v \st v\in\Min(C)\right)\}$ is a minimal system of generators of $I_C$.
\end{proposition}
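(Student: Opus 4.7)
The plan is to establish the two equalities and the minimality claim by repeatedly translating divisibility of monomials in $S$ into the componentwise order on $\N^u$, and then invoking the structural results about minimal codewords proved earlier in the paper. Concretely, since monomials are monic, for $v,w \in R^n \setminus \{0\}$ we have $m_v \mid m_w$ in $S$ if and only if $\sigma(v) \le \sigma(w)$ in $\N^u$. This is the only dictionary I will need.

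First I would prove $I_C = (m_v : v \in \Min(C))$. The inclusion $\supseteq$ is immediate from $\Min(C) \subseteq C \setminus \{0\}$. For $\subseteq$, take any $w \in C \setminus \{0\}$; I need to exhibit some $v \in \Min(C)$ with $m_v \mid m_w$, i.e.\ $\sigma(v) \le \sigma(w)$. Since $\N^u$ with the componentwise order is well-founded (every nonempty subset has a minimal element, e.g.\ by Dickson's lemma), the set $\{\sigma(u) : u \in C \setminus \{0\},\ \sigma(u) \le \sigma(w)\}$ contains an element minimal in $\{\sigma(u) : u \in C \setminus \{0\}\}$, and any $v$ realizing it lies in $\Min(C)$ by definition. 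Hence $m_w$ is a multiple of such an $m_v$, proving the inclusion.

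Next I would derive $I_C = I_{0:_C J}$. The inclusion $I_{0:_C J} \subseteq I_C$ is clear because $0:_C J \subseteq C$. For the reverse, I apply Theorem~\ref{lemma:minsupp}\ref{lemma:minsupp2}, which gives $\Min(C) = \Min(0:_C J) \subseteq 0:_C J$. Then the generators $\{m_v : v \in \Min(C)\}$ of $I_C$ all lie in $I_{0:_C J}$, so $I_C \subseteq I_{0:_C J}$.

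Finally, for minimality of the generating set $\{m_v : v \in \Min(C)\}$: suppose $v, w \in \Min(C)$ and $m_v \mid m_w$. Then $\sigma(v) \le \sigma(w)$, and the minimality of $\sigma(w)$ among supports of nonzero codewords forces $\sigma(v) = \sigma(w)$, hence $m_v = m_w$. Thus no element of $\{m_v : v \in \Min(C)\}$ properly divides another, which is precisely the criterion for a set of monomials to be the (unique) minimal monomial generating set of a monomial ideal. I do not anticipate any serious obstacle; the only subtlety worth being explicit about is the well-foundedness argument used to pass from an arbitrary codeword to a minimal one below it, but this is standard and self-contained.
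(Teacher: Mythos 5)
Your proof is correct and is essentially the argument the paper intends: the paper states this proposition without proof as an immediate consequence of the divisibility--support dictionary and Theorem~\ref{lemma:minsupp}\ref{lemma:minsupp2}, and you supply exactly those steps (dominating any $\sigma(w)$ by a minimal support via finiteness, $\Min(C)=\Min(0:_C J)$ for the middle equality, and incomparability of distinct minimal supports for minimality of the generating set). The only cosmetic remark is that Dickson's lemma is overkill here, since $C$ is finite and the existence of a minimal support below $\sigma(w)$ is immediate.
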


The next theorem is the main result of this paper. We prove that the graded Betti numbers of the monomial ideal associated to a code determine its generalized weights.

\begin{theorem} \label{mainth}
Let $\sigma$ be a modular support and let $C\subseteq R^n$ be a code.
Assume that either $C\subseteq 0:_{R^n} J$ or $R$ is a PIR.
Let $I_C\subseteq S$ be the monomial ideal associated to $C$ and let $r\in[M(C)]$. Then $M(C)$ is the projective dimension of $S/I_C$ and $d_r(C)$ is the minimum shift (i.e., the minimum degree of a nonzero element) in the $r$-th free module in a minimal free resolution of $S/I_C$.
In particular, the graded Betti numbers of $S/I_C$ determine $M(C)$ and the generalized weights of $C$.
\end{theorem}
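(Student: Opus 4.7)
The plan is to prove both statements simultaneously by computing the multigraded Betti numbers of $S/I_C$ and translating them back into subcode data. By Proposition~\ref{prop:minimal}, the minimal monomial generators of $I_C$ are in bijection with the equivalence classes of minimal codewords of $C$ under the unit action (which preserves $\sigma$ by \ref{prsupp1}). Since $I_C$ is monomial, its minimal graded free resolution is $\N^u$-multigraded, and every multi-degree $m$ supporting a nonzero Betti number must be an lcm of minimal generators, hence of the form $m = x^{\sigma(D')}$ for some subcode $D' \subseteq 0:_C J$ generated by minimal codewords of $C$. I would use the standard upper Koszul / Taylor computation
\[
\beta_{r,m}(S/I_C) = \dim_K \tilde H_{r-2}(\Delta_{<m}; K),
\]
where $\Delta_{<m}$ is the simplicial complex on $\{m_v : v \in \Min(C)/\!\sim\}$ whose faces are those subsets with lcm strictly dividing $m$.

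For the inequality that the $r$-th minimum shift is at most $d_r(C)$, I would apply Theorem~\ref{coro:special_gens4} to obtain minimal codewords $v_1,\ldots,v_r$ generating a subcode $D \subseteq 0:_C J$ with $|\sigma(D)| = d_r(C)$ and the pivot property: for each $i$ some coordinate $k_i$ satisfies $\sigma(v_i)_{k_i} > 0$ and $\sigma(v_h)_{k_i} = 0$ for $h \neq i$. Set $m = x^{\sigma(D)}$. The pivot condition forces every proper subset of $\{m_{v_1},\ldots,m_{v_r}\}$ to have lcm strictly dividing $m$, while $\{m_{v_1},\ldots,m_{v_r}\}$ itself has lcm exactly $m$; this yields an $(r-2)$-sphere inside $\Delta_{<m}$. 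Theorem~\ref{coro:special_gens}\ref{coro:special_gens1} confines the vertex set of $\Delta_{<m}$ to minimal codewords of $D$, and Corollary~\ref{cor:i=MD} forbids any competing $r$-subset from realising $\sigma(D)$ from fewer generators, so one can verify that the sphere is not a boundary in $\Delta_{<m}$ and conclude $\beta_{r,m}(S/I_C) \neq 0$.

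For the reverse direction, assume $\beta_{r,m}(S/I_C) \neq 0$ and set $D' = \langle v \in \Min(C) : m_v \mid m\rangle \subseteq 0:_C J$, so $m = x^{\sigma(D')}$. The nontrivial $(r-2)$-cycle in $\Delta_{<m}$ would, via Proposition~\ref{special_gens} applied inside $D' = 0:_{D'} J$ (legitimate because Theorem~\ref{thm:mingens} presents $D'$ as generated by its minimal codewords), extract a pivoting family of at least $r$ minimal codewords of $D'$; Theorem~\ref{thm:mingens} combined with Theorem~\ref{lem:minspecial} then forces $M(D') \geq r$. Monotonicity (Theorem~\ref{monoton}) now yields $d_r(C) \leq d_{M(D')}(C) \leq |\sigma(D')| = \deg(m)$, so the $r$-th minimum shift is at least $d_r(C)$. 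Taking $r > M(C)$ in the same argument would produce a subcode with $M(D') > M(C)$, contradicting Proposition~\ref{lem:equal}; hence $\beta_r(S/I_C) = 0$ for $r > M(C)$ and $\projdim(S/I_C) = M(C)$.

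The main obstacle is the homological-to-algebraic translation in the last paragraph, namely extracting $r$ pivoting minimal codewords of $D'$ from a nonzero class in $\tilde H_{r-2}(\Delta_{<m})$. The cleanest route I can see is to prove by induction on $M(D')$ the auxiliary claim that $\projdim(S/I_{D'}) = M(D')$ with the top graded Betti number concentrated in multi-degree $x^{\sigma(D')}$: the pivot decomposition of Proposition~\ref{special_gens} peels off one minimal codeword, reduces the problem to a smaller subcode, and a Koszul / short-exact-sequence argument relates the Betti numbers across the step. In the classical $\F_q$-linear case with the Hamming support this obstacle is absorbed by the Cohen--Macaulayness of the Stanley--Reisner ring of the matroid of the code, as in~\cite{J2013}; for general modular supports the pivot structure supplied by modularity plays the analogous role.
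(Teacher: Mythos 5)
Your framework is genuinely different from the paper's (the paper never invokes the multigraded formula $\beta_{r,m}(S/I_C)=\dim_K\tilde H_{r-2}(\Delta_{<m};K)$; it works directly with the Taylor resolution and analyzes which summands can cancel), and your reduction of the theorem to two multigraded claims is reasonable — but neither claim is actually proved, and they are exactly the hard points. For the inequality ``$r$-th minimal shift $\le d_r(C)$'' you assert that the pivot sphere is not a boundary in $\Delta_{<m}$, citing Theorem~\ref{coro:special_gens}\ref{coro:special_gens1} and Corollary~\ref{cor:i=MD}. Those results only identify the vertex set of $\Delta_{<m}$ with $\Min(D)$ and exclude realizations of $\sigma(D)$ by subcodes with fewer minimal generators; they give no control over $(r-1)$-chains of $\Delta_{<m}$ supported on the other vertices, so the non-boundaryness of your specific cycle (equivalently $\beta_{r,x^{\sigma(D)}}\neq 0$) does not follow. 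In the Hamming case this nonvanishing is exactly where \cite{J2013} uses shellability/Cohen--Macaulayness of matroid complexes, and replacing that input for arbitrary modular supports is the crux; the paper does it by a degree argument on the $(r-1)$-st syzygy module (if every summand $S(-d_r(C))$ at homological degree $r$ cancelled, the surviving shifts would all come from subcodes in $\mathcal{M}_r(C)$ and hence exceed $d_r(C)$, forcing $\mathrm{Syz}_{r-1}(I_C)$ to vanish in degree $d_r(C)$, a contradiction). Your proposal needs an argument of comparable substance here, and ``one can verify'' is not it.

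For the reverse inequality you yourself flag the extraction of $r$ pivoting minimal codewords from a nonzero class in $\tilde H_{r-2}(\Delta_{<m})$ (equivalently, $M(D')\ge r$ whenever $\beta_{r,m}\neq 0$ and $D'=\langle v\in\Min(C)\st m_v\mid m\rangle$) as the main obstacle, and the induction you sketch does not go through as described: peeling off one minimal codeword, $D'=\langle v_1\rangle + D''$, does not give $I_{D'}=I_{D''}+(m_{v_1})$, because adjoining $v_1$ creates new minimal codewords. Already for the even-weight code in $\F_2^4$, with $v_1=(1,1,0,0)$ and $D''=\langle(1,0,1,0),(1,0,0,1)\rangle$, the ideal $I_{D''}+(m_{v_1})$ has four minimal generators while $I_{D'}$ has six; so the ``Koszul / short-exact-sequence'' step that is supposed to relate the Betti numbers across the peeling is essentially as hard as the statement being proved. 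Note also that this direction is much cheaper along the paper's route: since a minimal free resolution is obtained from the Taylor resolution by cancellations, every shift in homological degree $r$ equals $\deg m_A=|\sigma(C_A)|$ for some $C_A\in\mathcal{M}_r(C)$, hence is at least $d_r(C)$ by Corollary~\ref{min_subcodes}, with no homological-to-algebraic translation needed. Since both inequalities and both halves of the projective-dimension statement in your write-up rest on these two unestablished claims, the proposal as it stands has genuine gaps.
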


\begin{proof}
Let $I_C=(m_1,\ldots,m_t)$ where $m_1,\ldots,m_t$ are a minimal system of monomial generators of $I_C$. By Theorem~\ref{lemma:minsupp}\ref{lemma:minsupp3} $\langle v\rangle=\langle w\rangle$ if and only if $\sigma(v)=\sigma(w)$ for $v,w\in\Min(C)$. By Theorem~\ref{thm:mingens}, $0:_C J$ has a minimal system of generators consisting of minimal codewords, hence $t\geq M(0:_C J)=M(C)$ by Proposition~\ref{ann}. For each $i\in [t]$ let $v_i\in\Min(C)$ such that $m_{v_i}=m_i$.
For any $A\subseteq[t]$ let $$m_A=\lcm\{m_i\st i\in A\}=x^{\sigma(\langle v_i\st i\in A\rangle)}.$$
A graded free resolution of $S/I_C$ is given by the Taylor complex~\cite[Section~7.1]{MITD}
$$0 \longrightarrow \mathbb{F}_t \stackrel{f_t}{\longrightarrow} \mathbb{F}_{t-1} \stackrel{f_{t-1}}{\longrightarrow}\ldots \stackrel{f_2}{\longrightarrow} \mathbb{F}_1 \stackrel{f_1}{\longrightarrow} S\longrightarrow S/I_C \longrightarrow 0,$$
where $$\mathbb{F}_r=\bigoplus_{A\subseteq [t],\; |A|=r} S(-\deg(m_A))$$ with basis $\{e_A\st A\subseteq[t],\; |A|=r\}$ and 
$$f_r(e_A)=\sum_{k=1}^r (-1)^{k+1}\frac{m_A}{m_{A\setminus\{i_k\}}}e_{A\setminus\{i_k\}}\; \mbox{ for }\; A=\{i_1,\ldots,i_r\}\subseteq [t].$$
The Taylor resolution is in general not minimal: A cancellation occurs between the modules $S(-\deg(m_A))\subseteq\mathbb{F}_r$ and $S(-\deg(m_B))\subseteq\mathbb{F}_{r-1}$ if and only if $B=A\setminus\{k\}$ for some $k\in A$ and $m_A=m_B$. Notice that $m_A=m_B$ if and only if $m_{k}\mid\lcm\{m_i\st i\in B\}$, that is, if and only if $\sigma(v_{k})\leq\vee_{i\in B}\sigma(v_i)$.
 
Let
\begin{equation}\label{mfr}
0 \longrightarrow \mathbb{G}_{p} \stackrel{g_{p}}{\longrightarrow} \mathbb{G}_{p-1} \stackrel{g_{p-1}}{\longrightarrow}\ldots \stackrel{g_2}{\longrightarrow} \mathbb{G}_1 \stackrel{g_1}{\longrightarrow} S \longrightarrow S/I_C \longrightarrow 0
\end{equation}
be a minimal free resolution obtained from the Taylor resolution after making all the possible cancellations. In particular, $p$ is the projective dimension of $S/I_C$. 

For $A\subseteq [t]$ with $|A|=r$, let $$C_A=\langle v_i\st i\in A\rangle\subseteq C.$$ 
Notice that, if $v_k\in\langle v_i\st i\in A\setminus\{k\}\rangle$ for some $k\in A$, then $S(-\deg(m_A))$ cancels with $S(-\deg(m_{A\setminus\{k\}}))$ while passing from the Taylor resolution to resolution~(\ref{mfr}). Therefore, the direct summands $S(-\deg(m_A))$ appearing in~(\ref{mfr}) come from subcodes $C_A\in\mathcal{M}_r(C)$. 
Since $\mathcal{M}_i(C)=0$ for $i>M(C)$, then $p\leq M(C)$.

For $r\in[M(C)]$, let $b_r$ be the smallest shift appearing in the $r$-th module of a minimal graded free resolution of $S/I_C$, i.e. $b_r$ is the smallest degree of a nonzero element of~$\mathbb{G}_r$.
In particular, $b_1$ is the smallest degree of a minimal generator of $I_C$, hence $$b_1=d_1(C).$$

We claim that $b_r=d_r(C)$ for all $r\in[M(C)]$. 
Fix a value of $r$ and choose $A\subseteq[t]$ with $|A|=r$ such that $b_r=\deg(m_A)$. 
Then $\sigma(C_A)=\vee_{i\in A}\sigma(v_i)=b_r$. Since $C_A\in\mathcal{M}_r(C)$, then 
\begin{equation} \label{keyy}
    d_r(C)\le b_r
\end{equation} by Theorem~\ref{min_subcodes}.

To prove the reverse inequality of \eqref{keyy}, we start by observing that 
$$d_r(C)=\min\{|\sigma(C_A)| \ : \ |A|=r,\ C_A\in\mathcal{M}_r(C) \}$$ 
by Theorem~\ref{min_subcodes} and Theorem~\ref{lemma:minsupp}\ref{lemma:minsupp3}. Hence $d_r(C)$ is one of the shifts appearing in the $r$-th free module of the Taylor resolution of $S/I_C$. 
In order to complete the proof, it suffices to prove the following
\begin{claim}\label{claim}
$\mathbb{G}_r$ contains at least one direct summand $S(-d_r(C))$. 
\end{claim}

To prove the claim, suppose that we have made all the possible cancellations until the $(r-1)$-st step of the resolution. Therefore we have a free resolution of the form $$0 \longrightarrow \mathbb{F}_t \stackrel{f_t}{\longrightarrow}\ldots \mathbb{F}_{r+1}
\stackrel{h_{r+1}}{\longrightarrow}
\mathbb{H}_{r} \stackrel{h_{r}}{\longrightarrow} \mathbb{G}_{r-1} \stackrel{g_{r-1}}{\longrightarrow}\ldots
\stackrel{g_2}{\longrightarrow} \mathbb{G}_1 \stackrel{g_1}{\longrightarrow} S \longrightarrow S/I_C \longrightarrow 0.$$
By Theorem~\ref{coro:special_gens4} and Corollary \ref{cor:i=MD}, $\mathbb{H}_{r}$ contains a direct summand $S(-d_r(C))$. Consider now the possible cancellations between $\mathbb{F}_{r+1}$ and $\mathbb{H}_r$. The map \begin{equation}\label{syzr-1}
\mathbb{H}_{r}\stackrel{h_{r}}{\longrightarrow} \mathrm{Syz}_{r-1}(I_C)
\end{equation}
is surjective and corresponds to a choice of generators of
the $(r-1)$-st syzygy module $\mathrm{Syz}_{r-1}(I_C)$ of $I_C$. 
A cancellation between $\mathbb{F}_{r+1}$ and $\mathbb{H}_{r}$ comes from an element in the kernel of~$h_{r}$ which has an invertible entry, hence it corresponds to eliminating a non-minimal generator of $\mathrm{Syz}_{r-1}(I_C)$.
Claim~\ref{claim} amounts to showing that, among all direct summands $S(-d_r(C))$ of~$\mathbb{H}_{r}$,  there is at least one which does not cancel with a direct summand of $\mathbb{F}_{r+1}$. If they all cancel, however, $\mathrm{Syz}_{r-1}(I_C)$ has no elements in degree $d_r(C)$. However this is not possible, since the map in~(\ref{syzr-1}) is surjective and no component of $h_{r}$ is the zero map, hence the image of $h_{r}$ contains a nonzero element of degree $d_r(C)$. In particular, $\mathbb{G}_p=S(-|\sigma(0:_C J)|)^s$ for some $s\geq 1$ and $p=M(C)$.
\end{proof}

Since minimal free resolutions of monomial ideals are easy to compute, Theorem \ref{mainth} gives a way to efficiently compute the generalized weights of a code, if one knows the supports of its minimal codewords. Notice however that the problem of computing the supports of the minimal codewords (or just the minimum distance) of a code is NP-hard, even for the special case of binary linear codes endowed with the Hamming distance, see \cite{vardy97}.

We conclude the section with some examples.

\begin{example}
Let $R=\Z_6$ and consider the support $\omega: \Z_6 \to \N^2$ defined as follows:
$$\omega(0)=(0,0), \quad
\omega(1)=(2,1), \quad
\omega(2) = (0,1), \quad 
\omega(3) = (2,0), \quad
\omega(4)=(0,1), \quad 
\omega(5)=(2,1).$$
It can be checked that $\omega$ is modular.
We extend $\omega$ componentwise to a function $\omega: \Z_6^3 \to \N^6$ and compose it with the $\Z_6$-linear map defined by the invertible matrix
$$A=\begin{pmatrix}
 3 & 4 & 1 \\
 5 & 3 & 3 \\
 2 & 4 & 5 
\end{pmatrix}.
$$
In other words, we define $\sigma: \Z_6^3 \to \N^6$ by $\sigma(v) =(\omega((Av)_1),\omega((Av)_2),\omega((Av)_3))$ for all $v \in \Z_6^3$. Then $\sigma$ is modular by Proposition~\ref{newPP}\ref{newPP7}. Let $C=\langle (3,1,2),(2,4,3)\rangle\subseteq \Z_6^3$. Then $C$ has the following six minimal codewords and supports:
 \begin{eqnarray*}
     (3,3,3)  & & (0,0,2,0,2,0), \\
     (0,4,2)   & & (0,0,0,0,0,1), \\
     (2,0,4)    & & (0,1,0,1,0,0), \\
     (3,3,0)    & & (2,0,0,0,0,0), \\
     (0,2,4)   & &  (0,0,0,0,0,1), \\
     (4,0,2)  & & (0,1,0,1,0,0).
\end{eqnarray*}
In particular, the ideal associated to $C$ is $$I_C=(x_1^2,x_2x_4,x_3^2x_5^2,x_6).$$
Notice that the generators of $I_C$ form a regular sequence, therefore a minimal free resolution of $S/I_C$ is given by the Koszul complex. In other words, there is no cancellation in the Taylor complex
$$0\longrightarrow S(-9)\longrightarrow \begin{array}{c} S(-8)\\ \oplus \\ S(-7)^2 \\ \oplus \\ S(-5) \end{array}\longrightarrow \begin{array}{c} S(-6)^2\\ \oplus \\ S(-5)^2 \\ \oplus \\ S(-4)\\ \oplus \\ S(-3)^2 \end{array}
\longrightarrow \begin{array}{c} S(-4)\\ \oplus \\ S(-2)^2 \\ \oplus \\ S(-1) \end{array} \longrightarrow S\longrightarrow S/I_C\longrightarrow 0.$$
Therefore one has $M(C)=4$, $d_1(C)=1$, $d_2(C)=3$, $d_3(C)=5$, and $d_4(C)=9$. Looking at the maps in the minimal free resolutions, one sees that
\begin{align*}
 d_1(C)&=|\sigma(\langle(0,2,4)\rangle)|=|(0,0,0,0,0,1)|=1, \\ d_2(C)&=|\sigma(\langle(0,2,4),(3,3,0)\rangle)|=|\sigma(\langle(0,2,4),(2,0,4)\rangle)|=|(2,0,0,0,0,1)|=|(0,1,0,1,0,1)|=3,\\
 d_3(C)&=|\sigma(\langle(0,2,4),(3,3,0),(2,0,4)\rangle)|=|(2,1,0,1,0,1)|=5, \\
 d_4(C)&=|\sigma(\langle(0,2,4),(3,3,0),(2,0,4),(3,3,3)\rangle)|=|(2,1,2,1,2,1)|=9.
\end{align*}
\end{example}

In the next example, we show that all the cancellations  discussed in the proof of Theorem~\ref{mainth} can actually occur. 

\begin{example}
Let $\sigma$ be the Hamming weight and let $C\subseteq\FF_2^4$ be the even weight code, i.e., the code consisting of all vectors of even weight. The minimal codewords of $C$ are all vectors of weight $2$ of $\FF_2^4$,
therefore the associated ideal $I_C$ is the ideal generated by all squarefree monomials of degree $2$ in $4$ variables. A minimal free resolution of $S/I_C$ is well-known and has the form
$$0\longrightarrow S(-4)^2\longrightarrow S(-3)^8\longrightarrow S(-2)^6\longrightarrow S\longrightarrow S/I_C\longrightarrow 0.$$
The Taylor complex associated to $S/I_C$ is as follows
{\footnotesize $$0\longrightarrow S(-4)\longrightarrow S(-4)^6\longrightarrow S(-4)^{15}\longrightarrow \begin{array}{c} S(-4)^{16} \\ \oplus \\ S(-3)^4\end{array}\longrightarrow \begin{array}{c} S(-4)^3 \\ \oplus \\ S(-3)^{12}\end{array}\longrightarrow S(-2)^6\longrightarrow S\longrightarrow S/I_C\longrightarrow 0.$$}
We use the notation of the proof of Theorem~\ref{mainth} when referring to the free modules in the resolutions above. 
By comparing the two free resolutions, one sees that the modules~$\FF_4$, $\FF_5$ and $\FF_6$ in the Taylor complex completely cancel. Moreover, the free summand~$S(-3)^4$ of~$\FF_3$, corresponding to the minimal shift in its homological degree, cancels, as well as the direct summand $S(-4)^{14}$.  Finally, the direct summands $S(-3)^4$ and $S(-4)^3$ of $\FF_2$ cancel. 
\end{example}

It is easy to find examples of functions that are not supports according to Definition~\ref{defsupp}, and for which the result of Theorem~\ref{mainth} does not hold.

\begin{example}
Let $R=\mathbb{Z}_4$. Let $\wL$ denote the Lee weight and let $\sL$ be its coordinatewise extension to $\mathbb{Z}_4^3$. In Example~\ref{rem:lee} we observed that $\sL$ is not a support according to Definition~\ref{defsupp}. 
Consider the code $C=\langle (1,1,0), (3,2,1)\rangle\subseteq \mathbb{Z}_4^3$. Then $$\Min(C)=\{(1,1,0),(3,3,0),(0,1,3),(0,3,1),(1,0,1),(3,0,3)\}$$
and $I_C=(xy,xz,yz)\subseteq S=K[x,y,z]$. The graded Betti numbers of a minimal free resolution of $S/I_C$ are 
$$0\longrightarrow S(-3)^2\longrightarrow S(-2)^3\longrightarrow S \longrightarrow S/I_C\longrightarrow 0.$$ In particular, $b_1=2$ and $b_2=3$. 
One can check that $d_1(C)=4$ and $d_2(C)=6$.
\end{example}

Notice that Theorem~\ref{mainth} does not hold in general for supports that are not modular, even for linear codes over fields.

\begin{example}
Let $C=\mathbb{F}_2^2$ with the support $\sigma$ defined in Example~\ref{expato}. The associated monomial ideal is $I_{\mathbb{F}_2^2}=(y)\subseteq K[x,y]$, whose minimal free resolution is $$0\longrightarrow S(-1)\longrightarrow S\longrightarrow S/(y)\longrightarrow 0.$$ We have $\mu(\mathbb{F}_2^2)=2$, while the projective dimension of $S/(y)$ is $1$. In particular, the second free module in a minimal free resolution of $S/(y)$ is $0$ and it does not determine $d_2(\mathbb{F}_2^2)=|\sigma(\mathbb{F}_2^2)|=2$.
\end{example}

\section{The Hamming Support}\label{sec:5}

In this section we let $R=\F_q$ and make some observations on the Hamming support $\sH:\F_q^n \to \{0,1\}^n$, see Example~\ref{exasupp}.
For $v \in \F_q^n$, we identify $\sH(v)$ with a subset of $[n]$. 

Several authors study the matroid $M_C$ associated with the parity-check matrix of a code $C \subseteq \F_q^n$, see e.g. \cite{barg,J2013}. The code $C$, the matroid $M_C$, and the corresponding Stanley-Reisner ideal~$I_C$ relate as follows:
\begin{enumerate}[label={(\arabic*)}]
\item Let $A\subseteq [n]$. Then $A\in M_C$ if and only if there is no $v \in C \setminus \{0\}$ with $\sH(v) \subseteq A$.
\item The squarefree monomials in $I_C$ are exactly those of the form $x^A$, where $A\notin M_C$.
\item The circuits of $M_C$ are the supports of the minimal codewords of $C$.
\item The minimal monomial generators of $I_C$  are in one-to-one correspondence with the circuits of the matroid $M_C$.
\end{enumerate}
Therefore,
\begin{equation} \label{caratt1}
M_C=2^{[n]} \setminus \{A\subseteq [n] \st A\supseteq \sigma(v) \mbox{ for some } v \in C \setminus \{0\} \}=2^{[n]} \setminus \{A\subseteq [n] \st x^A\in I_C \}.
\end{equation}

The resolutions of the ideals associated to various classes of codes (most notably, MDS codes, one/two weight codes, Reed-Muller codes) have been studied in
\cite{ghorpade2020pure,ghorpade2021purity}.

The main result of \cite{J2013} is Theorem~2, which shows that the generalized Hamming weights of a code $C \subseteq \F_q^n$ are determined by the graded Betti numbers of $I_C$. 
Our Theorem \ref{mainth} generalizes \cite[Theorem~2]{J2013} with a different and stand-alone proof strategy that does not rely on any matroid theory results. 

A known fact about codes $C \subseteq \F_q^n$ endowed with the Hamming metric is that they are generated by their minimal codewords~\cite[Lemma 2.1.4]{ashikhmin1998minimal}.  This is a special case of Theorem~\ref{thm:mingens} in this paper.

\begin{corollary} \label{coro:miniH}
Every code $0\neq C \subseteq \F_q^n$ is generated by its minimal codewords.
\end{corollary}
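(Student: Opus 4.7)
The plan is to derive this corollary as an immediate specialization of Theorem~\ref{thm:mingens} to the case $R = \F_q$ equipped with the Hamming support $\sH$. Both hypotheses of that theorem—modularity of the support and the containment $C \subseteq 0:_{R^n} J$—will turn out to hold automatically in this setting, so there is essentially no work left to do beyond unpacking the definitions.

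First I would observe that $\sH$ is modular: this is already recorded in the paper (it is the chain ring support associated to the trivial chain $0 \subsetneq \F_q$, which was noted to be modular, and the coordinatewise extension preserves modularity by Proposition~\ref{newPP}\ref{newPP2}). Next, since $R = \F_q$ is a field, its unique maximal ideal is $(0)$, so the Jacobson radical is $J = 0$. Consequently $0:_{R^n} J = 0:_{R^n} 0 = \F_q^n$, which means that every code $C \subseteq \F_q^n$ trivially satisfies the containment $C \subseteq 0:_{R^n} J$ required by Theorem~\ref{thm:mingens}.

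With both hypotheses in place, the ``in particular'' clause of Theorem~\ref{thm:mingens} applies and yields that $C$ admits a minimal system of generators consisting of codewords that are minimal in $C$. This of course implies that $C$ is generated by $\Min(C)$, which is exactly the statement of the corollary. I do not foresee any real obstacle: the content of the corollary is entirely contained in Theorem~\ref{thm:mingens}, and the only thing to verify is that the two structural hypotheses collapse to triviality when the alphabet is a finite field.
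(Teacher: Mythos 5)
Your proposal is correct and is exactly the paper's intended argument: the paper presents Corollary~\ref{coro:miniH} precisely as the specialization of Theorem~\ref{thm:mingens} to $R=\F_q$ with the (modular) Hamming support, where $J=0$ makes the hypothesis $C\subseteq 0:_{R^n}J$ automatic.
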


It is natural to ask whether a code $C \subseteq \F_q^n$ is also generated by its codewords of maximal support. The answer to this question is negative in general, as the following simple example shows.

\begin{example}
Let $C=\FF_2^2$. The only codeword of maximal support is $(1,1)$ and $\langle(1,1)\rangle\subsetneq C$. 
\end{example}

Although codes $C \subseteq \F_q^n$ are in general not generated by their maximal codewords, for $q$ sufficiently large most codes are. 

\begin{proposition} \label{prop:estimate}
Let $1 \le k \le n$. The proportion of codes in 
$\F_q^n$, within the $k$-dimensional ones, generated by their maximal codewords is at least
$$\frac{(q-1)^n \, (q^k-1) - (q^n-1)q^{k-1}}{(q^n-1)(q^k-q^{k-1}-1)},$$
and the above fraction approaches 1 as $q \to +\infty$.
\end{proposition}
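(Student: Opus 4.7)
The plan is a double counting argument on pairs $(v, C)$, where $v \in \F_q^n$ has full Hamming support (i.e., $\sH(v) = [n]$) and $v$ lies in the $k$-dimensional code $C$. There are $(q-1)^n$ full-support vectors $v$, and each lies in exactly $\qbin{n-1}{k-1}{q}$ codes of dimension $k$. Letting $A_C := |\{v \in C \st \sH(v) = [n]\}|$, this gives the identity
$$\sum_C A_C \;=\; (q-1)^n \qbin{n-1}{k-1}{q},$$
where the sum ranges over all $k$-dimensional subcodes $C \subseteq \F_q^n$.

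The key observation is a sufficient condition for $C$ to be generated by its maximal codewords: if $A_C > q^{k-1}$, then this holds. Indeed, $A_C \ge 1$ forces $\sH(C) = [n]$, so the maximal codewords of $C$ coincide with its $A_C$ full-support vectors; if these failed to span $C$, they would lie in a proper subspace $D \subsetneq C$ of dimension at most $k-1$, hence among at most $q^{k-1} - 1$ nonzero vectors, contradicting $A_C > q^{k-1}$. Let $\mathcal{G} := \{C : A_C > q^{k-1}\}$; then $\mathcal{G}$ is contained in the set of $k$-dimensional codes generated by their maximal codewords, and it suffices to lower-bound $|\mathcal{G}|/\qbin{n}{k}{q}$.

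To do this, I would split the counting identity according to whether $C \in \mathcal{G}$. Using the trivial upper bound $A_C \le q^k - 1$ for $C \in \mathcal{G}$, and the definitional bound $A_C \le q^{k-1}$ for $C \notin \mathcal{G}$, the identity yields
$$(q-1)^n \qbin{n-1}{k-1}{q} \;\le\; (q^k - 1)|\mathcal{G}| + q^{k-1}\bigl(\qbin{n}{k}{q} - |\mathcal{G}|\bigr).$$
Solving for $|\mathcal{G}|$, dividing through by $\qbin{n}{k}{q}$, and invoking the identity $\qbin{n-1}{k-1}{q} = \qbin{n}{k}{q}(q^k-1)/(q^n-1)$ produces precisely the stated lower bound on the proportion. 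The asymptotic claim then follows by expanding numerator and denominator as polynomials in $q$, both of which have leading coefficient $1$ on $q^{n+k}$, so that the ratio tends to $1$. The main subtlety I anticipate is the precise choice of the two upper bounds in the splitting step so that the denominator collapses to exactly $q^k - q^{k-1} - 1$ as stated; one could instead use the sharper estimate $A_C \le q^k - q^{k-1}$ valid on $\mathcal{G}$, but this would yield a bound of a different, less compact form.
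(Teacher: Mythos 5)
Your proposal is correct and follows essentially the same argument as the paper: a double count of $\sum_C |C \cap A_n|$, the threshold $|C\cap A_n| > q^{k-1}$ (whose sufficiency the paper leaves implicit and you justify explicitly), the same two upper bounds $q^k-1$ and $q^{k-1}$ in the split, and the identity $\qbin{n-1}{k-1}{q} = \qbin{n}{k}{q}(q^k-1)/(q^n-1)$, yielding exactly the stated bound.
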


\begin{proof}
We obtain the result follows from lower bounding the number of $k$-dimensional codes 
with $|C \cap A_n| > q^{k-1}$, where $A_n$ is the set of vectors in~$\F_q^n$ of weight $n$. To obtain such a 
lower bound, denote by $\mF$ the set of $k$-dimensional codes in~$\F_q^n$ and consider the sum
$$S:= \sum_{C \in \mF} |C \cap A_n|.$$
We have
$$|\mF|=\qbin{n}{k}{q}.$$
We write $\mF$ as a disjoint union $\mF=\mF' \cup \mF''$, where $\mF'$ is the set of codes $C \in \mF$ 
with $|C \cap A_n| > q^{k-1}$ and $\mF''$ the set of codes  $C \in \mF$ 
with $|C \cap A_n| \le q^{k-1}$. We then have
\begin{eqnarray}
S &=& \sum_{C \in \mF'} |C \cap A_n| + \sum_{C \in \mF''} |C \cap A_n| \nonumber \\ 
&\le& |\mF'| \, (q^k-1) + (|\mF|- |\mF'|) \, q^{k-1} \nonumber  \\
&=& |\mF'| \, (q^k-q^{k-1}-1)+ |\mF| \, q^{k-1}. \label{eq:S1}
\end{eqnarray}
On the other hand, we can rewrite $S$ as
\begin{equation} \label{eq:S2}
S= \sum_{v \in A_n} |\{C \in \mF \st C \ni v\}| = |A_n| \, \qbin{n-1}{k-1}{q}.
\end{equation}
Combining \eqref{eq:S1} with \eqref{eq:S2}, and using the definition of 
$q$-binomial coefficient, we obtain
$$|\mF'|/|\mF| \ge |A_n| \, \frac{q^k-1}{(q^n-1)(q^k-q^{k-1}-1)} - \frac{q^{k-1}}{q^k-q^{k-1}-1},$$
which is the desired estimate, since $|A_n| = (q-1)^n$.
\end{proof}

\bigskip

\bigskip

\bibliographystyle{amsplain}
\bibliography{idealsbib}

\end{document}